\newtheorem{theorem}{Theorem}
\newtheorem{lemma}[theorem]{Lemma}
\newtheorem{observation}[theorem]{Observation}
\newtheorem{corollary}[theorem]{Corollary}
\theoremstyle{definition}
\newtheorem{example}{Example}
\newcommand{\paren}[1]{\left( #1 \right)}
\newcommand{\set}[1]{\left\{ #1 \right\}}
\newcommand{\ceil}[1]{\left\lceil #1 \right\rceil}
\newcommand{\floor}[1]{\left\lfloor #1 \right\rfloor}
\newcommand{\half}{\frac{1}{2}}
\newcommand{\multipl}{\omega}
\newcommand{\BC}{\text{BC}}
\newcommand{\calB}{\mathcal{B}}
\newcommand{\calF}{\mathcal{F}}
\title{\textbf{Degree Realization by Bipartite Cactus Graphs}%
\thanks{
An extended astract appeared in the proceedings of 
14th International Conference on Algorithms and Complexity (CIAC), 2025.
This work was supported by US-Israel BSF grant 2022205.}
}
\author{%
Amotz Bar-Noy%
\thanks{Department of Computer Science, City University of New York, New York, USA;
{\tt amotz@sci.brooklyn.cuny.edu}}
\and
Toni B\"ohnlein\thanks{Huawei, Zurich, Switzerland;
{\tt toniboehnlein@web.de}}
\and
David Peleg\thanks{Department of Computer Science and Applied Mathematics, 
Weizmann Institute of Science, Rehovot, Israel;
{\tt david.peleg@weizmann.ac.il}}
\and
Yingli Ran\thanks{College of Mathematics and Computer Science, 
Zhejiang Normal University, China;
{\tt ranyingli@zjnu.edu.cn}}
\and
Dror Rawitz\thanks{Faculty of Engineering, Bar Ilan University, Ramat-Gan, Israel;
{\tt dror.rawitz@biu.ac.il}}
}
\begin{document}

\begin{titlepage}

\maketitle

\begin{abstract}
The \textsc{Degree Realization} problem with respect to a graph family
$\calF$ is defined as follows. The input is a sequence $d$ of $n$ positive
integers, and the goal is to decide whether there exists a graph $G \in \calF$
whose degrees correspond to $d$. The main challenges are to provide a
precise characterization of all the sequences that admit a realization in
$\calF$ and to design efficient algorithms that construct one of the possible
realizations, if one exists.

This paper studies the problem of realizing degree sequences by
bipartite cactus graphs (where the input is given as a single sequence,
without the bi-partition).
A characterization of the sequences that have a cactus realization is already
known~\cite{Rao81cactus}.
In this paper, we provide a systematic way to obtain such a characterization,
accompanied by a realization algorithm.
This allows us to derive a characterization for bipartite cactus graphs, and as a byproduct, 
also for several other interesting sub-families of cactus graphs, including
bridge-less cactus graphs and core cactus graphs, as well as for
the bipartite sub-families of these families.
\end{abstract}

\medskip

\noindent
\textbf{Keywords}:
Cactus Graphs,
Degree Sequences,
Graph Algorithms,
Graph Realization.

\renewcommand{\thepage}{}
\end{titlepage}
\pagenumbering{arabic}


\section{Introduction}
\label{sec:intro}


\subsection{Background and motivation}

We study graph realization problems in which for some specified graph family $\calF$,
a sequence of integers $d$ is given, and the requirement is to construct a graph
from $\calF$ whose degrees abide by $d$.
More formally, the \textsc{Degree Realization} problem with respect to a graph
family $\calF$ is defined as follows. The input is a sequence $d = (d_1,\ldots,d_n)$
of positive integers, and the goal is to decide whether there exists a graph $G \in \calF$
whose degrees correspond to $d$, i.e.,
where the vertex set is $V = \set{1,\ldots,n}$ and $\deg_G(i) = d_i$, for every $i\in V$.
If such a graph exists, then $d$ is called \emph{$\calF$-graphic}.
Observe that while every graph $G \in \calF$ corresponds to a unique degree sequence $d$,
a degree sequence may be realized by more than one graph in $\calF$. 
For example, in the family of bipartite graphs, the sequence $(2,2,2,2,2,2,2,2)$
can be realized by a $8$-vertex cycle or by two $4$-vertex cycles.

There are two fundamental challenges that arise in this context.
The first is to provide an algorithm that \emph{decides} whether a given sequence
can be realized by a graph from $\calF$,
and furthermore to provide a characterization of all the realizable sequences.
The second is to design an efficient algorithm that \emph{constructs} one
of the realizations if one exists.

We consider sub-families of the family of \emph{cactus graphs} (\emph{cacti}). 
A cactus graph is a connected graph in which any edge may be a member of at most one cycle,
which means that different cycles do not share edges, but may share one vertex.
Cacti are an important and interesting graph family
with many applications, for instance in modeling electric circuits~\cite{NishiChua86,WagnerWolf21},
communication networks~\cite{Arcak11}, and genome comparisons~\cite{PDEJMSH11}.
We provide a characterization for
the \textsc{Degree Realization} problem with respect to bipartite cactus graphs
(bi-cacti), which implies a linear time algorithm for the decision problem.
Furthermore, we provide a linear time
realization algorithm for degree realization by bipartite cactus graphs.
We introduce a systematic way to obtain such a characterization,
which allows us to obtain the known characterization for cactus graphs~\cite{Rao81cactus}
and in addition also some new (previously unknown) characterizations
for several other interesting sub-families of cactus graphs,
including bridge-less cactus graphs and core cactus graphs, as well as
the bipartite sub-families of these families.

The characterization of families of sparse graphs, such as cactus graphs, may 
assist in finding a characterization for the family of planar graphs and the family 
of outer-planar graphs, both of which are open problems for about half a century.


\subsection{Related Work}

\paragraph*{\bf General graphs.}
The \textsc{Degree Realization} problem with respect to the family of all graphs
was studied extensively in the past.
Erd\H{o}s and Gallai~\cite{EG60} gave a necessary and sufficient condition
(which also implies an $O(n)$ decision algorithm) for a sequence to be realizable, or \emph{graphic}.
Havel~\cite{havel55} and Hakimi~\cite{Hakimi62} (independently) gave another
characterization for graphic sequences, which also implies an efficient $O(m)$ time
algorithm for constructing a realizing graph for a given graphic sequence, where $m$
is the number of edges in the graph. A variant of this realization algorithm is given
in~\cite{WK73}.


\paragraph*{\bf Bipartite graphs.}
The history of the \textsc{Degree Realization} problem with respect to the family of
bipartite graphs is as long as the one for general graphs. In this problem, a sequence
is given as input and the goal is to find a realizing bipartite graph.
This problem has a variant in which the input consists of two sequences representing
the degree sequences of the two sides of a bipartite realization. This variant
was solved by Gale and Ryser~\cite{Gale57,Ryser57} even before Erd\H{o}s and Gallai's
characterization of graphic sequences. However, the general bipartite realization
problem remains open despite being mentioned as open over 40 years ago~\cite{Rao81}. 
Recent attempts solve special cases and emphasize \emph{approximate} 
realizations~\cite{BBPR22b,BBPR25-hl}.
The sequence $d$ is called \emph{forcibly $\calF$-graphic} if every realization of $d$
is in $\calF$. Characterizations of sequences that are forcibly bipartite-graphic 
or forcibly connected bipartite-graphic were given in~\cite{BBPR23}.


\paragraph*{\bf Sparse graphs.}
The most relevant category is of families which contain graphs with a linear number of edges.
%
The problem is straightforward with respect to trees~\cite{GJT07}, forests, and
unicyclic graphs~\cite{BoeschHarary1978}.
Characterizations of sequences that are forcibly forest
or forcibly tree were obtained in~{\cite{BBPR23}}.
Characterizations of forcibly unicyclic and bicyclic sequences were given in~{\cite{DuanTian25}}.
(A graph $G = (V,E)$ is unicyclic if it is connected and $|E| = n$; it is called bicyclic 
if it is connected and $|E| = n+1$.)
A characterization for Halin graphs was given in~\cite{Biyikoglu05}.
Rao~\cite{Rao81cactus} provided a characterization for cactus graphs and for forcibly cactus graphs.
He also gave a characterization for cactus graphs whose cycles are triangles
and for connected graphs whose blocks are cycles of $k$ vertices.
Beineke and Schmeichel~\cite{BeinekeSchmeichel79}
characterized cacti degree sequences with up to four cycles and also provided
a sufficient condition for cactus realization.

Rao~\cite{Rao81} mentioned \textsc{Degree Realization} with respect to planar graphs 
and related families as open.
A characterization is known for regular sequences~\cite{hawkins1966certain}
and for sequences with $d_1 - d_n = 1$, where $d$ is assumed to be in non-increasing order~\cite{SH77}.
Partial results are known if $d_1 - d_n = 2$~\cite{fanelli1981unresolved-nonmaximal,fanelli1980conjecture-maximal,SH77}.
A characterization of bi-regular sequences with respect to the family of bipartite planar graphs
is given in~\cite{AdamsN19}.
A sufficient condition for planarity was given in~\cite{BBPRR24-mfcs-planar}.
As for outerplanar graphs, only partial results are known.
Several necessary conditions were given in~\cite{syslo1979characterizations,Jao2010West}.
Choudum~\cite{Choudum81} gave a characterization for forcibly outerplanar sequences.
In~\cite{BBPRR25} it was shown that any sequence that satisfies a certain necessary condition for 
outerplanarity is either non-outerplanaric or has a 2-page book embedding.
A sufficient condition was given in~\cite{BBPRR24-mfcs-outerplanar}.
Sufficient conditions for the realization of $2$-trees were given in~\cite{LMNW06}.
(A graph $G$ is a 2-tree if $G$ is a triangle or $G$ has a degree-2 vertex
whose neighbors are adjacent and whose removal leaves a 2-tree.)
Bose et at.~\cite{BDKLMWW08} gave a characterization 
for $2$-trees with a linear time realization algorithm.


\subsection{Our Results and Techniques}

As opposed to the approach taken in~\cite{Rao81cactus}, 
the characterizations and realization algorithms of this paper were developed by starting
with simple graph families and gradually coping with families that are more involved.
Specifically, \Cref{sec:pseudotree} contains 
characterizations and realization algorithms for
\emph{unicyclic} graphs and \emph{bi-unicyclic} graphs. 
Coping with the simplest non-trivial cacti provides the basic techniques
needed for the more general cases, but this also serves as a light introduction to degree realization.
\emph{Bridge-less cacti} are studied in \Cref{sec:bridge-less}, which
contains characterizations and realization algorithms for \emph{bridge-less cactus} graphs and
\emph{bridge-less bi-cactus} graphs.
The next family we consider is that of \emph{core cactus} graphs (see definition in \Cref{sec:prelim}).
\Cref{sec:core} contains characterizations and realization algorithms for \emph{core cactus} graphs and
\emph{core bi-cactus} graphs.
In \Cref{sec:cactus} we provide a characterization for degree realization by
\emph{cactus} graphs and \emph{bi-cactus} graphs.
Finally, in \Cref{sec:forcibly} we give a characterization for forcibly bi-cactus 
and forcibly bipartite unicyclic graphs.

The crux in developing a necessary and sufficient condition for cactus and bi-cactus
realizability of a given sequence is to bound the number of possible edges
in the realizing graph as shown in \Cref{sec:bounds}.
This is obtained when the number of \emph{bridges}
in the graph (see definition in \Cref{sec:prelim}) is minimized.
The above condition depends on a \emph{bridge parameter},
which is defined as
\[
\beta \triangleq \max \set{\multipl_1 , \half(\multipl_1 + \multipl_{odd})  }
~,
\]
where $\multipl_1$ is the multiplicity of $1$ in $d$ and $\multipl_{odd}$
is the number of odd integers greater than $1$ in $d$.
We note that this parameter is implicit in~\cite{Rao81cactus}.
The decision about a given sequence depends only on the volume $\sum_i d_i$, $n$,
and the bridge parameter (see Theorem~\ref{thm:cactus}).
We believe that this parameter may be of independent interest.

The decision and realization algorithms of all the above mentioned families work in linear time.
Our algorithms are reminiscent of the minimum pivot version of the
Havel-Hakimi algorithm~\cite{havel55,Hakimi62} for realizing sequences by general graphs.
However, in our algorithms pivots are not connected to the vertices with the \emph{maximum} residual
degrees in the sequence $d$. Hence, our analysis is not based on swapping
arguments. In particular, degree-$1$ vertices should be connected
to odd degree vertices, rather than to even degree vertices, even if the latter degrees are larger.
When the sequence does not contain degree-$1$ vertices, pairs of degree-$2$
vertices are used to construct a triangle that lowers the degree of another vertex by $2$.
Again, smaller odd degree vertices are preferred over larger even degree vertices.
Throughout the paper, when dealing with bi-cactus graphs,
we adapt the techniques used in the cactus case to avoid constructing odd-length
cycles. This task turned out to be more involved, since in this case a realization
may require one extra bridge edge.

Given a graph family, realizability of sequences may depend on certain parameters. 
There are two extremes. One extreme is the elaborate test of Erd\H{o}s-Gallai that  
examines the relationship among all the degrees before determining if a sequence is graphic. 
The other extreme is for forests in which the length of the sequence $n$ 
and the sequence sum are the only two interesting parameters, 
i.e., a sequence $d$ is realizable by a forest if and only if $\sum_i d_i \leq 2n-2$. 
The results for cacti and bi-cacti are not that simple, but still 
depend only on four parameters: the multiplicities of 1's and of odd numbers, the sequence length,
and the sequence sum. Our structured proof demonstrates the roles of these 
two additional parameters, through the bridge parameter. 
A possible next step could be utilizing additional parameters, e.g., 
the multiplicity of 2's, to obtain characterizations of 
sequences that can be realized by other families of sparse graphs, such as planar graphs and 
outerplanar graphs. Both of which are long standing open problems.


\section{Preliminaries}
\label{sec:prelim}


\subsection{Definitions and Notation}

We consider simple graphs $G = (V,E)$, where $V = \set{1, \ldots, n}$.
The degree of a vertex $i \in V$, denoted by $\deg_G(i)$, is its number of  neighbors.
The degree sequence of graph $G$ is $\deg(G) = (\deg_G(1), \ldots, \deg_G(n))$.
Let $d = (d_1, \ldots,d_n)$ be a sequence of positive integers.
If there exists a graph $G$ such that $\deg(G) = d$, then it is said that $G$ \emph{realizes} $d$.
A sequence $d$ that has a realization $G$ is called $\emph{graphic}$.
We refer to $\sum_i d_i$ as the \emph{volume} of $d$. Define $m \triangleq \half \sum_i d_i$.
Notice that if $d$ is graphic, then $m$ is the number of edges in any realization of $d$.
A sequence $d$ is called a \emph{degree sequence} if $d_i \in \set{1,\ldots,n-1}$,
for every $i$, and the volume $\sum_i d_i$ is even.
Throughout the paper, we assume that $d_i \geq d_{i+1}$, for every $1\le i\le n-1$.
For brevity, we use $a^k$ as a shorthand for a subsequence of $k$ consecutive $a$'s (e.g., $2^3$ represents $2,2,2$).
Given a degree sequence $d$, let $\multipl_i$ be the number of times the integer $i$ appears in $d$. 
Finally, $\multipl_{odd}$ is the number of odd integers that are larger than 1 in $d$,
namely $\multipl_{odd} = \sum_{k \geq 1} \multipl_{2k+1}$.
Consider for example the degree sequence $(9,5^5,4^2,3^4,2,1^8)$. 
For this sequence, $\multipl_1 = 8, \multipl_2 = 1, \multipl_3 = 4, \multipl_4 = 2, 
\multipl_5 = 5, \multipl_7 = \multipl_8 = 0, \multipl_9 = 1$.
Also, $\multipl_{odd} = \multipl_3 + \multipl_5 + \multipl_7 +\multipl_9 = 10$.


\subsection{Graph Families}

A graph $G$ is \emph{connected} if there is a path between every pair of vertices in the graph.
A \emph{cut-vertex} of a connected graph is a vertex whose removal disconnects the graph.
A \emph{bridge} in a connected graph is an edge whose removal disconnects the graph.
A \emph{block} of $G$ is a maximal connected subgraph of $G$ that does not have cut-vertices.
That is, it is a maximal subgraph which is either an isolated vertex, a bridge edge,
or a 2-connected subgraph.

A graph $G$ is called a \emph{pseudo-tree} if it is connected and it contains at most one cycle.
It is called \emph{unicyclic} if it contains exactly one cycle.
A graph $G$ is called a \emph{pseudo-forest} if each of its connected components is a pseudo-tree.
A \emph{cactus} graph is a connected graph in which any edge may 
be a member of at most one cycle,
which means that different cycles do not share edges, but may share one vertex.
An alternative definition is that a graph $G$ is a (non-trivial) cactus 
if and only if every block
of $G$ is either a simple cycle or a bridge (see \Cref{fig:cactus}).
A cactus $G$ is called \emph{bridge-less} if it has no bridges.
In this case every edge belongs to exactly one cycle in $G$ (see \Cref{fig:bridgeless}).
A cactus $G$ is called a \emph{triangulated cactus} if all the cycles are of length three
and each edge belongs to a cycle (see \Cref{fig:triangulated}).
A cactus graph $G$ is called a \emph{core cactus} if there are no bridges that
split the graph such that each of the two components contain a cycle. In other words,
when all the bridges of a core cactus are removed, what remains is 
a bridge-less cactus (see \Cref{fig:corecactus}).
A graph $G$ is a \emph{bipartite cactus} or a \emph{bi-cactus} if $G$ is 
a cactus graph and also a bipartite graph (see \Cref{fig:bicactus}). 
Bi-pseudo-trees, bridge-less bi-cactus, and core bi-cactus are 
defined in a similar manner.


\begin{figure}
\centering
\begin{subfigure}[t]{0.3\textwidth}
\centering
\begin{tikzpicture}[scale=0.25]
\begin{scope}[
every node/.style={shape=circle,fill=blue}, 
every edge/.style={draw=black,ultra thick}
]

\node(A) at (-1,0) {};
\node(B) at (2,3) {};
\node(C) at (5,0) {};
\node(D) at (8,-1) {};
\node(E) at (9,-4) {};
\node(F) at (6,-6) {};
\node(G) at (3,-3) {};
\node(H) at (0,-2) {};
\node(I) at (-3,-2) {};
\node(J) at (-3,-5) {};
\node(K) at (0,-5) {};
\node(L) at (2,6) {};
\node(M) at (3,-6) {};
\node(N) at (12,-4) {};
\node(O) at (10,0) {};
\node(P) at (11,3) {};
\node(Q) at (13,2) {};
\node(S) at (12,5) {};
\node(R) at (15,5) {};
\node(T) at (9,5) {};
\node(U) at (11,7) {};
\node(V) at (10,9) {};
\node(W) at (7,9) {};
\node(X) at (6,7) {};
\node(Y) at (5,4) {};
\node(Z) at (7,2) {};

\path (A) [-] edge (B);
\path (A) [-] edge (C);
\path (B) [-] edge (C);
\path (C) [-] edge (D);
\path (D) [-] edge (E);
\path (E) [-] edge (F);
\path (F) [-] edge (G);
\path (C) [-] edge (G);
\path (G) [-] edge (H);
\path (H) [-] edge (I);
\path (I) [-] edge (J);
\path (J) [-] edge (K);
\path (H) [-] edge (K);
\path (B) [-] edge (L);
\path (K) [-] edge (M);
\path (E) [-] edge (N);
\path (D) [-] edge (O);
\path (O) [-] edge (P);
\path (O) [-] edge (Q);
\path (P) [-] edge (R);
\path (P) [-] edge (S);
\path (P) [-] edge (T);
\path (P) [-] edge (Z); %
\path (T) [-] edge (U);
\path (U) [-] edge (V);
\path (V) [-] edge (W);
\path (W) [-] edge (X);
\path (T) [-] edge (X);
\path (Y) [-] edge (Z);

\end{scope}
\end{tikzpicture}
\caption{A cactus graph.}
\label{fig:cactus}
\end{subfigure}
\hspace{10pt}
\begin{subfigure}[t]{0.3\textwidth}
\centering
\begin{tikzpicture}[scale=0.25]
\begin{scope}[
every node/.style={fill=blue,shape=circle}, 
every edge/.style={draw=black,ultra thick}
]
\node(A) at (-1,0) {};
\node(B) at (2,3) {};
\node(C) at (5,0) {};
\node(D) at (8,-1) {};
\node(E) at (9,-4) {};
\node(F) at (6,-6) {};
\node(G) at (5,-4) {};
\node(H) at (9,-6) {};
\node(J) at (-3,-4) {};
\node(K) at (1,-4) {};
\node(L) at (0,-2) {};
\node(P) at (11,3) {};
\node(Q) at (13,2) {};

\path (A) [-] edge (B);
\path (A) [-] edge (C);
\path (B) [-] edge (C);

\path (C) [-] edge (D);
\path (D) [-] edge (E);
\path (C) [-] edge (E);

\path (F) [-] edge (G);
\path (E) [-] edge (G);

\path (E) [-] edge (H);
\path (F) [-] edge (H);

\path (J) [-] edge (K);
\path (C) [-] edge (K);

\path (J) [-] edge (L);
\path (C) [-] edge (L);


\path (D) [-] edge (P);
\path (P) [-] edge (Q);
\path (D) [-] edge (Q);

\end{scope}

\end{tikzpicture}
\caption{A bridge-less cactus graph.}
\label{fig:bridgeless}
\end{subfigure}
\hspace{10pt}
\begin{subfigure}[t]{0.3\textwidth}
\centering

\begin{tikzpicture}[scale=0.25]

\begin{scope}[
every node/.style={fill=blue,shape=circle}, 
every edge/.style={draw=black,ultra thick}
]

\node(A) at (-1,0) {};
\node(B) at (2,3) {};
\node(C) at (5,0) {};
\node(D) at (8,-1) {};
\node(E) at (9,-4) {};
\node(F) at (6,-6) {};
\node(G) at (3,-3) {};
\node(H) at (11,-6) {};
\node(I) at (8,-6) {};
\node(J) at (-3,-7) {};
\node(K) at (0,-7) {};
\node(L) at (4,-9) {};
\node(M) at (1,-9) {};
\node(N) at (-2,-10) {};
\node(O) at (-4,-10) {};
\node(P) at (11,3) {};
\node(Q) at (13,2) {};

\path (A) [-] edge (B);
\path (A) [-] edge (C);
\path (B) [-] edge (C);

\path (C) [-] edge (D);
\path (D) [-] edge (E);
\path (C) [-] edge (E);

\path (E) [-] edge (F);
\path (F) [-] edge (G);
\path (E) [-] edge (G);

\path (E) [-] edge (H);
\path (H) [-] edge (I);
\path (E) [-] edge (I);

\path (G) [-] edge (J);
\path (J) [-] edge (K);
\path (G) [-] edge (K);

\path (K) [-] edge (L);
\path (L) [-] edge (M);
\path (K) [-] edge (M);

\path (K) [-] edge (N);
\path (N) [-] edge (O);
\path (K) [-] edge (O);

\path (D) [-] edge (P);
\path (P) [-] edge (Q);
\path (D) [-] edge (Q);

\end{scope}

\end{tikzpicture}
\caption{A triangulated cactus graph.}
\label{fig:triangulated}
\end{subfigure}

\vspace{20pt}

\begin{subfigure}[t]{0.3\textwidth}
\centering
\begin{tikzpicture}[scale=0.25] 
\begin{scope}[
every node/.style={shape=circle,fill=blue},
every edge/.style={draw=black,ultra thick}
]

\node(A) at (-1,0) {};
\node(B) at (2,2) {};
\node(C) at (5,0) {};
\node(D) at (8,-1) {};
\node(E) at (9,-4) {};
\node(F) at (6,-6) {};
\node(G) at (3,-3) {};
\node(H) at (0,-2) {}; %
\node(K) at (0,-5) {}; %
\node(L) at (2,5) {}; %
\node(M) at (3,-6) {};
\node(O) at (11,0) {}; %
\node(P) at (11,3) {}; %
\node(T) at (9,5) {}; %
\node(Y) at (5,4) {}; %
\node(Z) at (7,2) {}; %

\path (A) [-] edge (B);
\path (A) [-] edge (C);
\path (B) [-] edge (C);
\path (C) [-] edge (D);
\path (D) [-] edge (E);
\path (E) [-] edge (F);
\path (F) [-] edge (G);
\path (C) [-] edge (G);
\path[dashed] (G) [-] edge (H);
\path[dashed] (H) [-] edge (K);
\path[dashed] (B) [-] edge (L);
\path[dashed] (K) [-] edge (M);
\path[dashed] (D) [-] edge (O);
\path[dashed] (O) [-] edge (P);
\path[dashed] (P) [-] edge (T);
\path[dashed] (P) [-] edge (Z);
\path[dashed] (Y) [-] edge (Z);

\end{scope}
\end{tikzpicture}
\caption{A core cactus graph.}
\label{fig:corecactus}
\end{subfigure}
\hspace{10pt}
\begin{subfigure}[t]{0.3\textwidth}
\centering
\begin{tikzpicture}[scale=0.25] 
\begin{scope}[
every node/.style={shape=circle,fill=blue},
every edge/.style={draw=black,ultra thick}
]

\node(A) at (-1,0) {};
\node(B) at (2,3) {};
\node(C) at (5,0) {};
\node(D) at (8,-1) {};
\node(E) at (9,-4) {};
\node(F) at (6,-6) {};
\node(G) at (3,-3) {};
\node(H) at (0,-2) {};
\node(I) at (-3,-2) {};
\node(J) at (-3,-5) {};
\node(K) at (0,-5) {};
\node(L) at (5,4) {};
\node(M) at (3,-6) {};
\node(N) at (12,-3) {};

\path (A) [-] edge (B);
\path (A) [-] edge (C);
\path (C) [-] edge (L);
\path (C) [-] edge (D);
\path (E) [-] edge (F);
\path (F) [-] edge (G);
\path (C) [-] edge (G);
\path (G) [-] edge (H);
\path (H) [-] edge (I);
\path (I) [-] edge (J);
\path (J) [-] edge (K);
\path (H) [-] edge (K);
\path (B) [-] edge (L);
\path (K) [-] edge (M);
\path (E) [-] edge (N);
\path (D) [-] edge (N);
\end{scope}
\end{tikzpicture}
\caption{A bi-cactus graph.}
\label{fig:bicactus}
\end{subfigure}
\caption{Examples of a cactus graph, a bridge-less cactus graph, a triangulated cactus graph, 
a core cactus graph, and a bipartite cactus graph.
In the core cacti the dashed lines represent bridges and solid lines are the edges of the bridge-less core.}
\end{figure}
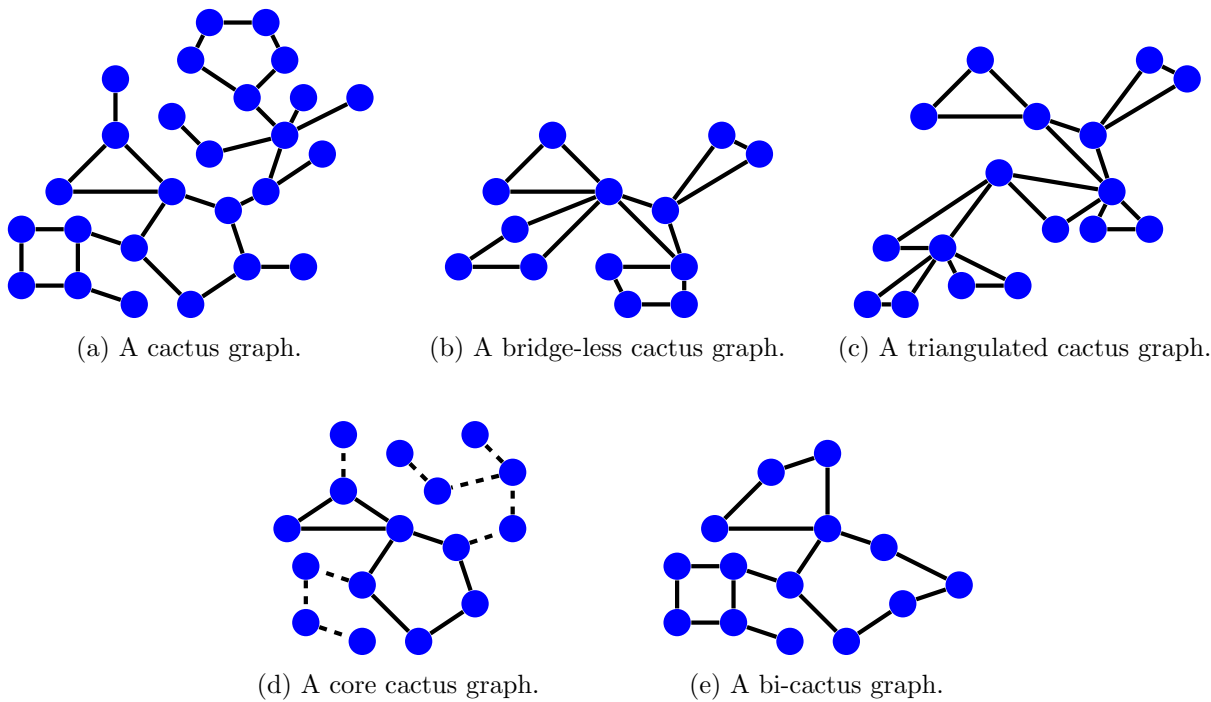


Given a connected graph $G$, the \emph{block-cutpoint graph} $\BC(G) = (V',E')$
of a graph $G$ is defined as follows~\cite{AgarssonGreenlaw06}.
Let $C(G) \subseteq V$ be the set of cut vertices, and let $\calB(G)$ be the set of blocks in $G$.
Then, 
\begin{align*}
V' & = C(G) \cup \calB(G) &
& \text{and} &
E' & = \set{(v,B) : v \in C(G), B \in \calB(G), v \in V(B)}
~.
\end{align*}
Observe that $\BC(G)$ is a tree.
See an example in \Cref{fig:blockcutpoint}.


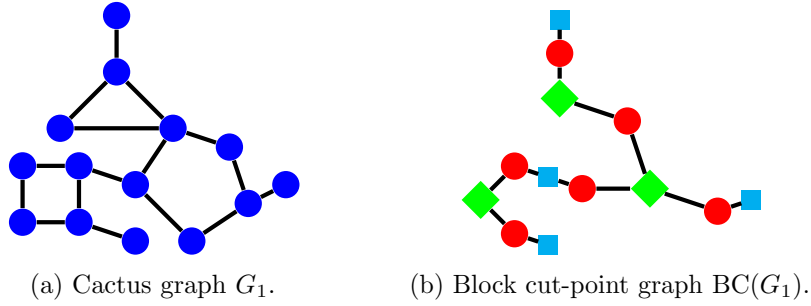
\begin{figure}[t]
\centering
\begin{subfigure}[t]{0.3\textwidth}
\centering
\begin{tikzpicture}[scale=0.25] 
\begin{scope}[
every node/.style={shape=circle,fill=blue},
every edge/.style={draw=black,ultra thick}
]

\node(A) at (-1,0) {};
\node(B) at (2,3) {};
\node(C) at (5,0) {};
\node(D) at (8,-1) {};
\node(E) at (9,-4) {};
\node(F) at (6,-6) {};
\node(G) at (3,-3) {};
\node(H) at (0,-2) {};
\node(I) at (-3,-2) {};
\node(J) at (-3,-5) {};
\node(K) at (0,-5) {};
\node(L) at (2,6) {};
\node(M) at (3,-6) {};
\node(N) at (11,-3) {};

\path (A) [-] edge (B);
\path (A) [-] edge (C);
\path (B) [-] edge (C);
\path (C) [-] edge (D);
\path (D) [-] edge (E);
\path (E) [-] edge (F);
\path (F) [-] edge (G);
\path (C) [-] edge (G);
\path (G) [-] edge (H);
\path (H) [-] edge (I);
\path (I) [-] edge (J);
\path (J) [-] edge (K);
\path (H) [-] edge (K);
\path (B) [-] edge (L);
\path (K) [-] edge (M);
\path (E) [-] edge (N);

\end{scope}
\end{tikzpicture}
\caption{Cactus graph $G_1$.}
\end{subfigure}
\hspace{20pt}
\begin{subfigure}[t]{0.32\textwidth}
\centering

\begin{tikzpicture}[scale=0.3] 

\begin{scope}[
every node/.style={fill=green,shape=circle},
every edge/.style={draw=black,ultra thick}
]

\node[fill=red](C1) at (2,3) {};
\node[fill=red](C2) at (5,0) {};
\node[fill=red](C6) at (9,-4) {};
\node[fill=red](C5) at (3,-3) {};
\node[fill=red](C3) at (0,-2) {};
\node[fill=red](C4) at (0,-5) {};

\node[shape=diamond](T) at (2,1) {};
\node[shape=diamond](S) at (-1.5,-3.5) {};
\node[shape=diamond](P) at (6,-3) {};

\node[fill=cyan,shape=rectangle](B1) at (2,4.5) {};
\node[fill=cyan,shape=rectangle](B2) at (1.5,-2.5) {};
\node[fill=cyan,shape=rectangle](B3) at (1.5,-5.5) {};
\node[fill=cyan,shape=rectangle](B4) at (10.5,-3.5) {};

\path (B1) [-] edge (C1);
\path (C1) [-] edge (T);
\path (C2) [-] edge (T);
\path (C3) [-] edge (S);
\path (C4) [-] edge (S);
\path (C3) [-] edge (B2);
\path (C5) [-] edge (B2);
\path (C4) [-] edge (B3);
\path (C2) [-] edge (P);
\path (C5) [-] edge (P);
\path (C6) [-] edge (P);
\path (C6) [-] edge (B4);
\end{scope}

\end{tikzpicture}
\caption{Block cut-point graph $\BC(G_1)$.}
\end{subfigure}
\caption{$G$ is a cactus graph; $\BC(G)$ is the block point-cut graph of $G$.
In $\BC(G)$ circles are cut-vertices, squares are bridge blocks, and diamonds are cycle blocks.
}
\label{fig:blockcutpoint}
\end{figure}


\section{Realization by Pseudo-Trees and Bi-Pseudo-Trees}
\label{sec:pseudotree}

In this section we give a characterization for degree realization by pseudo-trees and bi-pseudo-trees.
These results are used in the sequel, and serve as a warm-up.

\begin{observation}
If $G$ is a pseudo-forest, then $\sum_i d_i \in \set{2(n-c),\ldots,2n}$, 
where $c$ is the number of connected components in $G$.
If $G$ is a pseudo-tree, then $\sum_i d_i \in \set{2(n-1),2n}$.
\end{observation}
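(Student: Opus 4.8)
The plan is to convert the claim about the volume $\sum_i d_i$ into a claim about the number of edges $m$, using the handshake identity $\sum_i d_i = 2m$ (equivalently $m = \half \sum_i d_i$, as defined in \Cref{sec:prelim}). Since this identity holds for any graph, proving the Observation reduces to establishing the edge bounds $n - c \le m \le n$ for a pseudo-forest with $c$ components, and $m \in \set{n-1,n}$ for a pseudo-tree. Note that the statement asserts only membership (a necessary condition on realizable sequences), so I do not need to argue that every intermediate value is attained; the two inequalities suffice.

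The key structural fact I would use is that a connected graph $H$ on $k$ vertices with $\ell$ edges has circuit rank (first Betti number) equal to $\ell - k + 1$, which counts its number of independent cycles. By definition a pseudo-tree contains at most one cycle, so its circuit rank is at most $1$; combined with connectivity (circuit rank $\ge 0$), this gives $0 \le \ell - k + 1 \le 1$, i.e.\ $\ell \in \set{k-1,k}$. Equivalently: a connected graph has at least $k-1$ edges, with equality exactly for trees, and adding a single cycle to a tree on $k$ vertices contributes exactly one extra edge.

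It then remains to assemble these bounds. For a pseudo-forest, write the component sizes as $n_1,\ldots,n_c$ with $\sum_j n_j = n$; applying the per-component bound to each pseudo-tree component gives $\sum_j (n_j - 1) \le m \le \sum_j n_j$, that is $n - c \le m \le n$, and multiplying by $2$ yields $\sum_i d_i \in \set{2(n-c),\ldots,2n}$. Specializing to $c = 1$ recovers the pseudo-tree case, where the only admissible integer edge counts are $m = n-1$ and $m = n$, so $\sum_i d_i \in \set{2(n-1),2n}$ exactly (the value $2n-1$ being excluded both because $m$ must be an integer and because a volume is even). Since every step is a direct edge count, there is no real obstacle here; the only point requiring a moment's care is the structural fact that ``at most one cycle'' pins the edge count to $\set{k-1,k}$ per component, which is immediate from the circuit-rank formula.
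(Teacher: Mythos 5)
Your proof is correct. The paper states this result as an observation without providing any proof, and your argument --- the handshake identity $\sum_i d_i = 2m$ combined with the fact that a connected graph on $k$ vertices with at most one cycle has exactly $k-1$ or $k$ edges, summed over the $c$ components --- is precisely the standard reasoning the observation implicitly rests on, so there is nothing to correct or compare beyond noting that your write-up supplies the details the paper omits.
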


The realization problem is straightforward on trees and forests.

\begin{theorem}
\label{thm:forest}
{\bf (\cite{GJT07})}
Let $d$ be a degree sequence such that $\sum_i d_i \leq 2n - 2$.
Then $d$ has a forest realization with $(2n - \sum_i d_i)/2$ components.
If $\sum_i d_i = 2n - 2$, then $d$ has a tree realization.
\end{theorem}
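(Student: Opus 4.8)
The plan is to prove both statements by induction on $n$, building the forest one vertex at a time using a Havel--Hakimi style reduction. The key observation is that the condition $\sum_i d_i \le 2n-2$ forces the existence of a leaf: if every $d_i \ge 2$, then $\sum_i d_i \ge 2n$, a contradiction. So the minimum entry satisfies $d_n = 1$, and I want to attach this leaf to some other vertex and recurse. Concretely, I would connect the degree-$1$ vertex $n$ to the vertex of maximum degree $d_1$, delete vertex $n$, and reduce $d_1$ by one, yielding a new sequence $d'$ on $n-1$ vertices. I must then check that $d'$ is again a valid degree sequence (all entries in $\set{0,\ldots,n-2}$, sum even) satisfying the induction hypothesis.

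First I would establish the edge-count bookkeeping. Since $m = \half\sum_i d_i$, the inequality $\sum_i d_i \le 2n-2$ is equivalent to $m \le n-1$, so the realizing graph has at most $n-1$ edges. A forest on $n$ vertices with $k$ edges has exactly $n-k$ components, so a realization with $m$ edges has $n-m = n - \half\sum_i d_i = (2n - \sum_i d_i)/2$ components, which matches the claimed count; thus the component count is automatic once I produce any forest realization with the right number of edges. The inductive step removes one vertex and one edge simultaneously, so the volume drops by exactly $2$ and the bound $\sum_i d_i' \le 2(n-1)-2$ is preserved; I then invoke the induction hypothesis to realize $d'$ as a forest and re-attach the leaf.

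The base case and the tree claim both need care. When $\sum_i d_i = 2n-2$, equivalently $m = n-1$, the realization is a connected acyclic graph, i.e.\ a tree: any forest on $n$ vertices with $n-1$ edges is connected, so the component-count formula gives $(2n - (2n-2))/2 = 1$ component, which is precisely a tree. For the induction to stay inside the tree regime, I would note that reducing a tree sequence by attaching a leaf to the max-degree vertex yields a sequence with volume $2n-4 = 2(n-1)-2$, again a tree sequence on $n-1$ vertices, so the same reattachment produces a tree. One subtlety: after decrementing $d_1$ I must re-sort, but since I always peel off a leaf and decrement the largest remaining entry, monotonicity is easy to restore.

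The main obstacle I anticipate is not the attachment itself but verifying that the reduced sequence $d'$ remains \emph{realizable without creating a multi-edge or self-loop} in the simple-graph sense, and that attaching the leaf to the maximum-degree vertex never fails because that vertex had residual degree zero. This is where I would use the inequality most sharply: if the leaf cannot be attached to any remaining positive-degree vertex, then all other vertices already have degree $0$ in $d'$, forcing $\sum_i d_i \le 2$, which is the trivial base case of a single edge ($n=2$, $d = (1,1)$) or isolated vertices. I would handle these small cases directly and let the induction carry the rest, so the genuinely delicate point is confirming that the max-degree vertex is distinct from the leaf and has positive residual degree throughout the reduction.
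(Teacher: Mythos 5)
Your proof is correct: the leaf-peeling induction (peel a degree-1 vertex, decrement the largest entry, re-attach) goes through, and your observation that the component count is automatic—any forest on $n$ vertices with $m$ edges has exactly $n-m$ components—reduces both claims to producing any forest realization, with the tree case following for free when $m = n-1$. Note that the paper does not actually prove this statement (it is quoted from~\cite{GJT07}), but your argument is essentially the same leaf-attachment induction that the paper spells out for the unicyclic case in \Cref{thm:pseudotree}, so apart from the bookkeeping you already flag (zero entries after decrementing, the $n=2$ base case, re-sorting), there is nothing to reconcile.
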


The following observation considers the case, where $\sum_i d_i = 2n$ and $n \geq 3$.

\begin{observation}
Let $G$ be a pseudo-tree such that $n \geq 3$ and $\sum_i d_i = 2n$.
Then, $d_3 \geq 2$.
\end{observation}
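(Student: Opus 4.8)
The plan is to first observe that the volume hypothesis $\sum_i d_i = 2n$ pins down the cyclic structure of $G$ precisely, and then to exploit the elementary fact that every vertex lying on a cycle has degree at least two. The key realization is that $\sum_i d_i = 2n$ forces $G$ to be unicyclic rather than a tree, and the three guaranteed high-degree vertices will be the vertices of the (unique) cycle.

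First I would note that since $G$ is a pseudo-tree (connected, with at most one cycle) having $m = \half \sum_i d_i = n$ edges, $G$ cannot be acyclic: a tree on $n$ vertices has exactly $n-1$ edges, which would force $\sum_i d_i = 2(n-1) = 2n - 2$, contradicting the hypothesis. Hence $G$ contains exactly one cycle, i.e., $G$ is unicyclic. Next, because $G$ is a simple graph, its unique cycle $C$ has length at least $3$, and every vertex on $C$ has two distinct neighbors on $C$, so its degree in $G$ is at least $2$. This already exhibits at least three vertices of degree at least $2$.

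Finally, since $d$ is sorted in non-increasing order and at least three vertices have degree $\geq 2$, we conclude $d_1 \geq d_2 \geq d_3 \geq 2$, and in particular $d_3 \geq 2$, as claimed. I do not anticipate any genuine obstacle in this argument; the only step requiring any care is ruling out the degenerate possibility that $G$ is a tree, which is exactly what the volume condition $\sum_i d_i = 2n$ (equivalently $m = n$, together with $n \geq 3$) forbids.
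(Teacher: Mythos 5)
Your proof is correct, but it takes a genuinely different route from the paper's. You argue structurally: the volume condition gives $m = n$, so the connected pseudo-tree cannot be a tree and must be unicyclic; the unique cycle has at least three vertices, each of degree at least $2$, whence $d_3 \geq 2$. The paper instead argues by contradiction via degree counting: if $d_3 = 1$, then $d_1 + d_2 = 2n - (n-2) = n+2$, and vertices $1$ and $2$ would need at least $n$ distinct neighbors among the $n-2$ degree-$1$ vertices (even after spending one edge joining $1$ and $2$ to each other), which is impossible since each degree-$1$ vertex can serve as a neighbor of at most one of them. The difference matters: your argument genuinely uses connectivity and the pseudo-tree hypothesis, whereas the paper's counting argument never uses the structure of $G$ at all, so it in fact proves the stronger statement that any sequence with $\sum_i d_i = 2n$ and $d_3 = 1$ is not graphic -- it admits no realization by any simple graph, connected or not. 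Both proofs are valid; yours is the more natural structural argument for the statement as phrased, while the paper's yields a stronger non-realizability fact in the spirit of the rest of the paper.
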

\begin{proof}
If $d_3 = 1$, then $d_1 + d_2 = 2n - (n-2) = n + 2$. 
The sequence $d$ cannot be realized because $\multipl_1$ must be at least $n$ to satisfy the 
degree requirements $d_1$ and $d_2$ even if the vertices whose degrees are $d_1$ and $d_2$ 
are connected. A contradiction since $\multipl_1\le n-2$.
\end{proof}


\subsection{Unicyclic Realization}

We show that there is a realization by a unicyclic graph, if $\sum_i d_i = 2n$ and $d_3 \geq 2$. 
This was proven before in~\cite{BoeschHarary1978}. 
In this paper, we give a constructive proof that illustrates our approach for subsequent results.
More specifically, we use the minimum pivot version of the
Havel-Hakimi algorithm~\cite{havel55,Hakimi62} as long as the sequence contains a degree of 1.

\begin{theorem}
[\cite{BoeschHarary1978}]
\label{thm:pseudotree}
Let $d$ be a degree sequence such that $\sum_i d_i = 2n$ and $d_3 \geq 2$.
Then, the sequence $d$ has a unicyclic realization.
\end{theorem}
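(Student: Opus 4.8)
The plan is to give a constructive, inductive proof that mirrors the minimum-pivot Havel--Hakimi strategy, peeling off vertices of degree $1$ and recursing on a smaller sequence. Throughout we maintain the invariant that the residual sequence $d' = (d'_1, \ldots, d'_{n'})$ satisfies $\sum_i d'_i = 2n'$ and $d'_3 \geq 2$, so that the hypotheses of the theorem persist under the recursion. The key structural fact is that $\sum_i d_i = 2n$ forces exactly one ``extra'' unit of degree beyond a tree (which would have volume $2n-2$), and this surplus is what materializes as the unique cycle.

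\textbf{First I would handle the base case.} When no $d_i$ equals $1$, every $d_i \geq 2$, and since $\sum_i d_i = 2n$ we must have $d_i = 2$ for all $i$ (because $\sum_i d_i \geq 2n$ with equality iff all degrees are exactly $2$). Such a sequence $(2^n)$ with $n \geq 3$ is realized by a single $n$-cycle, which is unicyclic. This is the terminal configuration of the recursion.

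\textbf{The inductive step is the heart of the argument.} Suppose $\multipl_1 \geq 1$, so $d_n = 1$ (degrees are non-increasing). Since $\sum_i d_i = 2n$ and $d_3 \geq 2$, not all vertices can have degree $1$; in fact there must exist a vertex of degree at least $2$ to which we can attach vertex $n$. Following the paper's stated philosophy, I would connect the degree-$1$ vertex $n$ to a \emph{smallest-degree vertex of degree at least $2$} serving as the pivot, say vertex $j$ with $d_j \geq 2$. Removing vertex $n$ and decrementing $d_j$ by $1$ yields a residual sequence $d'$ on $n-1$ vertices with $\sum_i d'_i = 2n - 2 = 2(n-1)$. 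The obstacle here, and the step I expect to be most delicate, is verifying that the residual sequence still satisfies $d'_3 \geq 2$ (re-sorted), so that the inductive hypothesis applies, \emph{and} that lowering $d_j$ does not create a degree-$0$ vertex or otherwise leave the sequence non-graphic for a pseudo-tree. Choosing the \emph{smallest} eligible pivot (rather than the largest, as in classical Havel--Hakimi) is precisely what protects the $d'_3 \geq 2$ invariant: decrementing a small degree-$2$ vertex to degree $1$ is harmless as long as enough vertices of degree $\geq 2$ remain, whereas one must check the boundary case where the decrement would drop the count of degree-$\geq 2$ vertices below $3$. I would argue that in that boundary case the remaining sequence is forced into a recognizable small form (essentially $(2,2,2)$ plus pendant structure) that can be realized directly.

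\textbf{Finally I would assemble the realization.} By induction the residual $d'$ has a unicyclic realization $G'$; re-attaching vertex $n$ as a pendant (a bridge) to the pivot vertex $j$ adds no new cycle and preserves connectivity, so the resulting graph $G$ is connected with exactly one cycle, i.e., unicyclic, and realizes $d$. The termination of the recursion is guaranteed since each step removes one vertex, and the volume-to-vertex-count relation $\sum d'_i = 2n'$ is preserved, so we eventually reach the all-$2$'s base case. The correctness of the construction reduces entirely to the invariant-preservation claim of the inductive step, which is where the careful case analysis on the pivot choice must be carried out.
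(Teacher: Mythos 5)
Your inductive skeleton (peel off a degree-$1$ vertex, recurse on a residual sequence of volume $2n'$, terminate at $(2^n)$ realized by a single cycle) is the same as the paper's, but your attachment rule is different, and your justification for it is backwards. The paper attaches the leaf to the \emph{maximum}-degree vertex: once the base case $d = (2^n)$ is excluded, $\sum_i d_i = 2n$ forces $d_1 \geq 3$ and $d_n = 1$, so after subtracting $1$ from $d_1$ the decremented entry is still at least $2$, and $d'_3 \geq 2$ holds automatically --- no boundary case, no direct-realization escape hatch, and the induction (the paper runs it on $n - \multipl_2$) closes immediately. Your rule --- attach the leaf to the \emph{smallest} vertex of degree at least $2$ --- is precisely the choice that endangers the invariant, contrary to your claim that it ``protects'' it: starting from $(3,2,2,2,1)$ your rule produces $(3,2,2,1)$, and applying it once more would yield $(3,2,1)$, which is not even graphic since $3 > n'-1 = 2$. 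So the boundary case you defer with ``I would argue that\ldots'' is not a minor check; it is the step carrying the full weight of your proof, and it must fire every time the number of vertices of degree $\geq 2$ reaches exactly $3$ with the smallest of them equal to $2$.

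That said, your sketch of the boundary case is correct and can be completed: such a sequence has the form $(d_1, d_2, d_3, 1^{n-3})$ with $d_1 + d_2 + d_3 = n + 3$, and it is realized directly by a triangle on the three vertices of degree $\geq 2$ with $d_i - 2$ pendant leaves attached to the $i$-th of them (the leaf counts sum to exactly $n-3$). With that lemma spelled out, your recursion-plus-escape-hatch argument is a valid, if more laborious, proof. As written, however, the proposal misstates why the pivot choice works and leaves unproven the one case in which your construction genuinely differs from the paper's; the cleaner fix is simply to attach the leaf to the maximum-degree vertex, which makes the invariant preservation unconditional.
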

\begin{proof}
We prove the theorem by induction on $n - \multipl_2$.
The base case is a sequence $(2^n)$, for $n \geq 3$, for which
there is a realization of $d$ consisting of one cycle that contains all the vertices.
For the inductive step, assume that there is a unicyclic realization for sequences 
$d'$ such that $n' - \multipl'_2 < n - \multipl_2$.
Since $\sum_i d_i = 2n$ and $d \neq (2^n)$, it must be that $d_1 \geq 3$ and $d_n = 1$.
Moreover, $n > 3$, since $d_3 \geq 2$.
Let $d'$ be the sequence which is obtained by removing vertex $n$ and subtracting 1 from $d_1$.
Notice that $\sum_i d'_i = \sum_i d_i - 2 = 2n - 2 = 2n'$ and that $d'_3 \geq 2$.
Also, $n' - \multipl'_2  \leq n - \multipl_2 - 1$.
Hence, by the inductive hypothesis there is a unicyclic realization $G'$ of $d'$.
Obtain a realization $G$ of $d$ by adding the edge $(1,n)$.
(See \Cref{fig:leaf}.)
\end{proof}

\begin{figure}
\centering
\begin{tikzpicture}[scale=0.7]
\node [cloud, draw,cloud puffs=10,cloud puff arc=120, cloud ignores aspect, minimum width=4cm, minimum height=2cm, inner ysep=1em] {};

\begin{scope}[every node/.style={circle,fill=orange},
                       every edge/.style={draw=black,line width=1pt}]
\node (v1) at (1.5,0) {$d_1$};
\node (vn) at (4,0) {$1$};
\path (v1) [-] edge (vn);
\end{scope}
\end{tikzpicture}
\caption{Leaf addition step.}
\label{fig:leaf}
\end{figure}
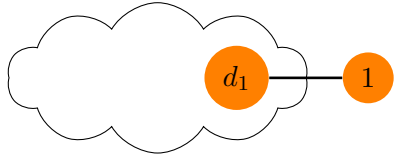

The above proof describes an algorithm that creates a cycle containing all vertices whose degree is larger than $1$.
Then, it adds degree-1 vertices as leaves to any vertex whose degree is greater than $2$.
Hence, \Cref{thm:forest} and the proof of \Cref{thm:pseudotree} imply the following.

\begin{corollary}
Let $d$ be a degree sequence such that $\sum_i d_i = 2n$ and $d_3 \geq 2$.
There is a linear time algorithm for computing a unicyclic realization of $d$
that contains a cycle of all vertices whose degree is greater than 1 (a.k.a. \emph{closed caterpillar}).
\end{corollary}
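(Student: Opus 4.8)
The plan is to observe that the recursive construction underlying the proof of \Cref{thm:pseudotree} already produces a closed caterpillar, and then to replace the recursion by a direct two-step construction that runs in linear time. First I would split the vertices into the \emph{heavy} set $H = \set{i : d_i \geq 2}$ and the \emph{leaf} set $L = \set{i : d_i = 1}$, writing $k = \abs{H}$ and noting that $\abs{L} = \multipl_1 = n - k$. The hypothesis $d_3 \geq 2$ gives $k \geq 3$, so a simple cycle on $H$ is well defined (a simple cycle needs at least three vertices).

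Next I would verify the degree budget that makes the caterpillar structure fit exactly. Since $\sum_i d_i = 2n$,
\[
\sum_{i \in H} d_i = 2n - \abs{L} = 2n - (n-k) = n + k ~,
\]
and therefore
\[
\sum_{i \in H} (d_i - 2) = (n+k) - 2k = n - k = \abs{L} ~.
\]
Thus, after reserving two units of degree at every heavy vertex for its two cycle edges, the total number of remaining free degree units over $H$ is exactly the number of leaves that must be placed.

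The construction then has two steps. I would first arrange the $k$ heavy vertices into an arbitrary simple cycle, consuming two units of degree at each. I would then scan the heavy vertices once, attaching to each vertex $i \in H$ exactly $d_i - 2 \geq 0$ leaves drawn in order from $L$. The budget identity above guarantees that this consumes all of $L$ and brings every heavy vertex to its target degree $d_i$ and every leaf to degree $1$. The resulting graph is connected, contains precisely the one cycle on $H$ (each leaf edge is a bridge), hence is unicyclic, and is a cycle with pendant vertices attached, i.e., a closed caterpillar.

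Finally, for the running time: building the cycle is $O(k)$, and the leaf distribution is a single synchronized pass over $H$ and $L$ that emits $O(n)$ edges, so the total time is $O(n)$; the input is already assumed sorted, so no extra sorting is required. The one point requiring care -- and the step I would treat as the main obstacle -- is the budget identity together with the observation that each heavy vertex keeps its two cycle edges and receives a nonnegative number $d_i - 2$ of leaves, so the greedy distribution never fails and never drives a heavy vertex below degree $2$. This is exactly the invariant that the decrement-the-maximum recursion of \Cref{thm:pseudotree} maintains implicitly, which is why the direct construction realizes the same closed caterpillar.
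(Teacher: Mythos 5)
Your proposal is correct and matches the paper's approach: the paper's corollary is proved by unrolling the induction of \Cref{thm:pseudotree} into exactly the algorithm you describe — build a cycle on all vertices of degree greater than $1$, then attach the degree-$1$ vertices as pendant leaves to vertices of degree exceeding $2$. Your explicit budget identity $\sum_{i \in H}(d_i - 2) = \multipl_1$ is just a direct, self-contained verification of the invariant that the paper's recursion (which always decrements the maximum degree, so no heavy vertex ever falls below degree $2$) maintains implicitly.
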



\subsection{Bi-Unicyclic Realization}

In the case of bi-unicyclic graph one needs to observe that there cannot be a realization if
$d = (2^n)$, where $n$ is odd.
In addition, if $d_4 = 1$, one cannot realize an even length cycle.
Hence, a realization algorithm should avoid such sequences.

\begin{theorem}
\label{thm:bipseudotree}
Let $d$ be a degree sequence such that $\sum_i d_i = 2n$.
The sequence $d$ has a bi-unicyclic realization if and only if
$d_4 \geq 2$ and $d \neq (2^n)$ for an odd $n$. 
\end{theorem}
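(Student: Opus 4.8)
The plan is to prove both directions; necessity is short and direct, while for sufficiency I would adapt the inductive leaf-removal argument of \Cref{thm:pseudotree}, taking care to keep the unique cycle even.

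For necessity I would argue each forbidden case by contraposition. If $d = (2^n)$ with $n$ odd, then any unicyclic realization is connected and $2$-regular, hence a single cycle $C_n$ of odd length, which is not bipartite. If $d_4 = 1$, then at most three vertices have degree at least $2$; since in any unicyclic graph every cycle vertex has degree at least $2$ and a bipartite (even) cycle requires at least four vertices, no even cycle can be formed, so no bi-unicyclic realization exists. Thus both conditions are necessary.

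For sufficiency, assume $d_4 \geq 2$ and that $d$ is not $(2^n)$ with $n$ odd, and induct on $n - \multipl_2$. The base case $d = (2^n)$ now has $n$ even and is realized by a single even cycle on all vertices. In the inductive step $d \neq (2^n)$ forces $d_1 \geq 3$ and $d_n = 1$, exactly as in \Cref{thm:pseudotree}; since $d_4 \geq 2$ but $d_n = 1$, position $n$ lies strictly beyond position $4$, so $n \geq 5$. Let $d'$ be obtained by deleting vertex $n$ and subtracting $1$ from $d_1$, so that $\sum_i d'_i = 2(n-1)$ and $n' - \multipl'_2 < n - \multipl_2$. One checks $d'_4 \geq 2$, since the four entries $d_1 - 1, d_2, d_3, d_4$ are all at least $2$ and $n' = n-1 \geq 4$. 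Provided $d'$ is not $(2^{n-1})$ with $n-1$ odd, the inductive hypothesis yields a bi-unicyclic realization $G'$ of $d'$, and adding the pendant edge $(1,n)$ gives a realization $G$ of $d$; this attaches a new leaf, hence creates no cycle and preserves bipartiteness, leaving $G$ bipartite and unicyclic.

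The sole obstruction to this reduction, and the crux of the argument, is the case $d' = (2^{n-1})$ with $n-1$ odd, which occurs precisely when $d = (3, 2^{n-2}, 1)$ with $n$ even; here the reduction would produce the forbidden odd $(2^{n-1})$, so this family must be realized directly. For such $d$, where $d_4 \geq 2$ forces $n \geq 6$, I would build an even $4$-cycle, designate one of its vertices as the degree-$3$ vertex, and attach to that vertex a path of $n-4$ vertices terminating in the degree-$1$ leaf. This uses exactly one degree-$3$ vertex, $n-2$ degree-$2$ vertices, and one leaf, and is manifestly bipartite and unicyclic. This direct construction is the bipartite analogue of the extra-bridge phenomenon noted in the introduction: to keep the unique cycle even, one sometimes cannot place all high-degree vertices on the cycle and must instead route a vertex onto a pendant path.
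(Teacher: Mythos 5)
Your proof is correct and follows essentially the same route as the paper: induction on $n - \multipl_2$ via leaf removal, with the family $d = (3,2^{n-2},1)$, $n$ even, singled out for a direct construction (exactly as the paper does, where it appears as a second base case) because the leaf-removal reduction would otherwise produce the forbidden odd sequence $(2^{n-1})$. The only cosmetic difference is your explicit realization of that exceptional family --- a $4$-cycle with a pendant path of $n-4$ vertices attached to the degree-$3$ vertex, instead of the paper's cycle of length $n-2$ with a pendant $P_2$ --- and both are valid.
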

\begin{proof}
If $d = (2^n)$, where $n$ is odd, then the only connected realization is a cycle.
There is no bi-unicyclic realization for $d$, since bipartite graphs cannot have odd cycles as sub-graphs.
The rest of the proof is similar to the proof of \Cref{thm:pseudotree}, and proceeds
by induction on $n - \multipl_2$.

There are two base cases.
The first case is when  $d = (2^n)$, where $n \geq 4$ and $n$ is even.
In this case, there is a realization of $d$ consisting of one even cycle that contains all the vertices.
The second case is when $d = (3,2^{n-2},1)$, and $n \geq 6$ is even.
In this case, $d$ can be realized by a cycle of length $n-2$, which is connected to a $P_2$, i.e., by 
a graph with the following edges set
\[
E = \set{ (i,i+1) : i = 1,\dots, n-3} \cup \set{(1,n-2), (1,n-1), (n-1,n)}
~.
\]
An example is given in See \Cref{fig:bipseudo} (for the case where $n = 6$).

For the inductive step, assume that the claim hold for sequences $d'$, 
such that $n' - \multipl'_2 < n - \multipl_2$.
Due to the induction base we have that $d \neq (2^n)$, for $n \geq 4$ and $n$ is even, 
and $d \neq (3,2^{n-2},1)$, and $n \geq 6$ is even.
Hence, it must be that either $n - \multipl_2 > 2$ or $d = (3,2^{n-2},1)$ and $n$ is odd.
In both cases, we have that $d_1 \geq 3$ and $d_n = 1$.
Moreover, $n \geq 5$, since $d_4 \geq 2$.
%
%
Let $d'$ be the sequence which is obtained by removing vertex $n$ and
subtracting 1 from $d_1$. Notice that $\sum_i d'_i = \sum_i d_i - 2 = 2n - 2 = 2n'$
and that $d'_4 \geq 2$. Also notice that $d' \neq (2^n)$, where $n$ is odd.
In addition, $n' - \multipl'_2 \leq n - \multipl_2 - 1$.
Hence, by the inductive hypothesis there is a bi-unicyclic realization $G'$ of $d'$.
We obtain a realization $G$ of $d$ by adding the edge $(1,n)$.
\end{proof}

\begin{figure}[t]
\centering
\begin{footnotesize}
\begin{tikzpicture}[scale=0.6]
\begin{scope}[every node/.style={fill=orange,shape=circle},
every edge/.style={draw=black,ultra thick}]
\node(A) at (0,0.5) {$2$};
\node(B) at (-2,2) {$2$};
\node(C) at (0,3.5) {$2$};
\node(D) at (2,2) {$3$};
\node[fill=red](F) at (4,2) {$2$};
\node(G) at (6,2) {$1$};
\path (A) [-] edge (B);
\path (B) [-] edge (C);
\path (C) [-] edge (D);
\path (D) [-] edge (A);
\path (D) [-] edge (F);
\path (F) [-] edge (G);
\end{scope}
\end{tikzpicture}
\end{footnotesize}
\caption{A bi-cactus realization of $d = (3,2^4,1)$.
Had the degree-1 vertex been attached to the degree-3 vertex, the residual sequence would have been $d' = (2^5)$.
The algorithm avoids this by placing a degree-2 vertex between the degree-1 vertex and the degree-3 vertex.}
\label{fig:bipseudo}
\end{figure}
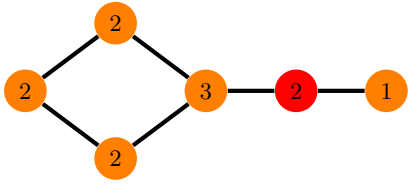

The following is implied by the proof of \Cref{thm:bipseudotree}.

\begin{corollary}
Let $d$ be a degree sequence such that $\sum_i d_i = 2n$, $d_4 \geq 2$, and $d \neq (2^n)$, where $n$ is odd.
There is a linear time algorithm that computes bi-unicyclic realization of $d$
that contains a cycle of all vertices whose degree is larger than 1,
maybe with the exception of one such vertex.
\end{corollary}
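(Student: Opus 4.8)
The plan is to read both the algorithm and the structural guarantee directly off the constructive proof of \Cref{thm:bipseudotree}, and then to argue that the construction can be carried out in linear time. That proof builds a realization bottom-up: it descends to one of two base cases and, on the way back up, repeatedly attaches a degree-$1$ vertex as a leaf via an edge $(1,n)$. The first observation I would make is that each leaf-addition step leaves the unique cycle of the current realization untouched, so the final cycle is exactly the cycle of the base-case graph. It therefore suffices to understand which vertices survive to the base case and how they are arranged there.

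Next I would show that a vertex is peeled off during the recursion only if its original degree is $1$. The key point is that the inductive step decrements $d_1$ only when $d_1 \ge 3$, so a residual degree never drops from $\ge 2$ to $1$; consequently a vertex can have residual degree $1$ at the moment it is removed only if it had degree $1$ from the start. Hence every vertex of degree $\ge 2$ survives to the base case. In the base case $(2^n)$ with even $n$, all surviving vertices lie on the even cycle, and in the base case $(3,2^{n-2},1)$ all but a single degree-$2$ vertex do, the exceptional vertex sitting on the pendant $P_2$ (cf.\ \Cref{fig:bipseudo}). This yields exactly the claimed structure: an even cycle through all vertices of degree greater than $1$, with at most one exception.

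For the running time, the naive recursion has $\Theta(n)$ stages and would require re-sorting, so instead I would give an equivalent one-shot construction. Let $k = n - \multipl_1$ be the number of vertices of degree at least $2$; note that $k \ge 4$ by the hypothesis $d_4 \ge 2$. If $k$ is even, I place all $k$ such vertices on an even cycle and attach the $\multipl_1$ leaves to the $\sum_{i \le k}(d_i - 2) = \multipl_1$ available stubs. If $k$ is odd, I pull the smallest such vertex $v$ (of degree $d_k$) off the cycle to make its length $k-1$ even, attach $v$ to one cycle stub together with $d_v - 1$ leaves, and hang the remaining $\multipl_1 - d_v + 1$ leaves on the other stubs. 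A single bucket pass over the degrees performs all of this in $O(n)$ time, and bipartiteness is immediate since the sole cycle is forced to be even and every other edge lies in a pendant tree.

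The step I expect to be the main obstacle is justifying the odd-$k$ case: one must know that the vertex pulled off can in fact be accommodated as a pendant subtree, which amounts to the inequality $d_k \le \multipl_1 + 1$ (equivalently, enough leaves and stubs are available). I would establish this by first noting that $\multipl_1 = 0$ forces $d = (2^n)$, since then $\sum_i (d_i - 2) = 0$ with all $d_i \ge 2$; this sequence is excluded for odd $n$ and falls into the even-$k$ branch otherwise, so in the odd-$k$ case $\multipl_1 \ge 1$. A short counting argument then finishes it: if $d_k \ge \multipl_1 + 2$ then all $k$ degree-$\ge 2$ vertices have degree $\ge \multipl_1+2$, giving $2k + \multipl_1 = \sum_{i\le k} d_i \ge k(\multipl_1+2)$, i.e.\ $\multipl_1(1-k) \ge 0$, which contradicts $\multipl_1 \ge 1$ because $k \ge 5$ in this case. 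Hence $d_k \le \multipl_1 + 1$, and the pendant attachment is always feasible.
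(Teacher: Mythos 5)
Your proposal is correct, and it is worth separating its two halves. The structural claim (an even cycle through all degree-greater-than-one vertices, with at most one exception) is argued exactly as the paper intends: the paper's entire justification is that the corollary ``is implied by the proof of \Cref{thm:bipseudotree}'', i.e., one unrolls the induction, notes that each inductive step only hangs a pendant edge (so the unique cycle is the base case's cycle), and notes that only original degree-$1$ vertices are ever peeled off since the decrement is applied only to a vertex of residual degree at least $3$ --- precisely your first two paragraphs. Where you genuinely diverge is the linear-time claim: instead of arguing that the peeling recursion itself can be implemented in $O(n)$ (which is possible with bucketing, since no full re-sorting is ever needed --- only the currently largest entry is decremented), you give a one-shot construction: an even cycle on the $k = n - \multipl_1$ vertices of degree at least $2$ when $k$ is even, or on $k-1$ of them when $k$ is odd, with the smallest such vertex $v$ hung off the cycle together with $d_v - 1$ leaves. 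You back the odd case with the explicit feasibility inequality $d_k \leq \multipl_1 + 1$, proved by the counting argument that $d_k \geq \multipl_1 + 2$ would force $2k + \multipl_1 = \sum_{i \leq k} d_i \geq k(\multipl_1 + 2)$, hence $\multipl_1(1-k) \geq 0$, contradicting $\multipl_1 \geq 1$ and $k \geq 5$ (and $\multipl_1 = 0$ is correctly dispatched to $d = (2^n)$, which is either excluded or lands in the even-$k$ branch). This buys a self-contained, induction-free proof of both existence and the stated structure under the corollary's hypotheses, plus a transparently single-pass $O(n)$ algorithm, whereas the paper's route gets the structure for free from the already-proved theorem but leaves the implementation of linearity implicit. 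All of your degree and stub counts check out ($\sum_{i\leq k}(d_i-2) = \multipl_1$ in the even case, and stubs $\multipl_1 - d_k + 2$ versus uses $1 + (\multipl_1 - d_k + 1)$ in the odd case), so the construction realizes $d$ as a bipartite unicyclic graph with the required cycle structure.
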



\section{Bounds on the Number of Edges} 
\label{sec:bounds}

In this section we provide upper bounds on the number of edges in cactus and
bi-cactus graphs.


\subsection{Bound on the Number of Edges in a Cactus}

Given a cactus graph $G$, let $c$ be the number of cycles in $G$
(not counting the outside face), let $t$ be the number of edges
in $G$ that belong to a cycle, and let $b$ be the number of bridges in $G$.
Notice that $m = b + t$.

The next observation is implied by the fact that a cactus graph is connected and planar.
More specifically, it is a direct implication of Euler's Formula.

\begin{observation}
\label{obs:euler}
Let $G$ be a cactus graph.
Then $m = n + c - 1$.
\end{observation}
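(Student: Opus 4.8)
The plan is to prove the identity $m = n + c - 1$ for a cactus graph $G$ via Euler's Formula, exploiting the fact that a cactus is connected and planar. First I would invoke Euler's Formula for a connected planar graph, which states $n - m + f = 2$, where $f$ is the number of faces in a planar embedding (including the outer/unbounded face). This gives $m = n + f - 2$, so the entire claim reduces to showing that the number of faces satisfies $f = c + 1$, where $c$ is the number of (interior) cycles of the cactus as defined just above the observation.

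The key step is therefore to establish the bijection between the bounded faces of the cactus and its cycles. The crux is the defining property of a cactus: every edge lies in at most one cycle, so distinct cycles share no edges (they may share at most a cut-vertex). In a planar embedding of such a graph, each cycle bounds exactly one bounded face, and each bounded face is bounded by exactly one cycle; bridges, belonging to no cycle, contribute no bounded face (they lie on the boundary of the outer face only). Hence the bounded faces are in one-to-one correspondence with the cycles, giving $f = c + 1$ once the single unbounded face is added back. Substituting into $m = n + f - 2$ yields $m = n + (c+1) - 2 = n + c - 1$, as desired.

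The main obstacle I expect is the careful justification that each cycle bounds precisely one bounded face and that no bounded face is bounded by more than one cycle — in other words, ruling out ``nested'' bounded regions or faces whose boundary involves several cycles. This is where the cactus structure does the work: because cycles are edge-disjoint and share at most a single cut-vertex, no edge separates two distinct bounded faces unless that edge is part of a cycle, and a bridge can never enclose a region. An alternative, cleaner route that sidesteps the face-counting subtlety is an induction on $c$ (or on the number of blocks) using the block-cutpoint tree structure: a cactus with no cycles is a tree, for which $m = n - 1 = n + 0 - 1$ is the standard tree identity; and attaching one more cycle block of length $\ell$ (sharing a single cut-vertex with the rest) adds $\ell$ new edges and $\ell - 1$ new vertices while increasing $c$ by one, so both sides of $m = n + c - 1$ grow in step. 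I would likely present the Euler-based argument as the headline proof since the observation explicitly cites Euler's Formula, but keep the inductive argument in reserve as the rigorous backbone in case the face-correspondence claim needs more than a one-line justification.
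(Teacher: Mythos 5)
Your headline argument is exactly the paper's proof: invoke Euler's formula $m = n + f - 2$ for the connected planar graph $G$ and observe that the bounded faces are precisely the cycles, so $f = c + 1$. The extra inductive argument over the block structure you keep in reserve is sound (and arguably more self-contained), but the paper is content with the one-line face-counting step, so your proposal matches it in both substance and level of detail.
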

\begin{proof}
Since $G$ is planar, given an embedding of $G$ in the plane,
Euler's formula implies that $m=n+f-2$, where $f$ is the number of faces.
As $G$ is a cactus graph, all faces in the embedding but the outside face are cycles, thus $f=c+1$.
\end{proof}

Next, we give an upper bound on the number of edges in a cactus graph.

\begin{lemma}
\label{lemma:cactusbound}
Let $G$ be a cactus graph.
Then, $m \leq \floor{\frac{3(n-1) - b}{2}}$.
\end{lemma}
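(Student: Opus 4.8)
The plan is to combine the two structural identities already available for a cactus graph with the observation that every cycle must have length at least three. Recall from \Cref{obs:euler} that $m = n + c - 1$, and recall from the text preceding this lemma that $m = b + t$, where $c$ is the number of cycles, $t$ is the number of cycle-edges, and $b$ is the number of bridges. My strategy is to eliminate the auxiliary quantities $c$ and $t$ in favor of $m$, $n$, and $b$, and then read off the bound.

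First I would bound $t$ from below by $3c$. This is the one genuinely graph-theoretic step: since $G$ is a cactus, distinct cycles share no edges, so the cycle-edges partition into the individual cycles, and $t = \sum_{\text{cycles}} \ell(\text{cycle})$. As $G$ is a simple graph, each cycle has length $\ell \geq 3$, hence $t \geq 3c$. Everything after this is linear algebra.

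Next I would substitute. From $m = n + c - 1$ we get $c = m - n + 1$, and from $m = b + t$ we get $t = m - b$. Feeding these into $t \geq 3c$ yields
\[
m - b \;\geq\; 3(m - n + 1)~,
\]
which rearranges to $2m \leq 3n - 3 - b = 3(n-1) - b$, i.e. $m \leq \half(3(n-1) - b)$. Since $m$ is an integer (being the number of edges, $m = \half \sum_i d_i$), I can take the floor on the right-hand side without weakening the inequality, giving $m \leq \floor{\frac{3(n-1) - b}{2}}$ as required.

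I do not expect any real obstacle here; the argument is a short counting estimate. The only point needing a moment's care is the justification that $t \geq 3c$ genuinely uses the cactus structure (edge-disjointness of cycles) together with simplicity (girth at least three), so that the cycle-edges are counted exactly once across all cycles. The rest is routine manipulation, and the floor step is immediate from integrality of $m$.
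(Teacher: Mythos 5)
Your proof is correct and is essentially the paper's own argument: both use the edge-disjointness of cycles in a cactus to get $3c \leq m - b$, combine this with Euler's formula $m = n + c - 1$ from \Cref{obs:euler}, and conclude $2m \leq 3(n-1) - b$, with the floor following from integrality of $m$.
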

\begin{proof}
Each edge is part of at most one cycle, so $c \leq (m-b)/3$.
\Cref{obs:euler} implies that
\[
m = n + c - 1 \leq n + (m-b)/3 - 1
~,
\]
or $2m  \leq 3n - 3 - b $.
\end{proof}

We now consider bridge-less cactus graphs and triangulated cactus graphs.

\begin{lemma}
\label{lemma:bridgelessbound}
Let $G$ be a bridge-less cactus graph. Then, $m \leq \floor{\frac{3(n-1)}{2}}$.
In particular, if $G$ is a triangulated cactus graph, then $n$ is odd and $m = 1.5(n-1)$.
\end{lemma}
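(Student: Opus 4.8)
The plan is to treat the two assertions separately, with the first being an immediate specialization of the already-established cactus bound. Since $G$ is bridge-less, we have $b = 0$, so substituting into \Cref{lemma:cactusbound} yields $m \leq \floor{\frac{3(n-1) - 0}{2}} = \floor{\frac{3(n-1)}{2}}$ directly. No further work is needed for this part.

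For the triangulated case, the key structural observation I would establish first is that the $c$ triangles \emph{partition} the edge set of $G$. This follows from combining the two defining properties: because $G$ is a cactus, every edge lies in at most one cycle; and because $G$ is triangulated, every edge lies in a cycle, and each such cycle is a triangle. Hence every edge lies in exactly one triangle, and each triangle contributes exactly three edges, giving $t = 3c$. Since every edge belongs to a cycle we also have $b = 0$, so $m = b + t = 3c$.

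I would then combine this count with \Cref{obs:euler}, which asserts $m = n + c - 1$. Equating $3c = n + c - 1$ gives $2c = n - 1$, hence $c = \frac{n-1}{2}$. Because $c$ is a (positive) integer, $n-1$ must be even, forcing $n$ to be odd; and the number of edges is $m = 3c = \frac{3(n-1)}{2} = 1.5(n-1)$, as claimed.

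I do not anticipate a genuine obstacle here, since everything reduces to edge-counting once the partition claim is in place. The one step that merits an explicit sentence rather than being taken for granted is precisely that the triangles partition $E$: it is where both hypotheses (cactus and triangulated) are used jointly, and it is what simultaneously delivers $b = 0$ and $t = 3c$. Everything else is routine arithmetic built on top of \Cref{obs:euler}.
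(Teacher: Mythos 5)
Your proposal is correct and follows essentially the same route as the paper: the first bound is obtained by setting $b=0$ in \Cref{lemma:cactusbound}, and the triangulated case rests on the observation that $m = t = 3c$ combined with the Euler-formula count $m = n + c - 1$. In fact your write-up is slightly more careful than the paper's terse proof, since you derive that $n$ is odd from the integrality of $c = \frac{n-1}{2}$, a point the paper asserts without explicit justification.
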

\begin{proof}
The first bound is a direct implication of \Cref{lemma:cactusbound}.
Assume that $G$ is a triangulated cactus. Then $n$ is odd and $m = t = 3c$,
and thus by \Cref{lemma:cactusbound} we have that $m = 1.5(n-1)$.
\end{proof}

Let $G$ be a cactus graph and let $d=\deg(G)$ .
Recall that $\multipl_1$ is the number of $1$'s in $d$, and that $\multipl_{odd}$ is the number of odd integers that are larger than 1 in $d$.
Define the \emph{bridge parameter} of a sequence $d$ as follows:
\[
\beta
\triangleq \max \set{\multipl_1 , \half(\multipl_1 + \multipl_{odd})  }
~.
\]
Note that $\beta$ is an integer since $\multipl_1 + \multipl_{odd}$ is even.
For example, the cactus graph in \Cref{fig:cactus} has $\multipl_1 = 3$,
$\multipl_{odd} = \multipl_3 = 5$,
and $\beta = \half(\multipl_1 + \multipl_{odd}) = 4$.

We show that $\beta$ serves as a lower bound on the number of bridges in a cactus.

\begin{lemma}
\label{lemma:beta}
Let $G$ be a cactus graph, where $n > 2$. Then, $b \geq \beta$.
\end{lemma}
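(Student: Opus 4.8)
The plan is to connect the parity of each vertex's degree to the number of bridges incident to it. First I would record the local structure of a cactus at a vertex: since distinct cycles share no edge, every edge incident to a vertex $v$ either lies on one of the cycles passing through $v$ (and each such cycle contributes exactly two edges at $v$) or is a bridge at $v$. Writing $c_v$ for the number of cycles through $v$ and $b_v$ for the number of bridges incident to $v$, this yields the decomposition $\deg_G(v) = 2c_v + b_v$. The immediate consequence I will use repeatedly is that $\deg_G(v)$ is odd if and only if $b_v$ is odd; in particular, every vertex of odd degree satisfies $b_v \geq 1$.

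From here the two halves of the bound $\beta = \max\set{\multipl_1, \half(\multipl_1 + \multipl_{odd})}$ follow by separate counting arguments. For $b \geq \multipl_1$, I would argue that each degree-$1$ vertex (leaf) is incident to a single edge, which must be a bridge since an edge on a cycle forces degree at least $2$; because $n > 2$ and $G$ is connected, the other endpoint of that bridge is not itself a leaf, so the map sending each leaf to its incident bridge is injective, giving $b \geq \multipl_1$. For $b \geq \half(\multipl_1 + \multipl_{odd})$, I would apply the handshake identity over bridge edges, $\sum_v b_v = 2b$, and observe that the vertices of odd degree are precisely the leaves (counted by $\multipl_1$) together with the vertices of odd degree at least $3$ (counted by $\multipl_{odd}$), a total of $\multipl_1 + \multipl_{odd}$ vertices, each with $b_v \geq 1$ by the parity observation. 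Restricting the sum to these vertices gives $2b = \sum_v b_v \geq \multipl_1 + \multipl_{odd}$, i.e.\ $b \geq \half(\multipl_1 + \multipl_{odd})$. Taking the maximum of the two lower bounds proves $b \geq \beta$.

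I expect the arithmetic to be routine; the only genuinely delicate point is the precise role of the hypothesis $n > 2$, which is exactly what rules out a bridge joining two leaves. This matters because the claim actually fails at $n = 2$: the graph $K_2$ has $d = (1,1)$, so $\multipl_1 = 2$ and $\beta = 2$, yet $b = 1$. The injectivity of the leaf-to-bridge map in the first bound is therefore where the assumption $n > 2$ must be invoked, and I would make sure to flag it explicitly rather than let it pass silently.
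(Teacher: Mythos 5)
Your proof is correct and takes essentially the same route as the paper's: both lower bounds arise from the same two observations, namely that every odd-degree vertex is incident to at least one bridge (which, counted over the two endpoints of each bridge, gives $b \geq \half(\multipl_1 + \multipl_{odd})$), and that each leaf's unique edge is a bridge whose other endpoint, by connectivity and $n > 2$, has degree greater than $1$ (which gives $b \geq \multipl_1$). Your write-up is simply more explicit than the paper's, spelling out the decomposition $\deg_G(v) = 2c_v + b_v$, the handshake count over bridges, and the precise role of $n > 2$ that the paper leaves implicit.
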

\begin{proof}
Any odd degree vertex must be connected to at least one bridge.
Hence, $b \geq \frac{\multipl_1 + \multipl_{odd}}{2}$.
In particular, the edge which is attached to a degree-1 vertex (a leaf) must be bridge,
and due to connectivity it must be connected to a vertex whose degree is greater than 1.
Thus, $b \geq \multipl_1$.
%
The lemma follows.
\end{proof}

\Cref{lemma:cactusbound,lemma:beta} imply the following
bound the number of edges in a cactus graph.
We note that this bound is implicit in~\cite{Rao81cactus}.

\begin{theorem}
\label{thm:beta}
Let $G$ be a cactus graph and $d=\deg(G)$.
Then
\(
m \leq \floor{\frac{3(n-1) - \beta}{2}}
\).
\end{theorem}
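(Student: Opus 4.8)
The plan is to combine the two bounds already established in \Cref{lemma:cactusbound,lemma:beta}. The statement of \Cref{thm:beta} asserts $m \leq \floor{\frac{3(n-1) - \beta}{2}}$, and I have at my disposal the bound $m \leq \floor{\frac{3(n-1) - b}{2}}$ from \Cref{lemma:cactusbound} together with the inequality $b \geq \beta$ from \Cref{lemma:beta}. The natural approach is to simply chain these: since the number of bridges $b$ is at least $\beta$, replacing $b$ by the smaller quantity $\beta$ in the numerator can only increase (or keep equal) the value of the bound, so the bound with $\beta$ is weaker and hence still valid.

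First I would invoke \Cref{lemma:cactusbound} to write $m \leq \floor{\frac{3(n-1) - b}{2}}$. Next I would apply \Cref{lemma:beta}, which gives $b \geq \beta$ (valid since we may assume $n > 2$; the small cases can be checked separately or are excluded by context). The key monotonicity observation is that the map $x \mapsto \floor{\frac{3(n-1) - x}{2}}$ is non-increasing in $x$, so from $b \geq \beta$ we obtain $\floor{\frac{3(n-1) - b}{2}} \leq \floor{\frac{3(n-1) - \beta}{2}}$. Combining the two displayed inequalities yields $m \leq \floor{\frac{3(n-1) - \beta}{2}}$, which is exactly the claim.

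There is no real obstacle here, as the theorem is explicitly flagged in the text as being \enquote{implied by \Cref{lemma:cactusbound,lemma:beta}}; the only point requiring a moment's care is the interaction between the floor function and the substitution. Since $\beta$ is an integer (as noted after its definition, because $\multipl_1 + \multipl_{odd}$ is even), decreasing the numerator argument from $b$ to $\beta$ by an integer amount behaves cleanly with respect to the floor, and monotonicity of the floor finishes the argument. Thus the proof is a short two-line deduction, and the main conceptual work was already done in proving the lemma that $\beta$ lower-bounds the bridge count.
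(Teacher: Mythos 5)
Your proof is correct and is exactly the argument the paper intends: the paper states Theorem~\ref{thm:beta} as a direct consequence of \Cref{lemma:cactusbound,lemma:beta}, namely chaining $m \leq \floor{\frac{3(n-1)-b}{2}}$ with $b \geq \beta$ via monotonicity of $x \mapsto \floor{\frac{3(n-1)-x}{2}}$. (One small remark: the integrality of $\beta$ is not actually needed, since the floor function is monotone over the reals; and, as you note, the $n>2$ hypothesis of \Cref{lemma:beta} is implicitly assumed.)
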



\subsection{Bound on the Number of Edges in a Bi-Cactus}

An obvious requirement from a bipartite graph is that all cycles have even length.

\begin{observation}
\label{obs:bicactus}
A cactus graph $G$ is bipartite if and only if all its cycles are of even length.
In particular, each cycle contains at least 4 edges.
Moreover, if $G$ is bridge-less, then $m = \half \sum_i d_i$ must be even.
\end{observation}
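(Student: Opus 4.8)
The plan is to reduce all three claims to the standard characterization that a graph is bipartite if and only if it contains no odd cycle, together with the structural fact that in a cactus the only simple cycles are precisely the cycle blocks. First I would observe that any cycle, being $2$-connected, must lie inside a single block of $G$; since every block of a cactus is either a bridge or a simple cycle, each cycle of $G$ coincides with one of its cycle blocks. Consequently the hypothesis ``all cycles of $G$ are even'' is equivalent to ``$G$ contains no odd cycle,'' which by the classical criterion is equivalent to $G$ being bipartite. This establishes the \emph{if and only if}.

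For the ``in particular'' clause, I would use that $G$ is simple and therefore has no cycle of length $2$; combined with the fact that every cycle has even length (and length $3$ is odd, hence excluded), the shortest admissible cycle has length $4$, so each cycle contains at least $4$ edges.

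For the final claim, suppose $G$ is bridge-less. As already noted in \Cref{sec:prelim}, in a bridge-less cactus every edge belongs to exactly one cycle, so $b = 0$ and the edge set of $G$ partitions into the edge sets of its cycles; thus $m = t = \sum_{C} \abs{E(C)}$, the sum taken over the cycles $C$ of $G$. Since $G$ is bipartite, each $\abs{E(C)}$ is even, so $m = \half \sum_i d_i$ is a sum of even numbers and is therefore even.

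I expect no serious obstacle here; the only point requiring a little care is justifying that the cycles of a cactus are exactly its cycle blocks, so that the even-cycle hypothesis lines up with the no-odd-cycle criterion. If one prefers a self-contained argument for the backward direction in place of invoking the classical theorem, one could instead give a constructive $2$-coloring by processing the blocks of $G$ along its block-cutpoint tree: each bridge or even-cycle block is bipartite, and when attaching a new block at an already-colored cut vertex one simply flips the block's $2$-coloring, if necessary, so that it agrees with the color of that cut vertex; since distinct blocks share at most one vertex, no conflict can arise and the colorings combine into a global proper $2$-coloring.
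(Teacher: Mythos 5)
Your proof is correct and complete. The paper states this as an Observation with no proof at all --- the claims are treated as immediate consequences of the classical fact that a graph is bipartite if and only if it has no odd cycle, together with the remark in \Cref{sec:prelim} that in a bridge-less cactus every edge lies in exactly one cycle --- and your argument is precisely this standard justification (the detour through blocks is harmless but unnecessary, since ``all cycles even'' is literally ``no odd cycle'' for any graph), so it matches the paper's implicit reasoning.
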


Another requirement is that the existence of a cycle requires at $n-\multipl_1 \geq 4$.

\begin{observation}
\label{obs:bicactus-d4}
Let $G$ be a bi-cactus graph such that $m \geq n$. Then $d_4 \geq 2$.
\end{observation}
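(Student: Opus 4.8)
The plan is to show that the hypothesis $m \geq n$ forces $G$ to contain at least one cycle, and that in the bipartite setting any such cycle already supplies four vertices of degree at least $2$. First I would invoke \Cref{obs:euler}, which states $m = n + c - 1$ for any cactus, where $c$ is the number of cycles. Combined with the assumption $m \geq n$, this gives $n + c - 1 \geq n$, hence $c \geq 1$, so $G$ has at least one cycle $C$.

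Next I would bring in the bipartiteness. By \Cref{obs:bicactus}, every cycle of a bi-cactus has even length, and in particular contains at least $4$ edges; consequently $C$ has at least four edges and therefore at least four distinct vertices. Since each vertex lying on a cycle is incident to the two cycle-edges meeting at it, every such vertex has degree at least $2$ in $G$.

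Finally, because the degree sequence is arranged in non-increasing order, the existence of at least four vertices of degree $\geq 2$ is exactly the statement $d_4 \geq 2$, which finishes the argument.

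As for the main obstacle: there is essentially no hard step here, since the whole content is just packaging the two earlier observations. The only point requiring a moment's care is confirming that the four degree-$\geq 2$ vertices are genuinely distinct, so that they truly account for the top four entries $d_1,\dots,d_4$ of the sorted sequence; this is immediate because they are four distinct vertices of a single cycle, and no further counting subtlety arises.
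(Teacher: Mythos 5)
Your proof is correct and matches exactly the reasoning the paper leaves implicit (the observation is stated without proof, preceded only by the remark that a cycle's existence forces $n - \omega_1 \geq 4$): connectivity plus $m \geq n$ yields a cycle via \Cref{obs:euler}, bipartiteness forces that cycle to have at least four vertices via \Cref{obs:bicactus}, and those four vertices each have degree at least $2$, giving $d_4 \geq 2$.
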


The following lemma is the version of \Cref{lemma:cactusbound} for bi-cacti.
Its proof is somewhat more complicated.

\begin{lemma}
\label{lemma:bicactusbound}
Let $G$ be a bi-cactus graph, where $n \geq 4$.
Then, $m \leq 2\floor{\frac{2(n - 1 - b)}{3}} + b$.
\end{lemma}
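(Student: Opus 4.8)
My goal is to bound $m = b + t$ where $t$ is the number of cycle edges, by controlling the number of cycles $c$ and invoking the Euler relation $m = n + c - 1$ from Observation~\ref{obs:euler}. The key difference from the cactus case (Lemma~\ref{lemma:cactusbound}) is that bipartiteness forces every cycle to have length at least $4$ (Observation~\ref{obs:bicactus}), rather than length at least $3$. So instead of $c \leq (m-b)/3$, I get the stronger constraint $c \leq (m-b)/4$ on the number of cycles. Let me see what that alone gives, and where I need a more careful count.

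\medskip

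Here is the approach. First I would write $t = m - b$ for the number of cycle edges, and note that each cycle uses at least $4$ of these edges, so $c \leq (m-b)/4$. Plugging into Euler's formula,
\[
m = n + c - 1 \leq n - 1 + \frac{m - b}{4},
\]
which rearranges to $3m \leq 4(n-1) - b + 3b$, i.e. $3m \leq 4(n-1) + \ldots$; carrying out this routine algebra gives a bound of the shape $m \leq \frac{4(n-1) - b}{3} + (\text{correction})$. The issue is that this naive substitution does not immediately produce the stated closed form $2\floor{\frac{2(n-1-b)}{3}} + b$, so I expect the real work is in bookkeeping the cycle edges versus bridge edges correctly. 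The cleaner route is to separate the $n-1-b$ \emph{tree edges} of the block-cutpoint structure (or equivalently track $t$ and $b$ independently) and to use that each cycle of length $\ell \geq 4$ contributes $\ell$ edges but only adds $1$ to the face/cycle count. Since a bi-cactus cycle has even length $\ell \in \set{4,6,8,\ldots}$, the \emph{most edge-efficient} cycles (maximizing $c$, hence $m$, for a fixed edge budget) are $4$-cycles, each contributing $4$ edges and $2$ to $m - (n-1)$ beyond what a spanning tree would. I would make this precise by arguing that to maximize $m$ one should use only $4$-cycles, and then count directly.

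\medskip

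Concretely, I would argue as follows. Removing the $b$ bridges leaves a bridge-less bi-cactus on the cycle edges; within the $2$-edge-connected (cyclic) part, each cycle contributes its length to $t$ and $1$ to $c$. The relation $m = n + c - 1$ says $t = n - 1 - b + c$, so each cycle ``pays'' $\ell$ edges and ``earns'' only the single unit $+1$ in $c$; with $\ell \geq 4$ this means every cycle uses $\ell - 1 \geq 3$ edges \emph{in excess} of a tree. Thus the total excess $t - (\text{tree edges among cycle vertices})$ is at least $3c$, giving $c \leq t/4$ as before but now I can pin down the floor. The stated bound $m \leq 2\floor{\frac{2(n-1-b)}{3}} + b$ arises because, after fixing the $b$ bridges, we have $n - 1 - b$ available ``tree slots'' and each $4$-cycle consumes $3$ of them while adding $4$ to $t$ (net $+2$ to $m$ over the $n-1$ baseline per cycle after accounting); so the number of $4$-cycles is at most $\floor{\frac{2(n-1-b)}{3}}$-type quantity, and each contributes $2$ to $m$ beyond a bridge, plus the $b$ bridges themselves.

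\medskip

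\textbf{Main obstacle.} The genuine difficulty is matching the \emph{floor} in the target expression exactly, rather than just getting the right leading constant $\tfrac{4}{3}$. The inequality $c \leq (m-b)/4$ is easy; converting it into $m \leq 2\floor{\frac{2(n-1-b)}{3}} + b$ requires an integrality argument showing that the extremal configuration uses only $4$-cycles and that the leftover tree edges ($n-1-b \bmod 3$) cannot be upgraded into an additional cycle. I expect the proof to set $t = m - b$, derive $4c \leq t$ and $t = n - 1 - b + c$ simultaneously, solve to get $c \leq \frac{2(n-1-b)}{3}$ with $c$ an integer hence $c \leq \floor{\frac{2(n-1-b)}{3}}$, and then substitute back via $m = n - 1 + c$ to reach $m \leq n - 1 + \floor{\frac{2(n-1-b)}{3}}$; the remaining step is the identity $n - 1 + \floor{\frac{2(n-1-b)}{3}} = 2\floor{\frac{2(n-1-b)}{3}} + b$ under the constraint that it is achievable, which I would verify by writing $n - 1 = (n-1-b) + b$ and checking the arithmetic on the floor. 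Getting that final rewriting to hold exactly (possibly requiring a parity/divisibility case split on $n - 1 - b \bmod 3$) is where I anticipate spending the most care.
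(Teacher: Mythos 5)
There is a genuine gap, and it sits exactly where you anticipated spending the most care. Your derivation uses only that every cycle has length at least $4$: combining $4c \leq t$ with Euler's relation $t = n-1-b+c$ gives $3c \leq n-1-b$, hence $c \leq \floor{\frac{n-1-b}{3}}$ and $m \leq n-1+\floor{\frac{n-1-b}{3}}$ (note your intermediate claim $c \leq \frac{2(n-1-b)}{3}$ is an algebra slip; the correct constant is $\frac{1}{3}$). The final identity you propose to verify is false in one residue class: writing $x = n-1-b$, one has $x + \floor{x/3} = 2\floor{2x/3}$ when $x \equiv 0,2 \pmod 3$, but $x + \floor{x/3} = 2\floor{2x/3} + 1$ when $x \equiv 1 \pmod 3$. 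So in that case your route proves a bound that exceeds the lemma's bound by $1$, and no amount of rearrangement will close this, because the statement is simply not true for graphs of girth $\geq 4$: take $n=8$, $b=0$, and a $4$-cycle sharing one vertex with a $5$-cycle. Then $m = 9$, while the claimed bound would give $2\floor{\frac{2\cdot 7}{3}} = 8$. Any proof that uses only ``each cycle has at least $4$ edges'' must also apply to this graph, so it cannot establish the lemma.

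The missing idea is to use the full strength of bipartiteness: all cycle lengths are \emph{even}. Peeling blocks off the block-cutpoint tree (as the paper does), each cycle of length $k_i$ consumes $k_i - 1$ vertices, so $\sum_i (k_i-1) = n-1-b$ where each $k_i - 1$ is an \emph{odd} integer $\geq 3$. Maximizing $c$ subject to this is not just a matter of dividing by $3$: when $n-1-b \equiv 1 \pmod 3$, a parity count (a sum of $c$ odd numbers has the parity of $c$) shows you cannot attain $c = \floor{\frac{n-1-b}{3}}$; the leftover vertex cannot be absorbed by enlarging one $4$-cycle to a $5$-cycle, only to a $6$- or $8$-cycle, which forces $c \leq \floor{\frac{n-1-b}{3}} - 1$. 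This is exactly the paper's bookkeeping with the ``last cycle'' of size $k' \in \set{4,6,8}$ determined by $n-1-b \bmod 3$, and it is precisely what converts your (correct but too weak) bound $n-1+\floor{\frac{n-1-b}{3}}$ into the stated $2\floor{\frac{2(n-1-b)}{3}} + b$.
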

\begin{proof}
Since $G$ is a bi-cactus each edge is part of at most one cycle, and
by \Cref{obs:bicactus} each cycle contains at least 4 edges.
It follows that $c \leq m/4$. However, one may obtain a tighter bound.
Consider the block cut-point tree $\BC(G)$, where the root is a cycle.
We remove blocks from $G$ according to $\BC(G)$ in a bottom up manner.
When one removes a bridge, both the number of edges and the number of
vertices in $G$ are reduced by $1$. When one removes a cycle of size $k$ from $G$, $k$ edges and $k-1$
vertices are removed, where $k$ is even and thus $k \geq 4$.
The highest ratio between the number of removed edges and the number of removed vertices
is obtained when $k = 4$, i.e., a ratio of $4/3$.
Assume that one is able to obtain this ratio of 4/3 for all cycle edges.
Then, the last cycle may be of size $4$, $6$ or $8$, depending on the remainder of dividing $n-b$ by 3.
Hence we get this ratio of 4 edges per 3 vertices from $n - b - k'$ vertices,
where $k'$ is the size of the last cycle.
Hence, the highest number of cycles is $(n - b - k')/3 + 1 = (n - b - k' + 3)/3$.

By~\Cref{obs:euler}, it follows that
\[
m = n + c - 1 \leq n + (n - b - k')/3
~.
\]
Let $n - 1 - b = 3q - r$, where $q = \ceil{\frac{n - 1 - b}{3}}$ and $r = 3q - (n - 1 - b)$.
Observe that $k' - 4 = 2r$.
Hence,
\begin{align*}
3m
& \leq~ 4(n - 1) - 2r - b \\
&~=~ 4(n-1) - 2 \paren{ 3\ceil{\frac{n - 1 - b}{3}} - (n - 1 - b) } - b \\
&=~ 6(n - 1 - b) - 6 \ceil{\frac{n - 1 - b}{3}} + 3b \\
&~=~ 6 \floor{\frac{2(n - 1 - b)}{3}} + 3b
~,
\end{align*}
where the last equality is due to $x = \floor{\frac{2x}{3}} + \ceil{\frac{x}{3}}$.
The lemma follows.
\end{proof}

In a bridge-less bi-cactus $b = 0$, and thus we obtain the following lemma.

\begin{lemma}
\label{lemma:bipartitebridgelessbound}
Let $G$ be a bridge-less bi-cactus,
where $n \geq 4$.
Then, $m \leq 2\floor{\frac{2(n - 1)}{3}}$.
\end{lemma}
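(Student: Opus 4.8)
The plan is to obtain Lemma~\ref{lemma:bipartitebridgelessbound} as an immediate specialization of the preceding Lemma~\ref{lemma:bicactusbound}, which already handles general bi-cacti with an explicit bridge count $b$. Since a bridge-less bi-cactus has no bridges, I would simply substitute $b = 0$ into the bound $m \leq 2\floor{\frac{2(n - 1 - b)}{3}} + b$ from Lemma~\ref{lemma:bicactusbound}. This yields $m \leq 2\floor{\frac{2(n - 1)}{3}} + 0 = 2\floor{\frac{2(n - 1)}{3}}$, which is exactly the claimed inequality. The hypothesis $n \geq 4$ is inherited directly, as it is needed to guarantee that a cycle can exist at all (a bipartite cycle requires at least $4$ vertices, by \Cref{obs:bicactus}).

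The only point requiring a moment's justification is that a bridge-less bi-cactus genuinely satisfies $b = 0$, but this is immediate from the definition: a cactus is called bridge-less precisely when it has no bridges, so the bridge count is zero by hypothesis. No new combinatorial argument is needed, since all the work—the block cut-point tree peeling argument that bounds the number of cycles, the ratio analysis showing that size-$4$ cycles are extremal, and the application of \Cref{obs:euler}—has already been carried out in the proof of Lemma~\ref{lemma:bicactusbound}.

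Consequently there is no real obstacle here; the statement is a corollary in all but name. The one thing I would double-check is that the floor expression simplifies correctly and that no edge case in the parity or remainder analysis of Lemma~\ref{lemma:bicactusbound} breaks when $b = 0$. In particular, setting $b = 0$ makes $n - 1 - b = n - 1$, and the derivation $3m \leq 6\floor{\frac{2(n-1)}{3}}$ goes through verbatim with $b$ deleted from every line, so dividing by $3$ recovers the stated bound. Thus the proof reduces to a single substitution, and I would present it in one or two sentences invoking Lemma~\ref{lemma:bicactusbound} directly.
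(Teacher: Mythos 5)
Your proposal is correct and matches the paper exactly: the paper derives this lemma by the single observation that a bridge-less bi-cactus has $b = 0$, then substitutes into the bound of \Cref{lemma:bicactusbound}. Nothing further is needed.
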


The next example shows that one cannot replace $b$ with $\beta$
in the bound of \Cref{lemma:bicactusbound} as was done in the cactus case
(see \Cref{thm:beta}).

\begin{example}
\label{example:correction}
Consider the sequence $d = (4,3,2^6,1)$.
If we replace $b$ with $\beta$ in the bound of \Cref{lemma:bicactusbound},
we get an upper bound of
\[
m
\leq 2\floor{\frac{2(n - 1 - \beta)}{3}} + \beta
=    2\floor{\frac{2(9 - 1 - 1)}{3}} + 1
=    9
~.
\]
However, $d$ can be realized using $10$ edges as depicted in \Cref{fig:problem}.
Notice that there is an even degree vertex which is adjacent to two bridges.
In the sequel we show that one such \enquote{correction} for changing $b$ to $\beta$ in the bound of \Cref{lemma:bicactusbound} is enough.
\end{example}

\begin{figure}[t]
\centering
\begin{footnotesize}
\begin{tikzpicture}[scale=0.5]
\begin{scope}[every node/.style={fill=orange,shape=circle},
every edge/.style={draw=black,ultra thick}]

\node(A) at (0,0.5) {$2$};
\node(B) at (-2,2) {$2$};
\node(C) at (0,3.5) {$2$};
\node(D) at (2,2) {$4$};
\node(E) at (4,0.5) {$2$};
\node(F) at (6,2) {$3$};
\node(G) at (4,3.5) {$2$};
\node[fill=red](I) at (8,2) {$2$};
\node(J) at (10,2) {$1$};

\path (A) [-] edge (B);
\path (B) [-] edge (C);
\path (C) [-] edge (D);
\path (D) [-] edge (A);
\path (D) [-] edge (E);
\path (E) [-] edge (F);
\path (F) [-] edge (G);
\path (G) [-] edge (D);
\path (F) [-] edge (I);
\path (I) [-] edge (J);
\end{scope}
\end{tikzpicture}
\end{footnotesize}
\caption{A bi-cactus realization of $d = (4,3,2^6,1)$.}
\label{fig:problem}
\end{figure}
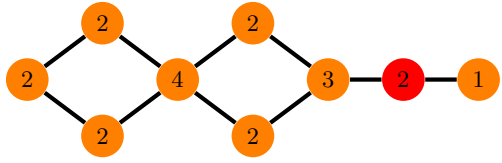

The following two technical lemmas are required for obtaining a 
bound on the number of edges in bi-cactus graphs. 

\begin{lemma}
\label{lemma:bicactus-technical}
$\floor{\frac{4(n-1) - \beta}{3}}
= \max \set{2\floor{\frac{2(n - 1 - \beta)}{3}} + \beta , 2\floor{\frac{2(n - 1 - (\beta + 1))}{3}} + (\beta +1) }$
\end{lemma}
\begin{proof}
Let $n - 1 - \beta = 3q - r$, where $q = \ceil{\frac{n - 1 - \beta}{3}}$ and $r = 3q - (n - 1 - \beta)$.
Observe that $r\in \{0,1,2\}$ by definition.
We have that
\[
2\floor{\frac{2(n - 1 - \beta)}{3}} + \beta
= 2\floor{\frac{2 (3q - r)}{3}} + \beta
= 4q + 2\floor{\frac{-2r}{3}} + \beta
= 4q - 2r + \beta
~,
\]
while
\begin{align*}
2\floor{\frac{2(n - 1 - (\beta+1))}{3}} + (\beta + 1)
& = 2\floor{\frac{2(3q - r - 1)}{3}} + \beta + 1 \\
& = 4q + 2\floor{\frac{-2r - 2}{3}} + \beta + 1 \\
& =
\begin{cases}
4q - 1 + \beta & r = 0 ~, \\
4q - 3 + \beta & r = 1, 2 ~.
\end{cases}
\end{align*}

If $r=0$, then the maximum is
\[
4q + \beta
= \frac{4}{3} (n - 1 - \beta) + \beta
= \frac{4(n-1) - \beta}{3}
= \floor{\frac{4(n-1) - \beta}{3}}
~.
\]

If $r=1$, then the maximum is
\[
4q + \beta - 2
= \frac{4}{3} (n - 1 - \beta + 1) + \beta - 2
= \frac{4(n-1) - \beta - 2}{3}
= \floor{\frac{4(n-1) - \beta}{3}}
~.
\]

If $r=2$, then the maximum is
\[
4q + \beta - 3
= \frac{4}{3} (n - 1 - \beta + 2) + \beta - 3
= \frac{4(n-1) - \beta - 1}{3}
= \floor{\frac{4(n-1) - \beta}{3}}
~. \qedhere
\]
\end{proof}

\begin{lemma}
\label{lemma:bicactus-technical2}
\(
2\floor{\frac{2(n - 1 - \beta)}{3}} + \beta  \geq 2\floor{\frac{2(n - 1 - (\beta + 2))}{3}} + (\beta +2)
\).
\end{lemma}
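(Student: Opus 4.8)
The plan is to mirror the proof of Lemma~\ref{lemma:bicactus-technical}, reusing its substitution so that both sides are expressed as explicit linear functions of the same parameters. Writing $n - 1 - \beta = 3q - r$, where $q = \ceil{\frac{n-1-\beta}{3}}$ and $r \in \set{0,1,2}$, the left-hand side was already shown there to equal $4q - 2r + \beta$. Hence the only new computation is to expand the right-hand side under the same substitution. I would substitute $n - 1 - (\beta+2) = 3q - r - 2$, giving $2(n-1-(\beta+2)) = 6q - 2r - 4$ and therefore $\floor{\frac{6q - 2r - 4}{3}} = 2q + \floor{\frac{-2r-4}{3}}$.

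Evaluating the floor of the negative quantity $\floor{\frac{-2r-4}{3}}$ for each residue yields $-2, -2, -3$ for $r = 0, 1, 2$ respectively, so the right-hand side equals $4q - 2 + \beta$ when $r \in \set{0,1}$ and $4q - 4 + \beta$ when $r = 2$. Comparing with the left-hand side $4q - 2r + \beta$ case by case then finishes the argument: for $r = 0$ the inequality reads $4q + \beta \geq 4q - 2 + \beta$ (a strict gap of $2$), while for $r \in \set{1,2}$ it holds with equality. The only point that demands care is that $\floor{\frac{-2r-4}{3}}$ must be computed exactly for each residue rather than approximated; apart from that the computation is entirely routine, and there is no genuinely hard step.

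As a cleaner alternative, I would note that after cancelling the common additive constant the whole claim reduces to the single floor inequality $\floor{\frac{2x}{3}} \geq \floor{\frac{2x-4}{3}} + 1$ with $x = n - 1 - \beta$. This is immediate from the superadditivity of the floor function: since $\floor{u} + \floor{v} \leq \floor{u+v}$, taking $u = \frac{2x-4}{3}$ and $v = \frac{4}{3}$ gives $\floor{\frac{2x-4}{3}} + \floor{\frac{4}{3}} \leq \floor{\frac{2x}{3}}$, and $\floor{\frac{4}{3}} = 1$. Either route establishes the lemma; I would present the substitution-based version to match the style of the preceding lemma and relegate the superadditivity remark to a one-line sanity check.
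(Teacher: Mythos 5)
Your proposal is correct, but the route you choose to present as primary is not the paper's. The paper's entire proof is what you call the ``cleaner alternative'': it observes that
$\frac{2(n - 1 - \beta)}{3} = \frac{2(n - 1 - (\beta + 2))}{3} + \frac{4}{3}$,
concludes that
$\floor{\frac{2(n - 1 - \beta)}{3}} \geq \floor{\frac{2(n - 1 - (\beta + 2))}{3}} + 1$,
and is done after doubling and adding $\beta$; the superadditivity of the floor function that you invoke is exactly the (implicit) justification for that step. Your main presentation --- the substitution $n - 1 - \beta = 3q - r$ with $q = \ceil{\frac{n-1-\beta}{3}}$, $r \in \set{0,1,2}$, mirroring Lemma~\ref{lemma:bicactus-technical} --- is also correct: the floor values $-2, -2, -3$ for $r = 0,1,2$, the resulting right-hand sides $4q - 2 + \beta$, $4q - 2 + \beta$, $4q - 4 + \beta$, and the case-by-case comparison against $4q - 2r + \beta$ all check out. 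What the case analysis buys is a little extra information (equality holds precisely for $r \in \set{1,2}$, with slack $2$ for $r = 0$) and stylistic uniformity with the preceding lemma; what the paper's route buys is a two-line proof with no case work at all. In short, you have the emphasis inverted relative to the paper: the one-line floor-shift argument is the proof, and the substitution computation is the optional sanity check.
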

\begin{proof}
Observe that
\[
\frac{2(n - 1 - \beta)}{3}
= \frac{2(n - 1 - (\beta + 2))}{3} + \frac{4}{3}
~.
\]
Hence,
\[
\floor{\frac{2(n - 1 - \beta)}{3}}
\geq \floor{\frac{2(n - 1 - (\beta + 2))}{3}} + 1
~. \qedhere
\]
\end{proof}

\begin{theorem}
\label{thm:bicactusbeta}
A bi-cactus graph $G$ with $n \geq 4$ and $\beta \geq 1$ satisfies
$m \leq \floor{\frac{4(n - 1) - \beta}{3}}$.
\end{theorem}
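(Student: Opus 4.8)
The plan is to combine the bridge-dependent edge bound of \Cref{lemma:bicactusbound} with the lower bound on the number of bridges from \Cref{lemma:beta}, and then to maximize the resulting expression over all admissible bridge counts. Write $f(b) \triangleq 2\floor{\frac{2(n-1-b)}{3}} + b$, so that \Cref{lemma:bicactusbound} yields $m \leq f(b)$ where $b$ is the actual number of bridges of $G$, while \Cref{lemma:beta} gives $b \geq \beta$. The naive move would be to substitute $b = \beta$, but this is \emph{not} valid: the bound $f$ is not monotone in $b$, which is exactly the phenomenon exhibited in \Cref{example:correction}, where an extra bridge actually \emph{raises} the edge budget. So instead I would bound $m$ by the maximum of $f(b)$ over the whole range $b \geq \beta$.

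The key structural observation I would establish first is that $f$ is ``periodic with downward drift'' in steps of three. Using the integer-shift identity $\floor{x-2} = \floor{x} - 2$, one computes directly that $f(b+3) = 2\floor{\frac{2(n-1-b)}{3} - 2} + (b+3) = f(b) - 1$. Hence along each residue class of $b$ modulo $3$ the value of $f$ strictly decreases, so its maximum over $\set{b \geq \beta}$ is attained among the first three values. Concretely, writing any $b \geq \beta$ as $b = \beta + 3k + r$ with $k \geq 0$ and $r \in \set{0,1,2}$ gives $f(b) = f(\beta + r) - k \leq f(\beta + r) \leq \max\set{f(\beta), f(\beta+1), f(\beta+2)}$.

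It then remains to eliminate the third candidate and to identify the surviving maximum with the target quantity. \Cref{lemma:bicactus-technical2} states precisely $f(\beta) \geq f(\beta+2)$, which lets me drop $f(\beta+2)$ and reduce the maximum to $\max\set{f(\beta), f(\beta+1)}$; and \Cref{lemma:bicactus-technical} identifies this with $\floor{\frac{4(n-1)-\beta}{3}}$. Chaining everything gives $m \leq f(b) \leq \max\set{f(\beta), f(\beta+1)} = \floor{\frac{4(n-1)-\beta}{3}}$, which is the claimed bound.

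I expect the only genuinely delicate point to be the conceptual one: recognizing that \Cref{lemma:bicactusbound} must be optimized over all $b \geq \beta$ rather than simply evaluated at $b = \beta$, because an additional bridge can increase the achievable number of edges. Once this non-monotonicity is acknowledged, the remaining work is routine arithmetic encapsulated in the shift identity $f(b+3) = f(b)-1$ together with the two technical lemmas, and no further case analysis beyond the three residues is needed. (The hypothesis $\beta \geq 1$ is used here only to keep us in the regime where the matching realization of \Cref{sec:cactus} exists; the inequality itself survives the argument above regardless.)
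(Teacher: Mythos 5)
Your proof is correct and follows essentially the same route as the paper: it combines \Cref{lemma:bicactusbound} with \Cref{lemma:beta} and then uses \Cref{lemma:bicactus-technical,lemma:bicactus-technical2} to show that the bound, maximized over all admissible $b \geq \beta$, is attained at $b=\beta$ or $b=\beta+1$, where it equals $\floor{\frac{4(n-1)-\beta}{3}}$. Your explicit reduction to the three candidates $\set{\beta,\beta+1,\beta+2}$ via the shift identity $f(b+3)=f(b)-1$ merely fills in a step the paper leaves implicit (the paper in effect iterates \Cref{lemma:bicactus-technical2} along parity classes), so this is a presentational refinement rather than a different approach.
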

\begin{proof}
\Cref{lemma:bicactusbound} provides an upper bound for bi-cactus graphs.
Also recall that $b \geq \beta$ by \Cref{lemma:beta}.
\Cref{lemma:bicactus-technical,lemma:bicactus-technical2}
imply that the bound is maximized either when $b = \beta$ or when $b = \beta+1$.
\end{proof}

Recall that a sequence $d$, such that  $m=n$, has no bi-unicyclic realization if $d_4 = 1$.
Observe that in this case $\beta = \multipl_1 \geq n-3$. Hence, the upper bound of \Cref{thm:bicactusbeta}
translates into
\[
m \leq \floor{\frac{4(n-1) - \multipl_1}{3}} \leq \floor{\frac{4(n-1) - (n-3)}{3}} = \floor{n - \frac{1}{3}} = n - 1
~,
\]
which means that there is no realization.


\section{Realization of Bridge-less Cactus and Bi-Cactus Graphs}
\label{sec:bridge-less}

In this section we consider bridge-less cactus graphs and their bipartite version.


\subsection{Bridge-less Cactus Graph Realization}

We give a characterization and a realization algorithm for bridge-less cacti.
We first prove that a bridge-less cactus is a cactus with even degrees and vice versa.

\begin{lemma}
\label{lemma:bridgeless}
A cactus graph $G$ is bridge-less if and only if $d_i$ is even, for every $i$.
\end{lemma}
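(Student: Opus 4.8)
The plan is to prove the two directions separately, since each relies on a different structural fact. Throughout I would use the handshake identity and the characterization of bridge-less cacti recalled in \Cref{sec:prelim}, namely that in a bridge-less cactus every edge belongs to exactly one cycle.

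For the forward direction (bridge-less $\Rightarrow$ all degrees even), I would fix a vertex $i$ and use that the edge set of $G$ partitions into its cycles, because every edge lies on exactly one cycle. Every edge incident to $i$ therefore belongs to some cycle through $i$, and each such cycle contributes exactly $2$ to $\deg_G(i)$, since a vertex has degree $2$ inside any single cycle it lies on. Hence $\deg_G(i) = 2k_i$, where $k_i$ is the number of cycles passing through $i$, which is even. As $i$ was arbitrary, every $d_i$ is even.

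For the reverse direction (all degrees even $\Rightarrow$ bridge-less), I would argue by contradiction with a parity/handshake argument that does not even use the cactus structure. Suppose $G$ has a bridge $e = (u,v)$, and let $G_1$ be the component containing $u$ after $e$ is removed. By the handshake lemma, $\sum_{w \in V(G_1)} \deg_{G_1}(w) = 2\abs{E(G_1)}$ is even. Relative to $G_1$, only $u$ gains the extra incident edge $e$ in $G$, so $\deg_G(u) = \deg_{G_1}(u) + 1$ while $\deg_G(w) = \deg_{G_1}(w)$ for every other $w \in V(G_1)$. Consequently $\sum_{w \in V(G_1)} \deg_G(w) = 2\abs{E(G_1)} + 1$ is odd, which forces at least one vertex of $G_1$ to have odd degree in $G$, contradicting the assumption that all $d_i$ are even. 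Combining the two directions yields the stated equivalence.

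I do not expect a serious obstacle here; both arguments are short. The only point requiring care is in the forward direction, where one must justify that in a bridge-less cactus \emph{all} edges incident to a vertex lie on cycles (so that the degree is exactly twice the number of cycles through the vertex) — this is precisely the recalled fact that every edge belongs to exactly one cycle. The reverse direction is purely a parity computation, and the trivial small cases ($n \leq 2$) cause no difficulty.
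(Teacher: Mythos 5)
Your proof is correct. The forward direction is essentially identical to the paper's: every edge of a bridge-less cactus lies on exactly one cycle, and each cycle through a vertex contributes exactly $2$ to its degree, so all degrees are even. The reverse direction, however, takes a genuinely different route. The paper argues structurally: it considers the block-cutpoint tree $\BC(G)$, notes that no bridge block can be a leaf (else $G$ would have a degree-$1$ vertex), and then locates an extremal bridge whose removal leaves a bridge-free side; the bridge endpoint on that side has degree $2k+1$ for some $k$ (one from the bridge, two from each incident cycle), which is odd --- a contradiction. This argument leans on the cactus structure (all non-bridge blocks are cycles) and on the $\BC(G)$ machinery the paper reuses elsewhere (e.g., in \Cref{lemma:core,lemma:bicactusbound}). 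Your argument instead is a pure handshake computation on one side $G_1$ of a putative bridge: $\sum_{w \in V(G_1)} \deg_G(w) = 2\abs{E(G_1)} + 1$ is odd, so some vertex of $G_1$ has odd degree in $G$, contradiction. This is shorter, avoids the slightly delicate tree-extremality step of the paper's proof, and in fact proves a stronger statement --- any graph with a bridge has an odd-degree vertex in the component cut off by that bridge, so no cactus hypothesis is needed for this direction. What the paper's approach buys in exchange is a structural picture (which bridge endpoint is forced to be odd) that is consistent with the block-cutpoint viewpoint used in its later lemmas, but as a proof of this lemma your parity argument is the more economical and more general one.
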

\begin{proof}
Suppose that $G$ has no bridges.
Consider a vertex $v$. Each cycle that contains $v$ contributes exactly $2$ to its degree.
Hence, $v$'s degree is even.

Suppose that $d_i$ is even, for every $i$.
Assume that $G$ contains a bridge $(x,y)$.
Since $\deg(x)$ is even, it must be adjacent to another bridge $(x,z)$.
Consider the block-cutpoint graph $\BC(G)$ of $G$. Recall that $\BC(G)$ is a tree.
A bridge node cannot be a leaf of $\BC(G)$, since this means that there must be a vertex of degree 1 in $G$.
Hence, all leaves of $\BC(G)$ are cycle nodes.
There must be a bridge node whose removal splits $\BC(G)$ into two trees, one of which does
not contains bridge nodes. Let this bridge be $(x,y)$ in $G$. It follows that either $x$ or $y$ have
an odd degree. A contradiction.
\end{proof}

Next we show that a degree sequence has a realization as a bridge-less cactus
if and only if it satisfies the bound of \Cref{lemma:bridgelessbound} and it consists of even numbers.

\begin{theorem}
\label{thm:bridgeless}
Let $d$ be a degree sequence of length $n \geq 3$.
There is a bridge-less cactus realization of $d$ if and only if
$m \leq \floor{1.5(n-1)}$ and $d_i$ is even, for every $i$.
\end{theorem}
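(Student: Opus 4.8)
The plan is to prove both directions of the characterization.

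\medskip

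\textbf{Necessity.}
Suppose $d$ has a bridge-less cactus realization $G$.
Then \Cref{lemma:bridgeless} immediately gives that every $d_i$ is even, and
\Cref{lemma:bridgelessbound} gives $m \leq \floor{1.5(n-1)}$.
So this direction is essentially free from the already-established lemmas.

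\medskip

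\textbf{Sufficiency (the main work).}
Assume $d_i$ is even for all $i$ and $m \leq \floor{1.5(n-1)}$; I want to construct a
bridge-less cactus realizing $d$.
The cleanest approach is induction, peeling off one cycle at a time while keeping the
residual sequence within the same bound, mirroring the minimum-pivot Havel--Hakimi style
used in \Cref{thm:pseudotree}.
The natural base cases are all-$2$'s sequences, $d = (2^n)$ with $n \geq 3$, realized by a
single $n$-cycle, and small sequences that saturate the bound (e.g.\ a triangulated cactus,
which by \Cref{lemma:bridgelessbound} meets $m = 1.5(n-1)$ exactly).
For the inductive step, when $d \neq (2^n)$ there must be a vertex of degree
$\geq 4$ (since all degrees are even and at least $2$).
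The key move is to attach a triangle: take two degree-$2$ vertices (the ``pivots'') and the
largest-degree vertex $v$, form the $3$-cycle on them, which removes the two degree-$2$
vertices entirely and lowers $\deg(v)$ by $2$.
The residual sequence $d'$ then has $n' = n-2$ and $m' = m-3$, and one must verify that
$m' \leq \floor{1.5(n'-1)}$ still holds and that all residual degrees stay even and positive
(the latter is clear since we only lowered an even degree $\geq 4$ by $2$).

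\medskip

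\textbf{The main obstacle.}
The delicate point is ensuring that the triangle-attachment move always applies, i.e.\ that
there are always at least two degree-$2$ vertices available whenever $d \neq (2^n)$ and the
bound is not yet saturated, and that after the move the residual sequence does not violate
$m' \leq \floor{1.5(n'-1)}$ or become degenerate (too few vertices, or an odd all-$2$'s
residue---though parity is not an issue here since bridge-less cactus realizations need not
be bipartite).
One must count carefully: if $\multipl_2$ is too small relative to the number of
high-degree vertices, the naive triangle move could fail, so the argument should either
show the bound $m \leq \floor{1.5(n-1)}$ forces enough degree-$2$ vertices to exist, or
handle the saturated case separately by building a triangulated (or near-triangulated)
cactus directly.
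I would therefore split the proof: first dispose of the saturated extreme
$m = \floor{1.5(n-1)}$ using a direct triangulated-cactus construction, and then run the
inductive triangle-peeling argument in the strictly-sub-saturated regime, where slack in the
bound guarantees the move can be repeated until reaching an all-$2$'s base case.
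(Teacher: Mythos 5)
Your overall strategy---necessity from \Cref{lemma:bridgeless,lemma:bridgelessbound}, sufficiency by induction via the triangle move on two degree-$2$ pivots and a vertex of degree at least $4$---is exactly the paper's proof. However, your proposal stops at what you yourself call the main obstacle, and that is a genuine gap: you never establish that two degree-$2$ pivots are always available, and the workaround you sketch (splitting on whether the bound is saturated) is both unnecessary and partly wrong-headed. The paper closes the gap with a one-line averaging argument. First, if $d \neq (2^n)$ then $d_1 \geq 4$, and for $n \leq 4$ this already violates the bound, since then $m \geq \half\paren{4 + 2(n-1)} = n+1 > \floor{1.5(n-1)}$; hence $n \geq 5$. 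Second, since $\sum_i d_i \leq 2\floor{1.5(n-1)} < 3n$ and every degree is either $2$ or at least $4$, writing $k$ for the number of vertices of degree at least $4$ gives $2n + 2k \leq \sum_i d_i < 3n$, so $\multipl_2 = n - k > n/2 \geq 2.5$. Thus there are always at least three degree-$2$ vertices, regardless of saturation, so the triangle move never fails. No ``slack'' is involved: the bound is preserved exactly after the move, because $\floor{1.5(n-1)} - 3 = \floor{1.5(n-3)} = \floor{1.5(n'-1)}$, and the induction (the paper uses the measure $m - \multipl_2$) runs uniformly down to the all-$2$'s base case.

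Your proposed fallback for the saturated extreme, a ``direct triangulated-cactus construction,'' also cannot work as stated when $n$ is even: by \Cref{lemma:bridgelessbound} and \Cref{thm:triangulated}, a triangulated cactus exists only for odd $n$ with $m = 1.5(n-1)$ exactly, whereas the saturated case $m = \floor{1.5(n-1)}$ with $n$ even is realized only with one cycle that is not a triangle---which is precisely what the uniform induction produces automatically. So the correct repair is not a case split but the averaging count above; with it, your outline becomes the paper's proof.
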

\begin{proof}
If there is a bridge-less cactus realization, then \Cref{lemma:bridgeless} and
\Cref{lemma:bridgelessbound} imply that $m \leq \floor{1.5(n-1)}$ and
$d_i$ is even, for every $i$.

The converse is proved by induction on $m-\multipl_2$.
In the base case $d = (2^n)$, where $n \geq 3$, there is a realization of $d$ consisting of one cycle that contains all vertices.
For the inductive step, since $d \neq (2^n)$, it must be that $d_1 \geq 4$ since $d_1$ is even.
Moreover, it must be that $n \geq 5$, since $m \geq 2 + 1(n-1) = n + 1 > \floor{1.5(n-1)}$,
for $n \leq 4$.
Also, since $\sum_i d_i < 3n$, there must be more than $n/2$ vertices of degree 2 in $d$.
In particular, $d_n = d_{n-1} = 2$.
Let $d'$ be the sequence which is obtained by removing $n$ and $n-1$ and subtracting 2 from $d_1$.
Notice that $n' \geq 3$ because $n'=n-2$. Also, since $2m=\sum_i d_i$, we have that
\[
\sum_i d'_i
=    \sum_i d_i - 6
\leq 2\floor{1.5(n-1)} - 6
=    2\floor{1.5(n'-1)}
~.
\]
In addition, $m' - \multipl'_2 \leq m - 3 - (\multipl_2 - 2) = m - \multipl_2 - 1$.
By the induction hypothesis $d'$ has a realization as a bridge-less cactus $G'$.
We obtain a realization $G$ for $d$ by adding a triangle of the vertices $1$, $n-1$, and $n$.
(See \Cref{fig:triangle}.)
\end{proof}

\begin{figure}
\centering
\begin{tikzpicture}[scale=0.7]
\node [cloud, draw,cloud puffs=10,cloud puff arc=120, cloud ignores aspect, minimum width=4cm, minimum height=2cm, inner ysep=1em] {};

\begin{scope}[every node/.style={circle,fill=orange},
                       every edge/.style={draw=black,line width=1pt}]
\node (v1) at (1.5,0) {$d_1$};
\node (vn) at (4,1) {$2$};
\node (vn1) at (4,-1) {$2$};
\path (v1) [-] edge (vn1);
\path (v1) [-] edge (vn);
\path (vn1) [-] edge (vn);
\end{scope}
\end{tikzpicture}
\caption{Triangle addition step.}
\label{fig:triangle}
\end{figure}
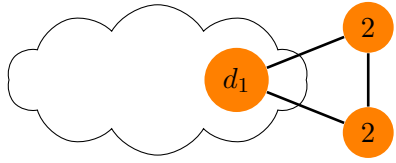

A similar approach also works for triangulated cacti.
This result already appeared in~\cite{Rao81cactus}.
The proof is given for completeness.

\begin{theorem}[\cite{Rao81cactus}]
\label{thm:triangulated}
Let $d$ be a degree sequence of length $n \geq 3$.
There is a triangulated cactus realization of $d$ if and only if
$n$ is odd, $m = 1.5(n-1)$, and $d_i$ is even, for every $i$.
\end{theorem}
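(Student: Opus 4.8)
The necessity direction is immediate from the structural lemmas already established. A triangulated cactus has every edge lying on a cycle, so it has no bridges; hence it is a bridge-less cactus. \Cref{lemma:bridgeless} then forces $d_i$ to be even for every $i$, while the ``in particular'' clause of \Cref{lemma:bridgelessbound} gives that $n$ is odd and $m = 1.5(n-1)$. So I would dispose of this direction in one or two sentences by citing these two results.

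For the sufficiency direction I would mirror the argument of \Cref{thm:bridgeless}, proving the claim by induction on $m - \multipl_2$ and using the triangle addition step of \Cref{fig:triangle}. The base case is $d = (2^3)$, realized by a single triangle: indeed, the only all-$2$'s sequence meeting $m = 1.5(n-1)$ is $n=3$, since $m = n$ forces $n = 1.5(n-1)$. For the inductive step, assuming $d \neq (2^3)$, I would first establish that $d_1 \geq 4$ (all degrees are even and not all equal to $2$) and that the two smallest entries satisfy $d_{n-1} = d_n = 2$. The latter comes from a counting argument: if $k$ denotes the number of vertices of degree $\geq 4$, then $3(n-1) = \sum_i d_i \geq 2(n-k) + 4k = 2n + 2k$, so $k \leq (n-3)/2$ and the number of degree-$2$ vertices is at least $(n+3)/2 \geq 2$.

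I would then peel off a triangle on vertices $1$, $n-1$, $n$: form the residual $d'$ by deleting $n-1, n$ and subtracting $2$ from $d_1$. One checks directly that $n' = n-2$ is still odd, that all entries of $d'$ remain even (with $d_1 - 2 \geq 2$), and that $\sum_i d'_i = 3(n-1) - 6 = 3(n'-1)$, so $m' = 1.5(n'-1)$; moreover $n' \geq 3$ because $d \neq (2^3)$ forces $n \geq 5$. The induction measure strictly decreases since $\multipl'_2 \geq \multipl_2 - 2$ gives $m' - \multipl'_2 \leq m - \multipl_2 - 1$. By the induction hypothesis $d'$ admits a triangulated cactus realization $G'$, and attaching the triangle on $\{1, n-1, n\}$ at vertex $1$ yields a triangulated cactus realizing $d$.

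The main point requiring care is not the triangle gluing itself but verifying that the residual sequence stays inside the class to which the induction applies; in particular, showing there are always at least two degree-$2$ vertices to consume (so that $d_{n-1} = d_n = 2$) and that subtracting $2$ from $d_1$ never drops it below $2$ while preserving parity. Both hinge on the exact volume identity $m = 1.5(n-1)$ together with the all-even constraint, which is what distinguishes this proof from the inequality-based argument in \Cref{thm:bridgeless}.
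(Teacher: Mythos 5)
Your proposal is correct and follows essentially the same route as the paper: necessity by citing \Cref{lemma:bridgeless,lemma:bridgelessbound}, and sufficiency by peeling a triangle consisting of vertex $1$ and two degree-$2$ vertices (whose existence you justify by the same counting argument), exactly as in the paper's inductive step. The only cosmetic difference is the induction measure: the paper inducts directly on $n$ with base case $n=3$, whereas you induct on $m - \multipl_2$, which under the identity $m = 1.5(n-1)$ and the all-even constraint singles out the same unique base case $d = (2^3)$.
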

\begin{proof}
If there is a realization of $d$ as a triangulated cactus,
then \Cref{lemma:bridgeless,lemma:bridgelessbound} imply that
$n$ is odd, $m = 1.5(n-1)$ and $d_i$ is even, for every $i$.

The converse is proved by induction on $n$.
In the base case $n = 3$, and we have that $\sum_i d_i = 6(3-1) = 6$,
which means that $d = (2^3)$. Hence, the only realization is a triangle.
For the inductive step, since $n$ is odd, we have that $n \geq 5$.
Since $\sum_i d_i = 3(n-1)$ and $d_1$ is even, it must be that $d_1 \geq 4$.
Also, there are more than $n/2$ vertices of degree 2 in $d$. In particular, $d_n = d_{n-1} = 2$.
Let $d'$ be the sequence obtained by removing $n$ and $n-1$ and subtracting 2 from $d_1$.
Notice that $n' \geq 3$ and $n'$ is odd. Also,
\[
\sum_i d'_i
= \sum_i d_i - 6
= 3(n-1) - 6
= 3(n'-1)
~.
\]
By the induction hypothesis, $d'$ has a realization as a triangulated cactus $G'$.
We obtain a realization $G$ for $d$ by adding a triangle of the vertices $1$, $n-1$, and $n$.
\end{proof}

The proofs of \Cref{thm:bridgeless,thm:triangulated} imply an algorithm that repeatedly
forms a triangle composed of two vertices whose current degree is $2$ and of one vertex whose
current degree is at least $4$ until only degree-2 vertices remain. Then, a cycle is created of all
remaining vertices.

\begin{corollary}
Let $d$ be a degree sequence such that $n \geq 3$, $m \leq \floor{1.5(n-1)}$,
and $d_i$ is even, for every $i$.
There is a linear time algorithm that computes a bridge-less cactus realization of $d$,
where all cycles except maybe one are triangles.
\end{corollary}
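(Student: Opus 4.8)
The plan is to convert the constructive inductions behind \Cref{thm:bridgeless} and \Cref{thm:triangulated} into an explicit procedure and then argue that it can be run in linear time. Both inductions (on $m - \multipl_2$) repeatedly perform a single operation: while the residual sequence is not $(2^k)$, pick a vertex of residual degree at least $4$ together with two vertices of residual degree $2$, form a triangle on the three, and lower the high-degree vertex by $2$; once only degree-$2$ vertices remain, close them into one cycle. First I would record the two facts that promote this from an existence argument to an algorithm. One is that the choice of pivot is immaterial: a triangle step decreases $n$ by $2$ and $\sum_i d_i$ by $6$ regardless of which high-degree vertex is chosen, so the invariant used in the proof ($m \le \floor{1.5(n-1)}$, all degrees even, $n \ge 3$) is preserved by a step on \emph{any} vertex of residual degree $\ge 4$. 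The other is non-depletion of the degree-$2$ pool: from $\sum_i d_i = 2m \le 3(n-1) < 3n$ and all degrees even one gets that the number of degree-$2$ vertices strictly exceeds the number of vertices of degree $\ge 4$, so whenever a high-degree vertex is present at least two degree-$2$ vertices remain.

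Given these, I would fix one efficient order of operations. In $O(n)$ preprocessing time I build, by bucketing (all degrees are even and at most $d_1 \le n-1$), a stack $S$ of the current degree-$2$ vertices and a list $H$ of the vertices of residual degree $\ge 4$. I then process $H$ one vertex at a time: for $v$ of residual degree $2k$, I form $k-1$ triangles, each popping two corners off $S$ and lowering $v$ by $2$; when $v$ reaches degree $2$ I push it onto $S$ and advance in $H$. When $H$ is exhausted every unfinalized vertex is on $S$ with residual degree $2$, and the maintained invariant guarantees at least three of them, which I close into a single cycle. Correctness is immediate from \Cref{thm:bridgeless}, since this is exactly the inductive construction executed with a fixed legal pivot order (by the first enabling fact), and between consecutive triangle steps the pool is nonempty by the second; in the regime $m = 1.5(n-1)$ the same run specializes to the construction of \Cref{thm:triangulated}.

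The structural claim then reads off directly: each triangle-formation step yields a genuine triangle, and the only cycle that need not be a triangle is the single terminal cycle over the leftover degree-$2$ vertices, so all cycles except at most one are triangles. For the running time, \Cref{obs:euler} gives that the number of cycles is $c = m - n + 1$, hence the number of triangle steps is exactly $m - n = O(n)$ (since $m \le \floor{1.5(n-1)}$); each step costs $O(1)$, each vertex is inserted into $H$ and into $S$ a constant number of times, and the final cycle together with the preprocessing is $O(n)$, for a total of $O(n)$.

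The step I expect to be the main obstacle is the linear-time bookkeeping rather than the geometry of the construction: selecting a pivot, drawing its two degree-$2$ corners, and relocating a finished pivot back into the degree-$2$ pool must each be $O(1)$, which forces the relaxation from the original proof's ``take the global maximum'' to ``take any vertex of degree $\ge 4$.'' Justifying that this relaxation is sound, namely that both the invariant and the non-depletion of the degree-$2$ pool are genuinely pivot-independent, is the one point needing care; everything downstream of it is routine amortized accounting.
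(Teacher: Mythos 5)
Your proposal is correct and follows essentially the same route as the paper: the paper's algorithm is exactly the one you describe --- repeatedly form a triangle from one vertex of current degree at least $4$ and two vertices of current degree $2$ (the relaxation from the maximum-degree pivot of the induction in \Cref{thm:bridgeless} to an arbitrary degree-$\geq 4$ pivot is already made in the paper's statement of the algorithm), and finally close the remaining degree-$2$ vertices into a single cycle. Your additional observations (pivot-independence of the invariant, non-depletion of the degree-$2$ pool, and the stack/bucket bookkeeping) are just explicit versions of details the paper leaves implicit, not a different argument.
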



\subsection{Bridge-less Bi-Cactus Graph Realization}

We provide a characterization and a realization algorithm
for bridge-less bi-cactus graphs. The approach is similar to the one for
bridge-less cactus graphs, where the main difference is that we use cycles
of length 4 and not triangles.

\begin{theorem}
\label{thm:bipartite-bridgeless}
Let $d$ be a degree sequence of length $n \geq 4$.
There is a realization of $d$ as a bridge-less bi-cactus if and only if
$m \leq 2\floor{\frac{2(n-1)}{3}}$, $m$ is even, and $d_i$ is even, for every $i$.
\end{theorem}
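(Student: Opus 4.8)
The plan is to mirror the structure of the bridge-less cactus proof (Theorem~\ref{thm:bridgeless}), replacing the triangle-addition step with a $4$-cycle addition step, since \Cref{obs:bicactus} forces all cycles in a bi-cactus to have even length (and hence at least $4$ edges). The necessity direction is immediate: if $d$ has a bridge-less bi-cactus realization, then \Cref{lemma:bridgeless} gives that every $d_i$ is even, \Cref{lemma:bipartitebridgelessbound} gives $m \leq 2\floor{\frac{2(n-1)}{3}}$, and \Cref{obs:bicactus} gives that $m$ must be even.

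For the sufficiency direction I would proceed by induction. A natural induction parameter, echoing the earlier proofs, is $m - \multipl_2$ (or equivalently the number of vertices of degree greater than $2$, weighted appropriately). The base case is $d = (2^n)$ with $n$ even: since $m = n$ and $m$ is even, $n$ is even, and a single $n$-cycle (which has even length) realizes $d$. For the inductive step, when $d \neq (2^n)$ we have $d_1 \geq 4$ (as $d_1$ is even and exceeds $2$). The key structural step is to peel off a $4$-cycle: I would take the vertex of degree $d_1$ together with \emph{three} vertices of residual degree $2$, form a $4$-cycle on them, and subtract $2$ from $d_1$. This removes $4$ edges and $3$ degree-$2$ vertices, producing a sequence $d'$ with $\sum_i d'_i = \sum_i d_i - 8$, so $m' = m - 4$ remains even, every $d'_i$ stays even, and one should check $d'$ still satisfies the edge bound $m' \leq 2\floor{\frac{2(n'-1)}{3}}$ with $n' = n - 3$. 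Since the parameter $m' - \multipl'_2$ strictly decreases, the inductive hypothesis yields a bridge-less bi-cactus realization $G'$ of $d'$, and attaching the $4$-cycle at the degree-$(d_1)$ vertex gives a bridge-less bi-cactus $G$ realizing $d$.

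Two points require care. First, I must guarantee the availability of three degree-$2$ vertices to form the $4$-cycle. Because $\sum_i d_i = 2m \leq \frac{8}{3}(n-1) < 3n$ for the relevant range, the average degree is below $3$, forcing a substantial number of degree-$2$ vertices; I would verify that when $d_1 \geq 4$ there are at least three such vertices (i.e. $d_n = d_{n-1} = d_{n-2} = 2$), handling the small-$n$ boundary explicitly. Second, I must confirm that the edge-bound inequality is preserved under the substitution $n \mapsto n - 3$, $m \mapsto m - 4$, which amounts to checking $m - 4 \leq 2\floor{\frac{2(n-4)}{3}}$ given $m \leq 2\floor{\frac{2(n-1)}{3}}$; this follows from the identity $\floor{\frac{2((n-1)-3)}{3}} = \floor{\frac{2(n-1)}{3}} - 2$.

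The main obstacle I anticipate is not the counting but ensuring the $4$-cycle step always applies without creating a degenerate residual sequence — in particular, the case where subtracting $2$ from $d_1$ would drop it to $2$ while other constraints are tight, or where too few degree-$2$ vertices remain near the end of the recursion. The earlier bridge-less cactus proof sidesteps an analogous issue by the strict inequality $m \geq n+1 > \floor{1.5(n-1)}$ ruling out small $n$; here the corresponding slack comes from $m$ being \emph{even} together with the $\frac{4}{3}$ edge-to-vertex ratio, and I would lean on these to show the induction never gets stuck. Once the peeling step is shown to be always legal, the rest is the routine bookkeeping sketched above.
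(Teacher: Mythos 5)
Your proposal is correct and follows essentially the same route as the paper's proof: the same necessity argument via \Cref{lemma:bridgeless}, \Cref{obs:bicactus}, and \Cref{lemma:bipartitebridgelessbound}, and the same induction on $m - \multipl_2$ with base case $(2^n)$ and a $4$-cycle peeling step that consumes three degree-$2$ vertices and reduces $d_1$ by $2$. The two care points you flag are exactly the ones the paper settles, namely showing $n \geq 7$ (so three degree-$2$ vertices exist, more than $2n/3$ of them in fact, and $n' = n-3 \geq 4$) and the floor identity $\floor{\frac{2(n-4)}{3}} = \floor{\frac{2(n-1)}{3}} - 2$ preserving the edge bound.
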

\begin{proof}
If there is a realization, then \Cref{lemma:bridgeless}, \Cref{obs:bicactus}, and
\Cref{lemma:bipartitebridgelessbound} imply that $d_i$ is even, for every $i$,
$m$ is even, and $m \leq 2\floor{\frac{2(n-1)}{3}}$.

The converse is proved by induction on $m-\multipl_2$.
In the base case $d = (2^n)$, where $n \geq 4$. Since $m$ is even, $n$ is also even.
There is a realization of $d$ consisting of one cycle that contains all vertices.

For the inductive step, $d \neq (2^n)$, implies that $d_1 \geq 4$.
It must be that $n \geq 7$, since otherwise
\[
\sum_i d_i \geq 4 + 2(n-1) = 2(n + 1) > 4\floor{\frac{2(n - 1)}{3}}
~.
\]
Also, there must be more than $2n/3$ vertices of degree 2 in $d$.
In particular, $d_{n-2} = 2$.
Let $d'$ be the sequence obtained by removing $n-2$, $n-1$, and $n$
and subtracting 2 from $d_1$. Notice that $n' = n - 3 \geq 4$ and $m' = m - 4$,
which means that $m$ is even. Also,
\[
\sum_i d'_i
=    \sum_i d_i - 8
\leq 4\floor{\frac{2(n - 1)}{3}} - 8
=    4\floor{\frac{2(n - 1)}{3} - 2}
=    4\floor{\frac{2(n' - 1)}{3}}
~.
\]
In addition, $m' - \multipl'_2 \leq m - 4 - (\multipl_2 - 3) = m - \multipl_2 - 1$.
By the inductive hypothesis $d'$ has a realization as a bridge-less bi-cactus $G'$.
We obtain a realization $G$ for $d$ by adding the edges $(1,n-2)$, $(n-2,n-1)$, $(n-1,n)$,
and $(1,n)$. (See \Cref{fig:C4}.)
\end{proof}

\begin{figure}
\centering
\begin{tikzpicture}[scale=0.7]
\node [cloud, draw,cloud puffs=10,cloud puff arc=120, cloud ignores aspect, minimum width=4cm, minimum height=2cm, inner ysep=1em] {};

\begin{scope}[every node/.style={circle,fill=orange},
                       every edge/.style={draw=black,line width=1pt}]
\node (v1) at (1.5,0) {$d_1$};
\node (vn) at (4,-1) {$2$};
\node (vn1) at (6,0) {$2$};
\node (vn2) at (4,1) {$2$};
\path (v1) [-] edge (vn2);
\path (v1) [-] edge (vn);
\path (vn2) [-] edge (vn1);
\path (vn1) [-] edge (vn);
\end{scope}
\end{tikzpicture}
\caption{$C_4$ addition step.}
\label{fig:C4}
\end{figure}
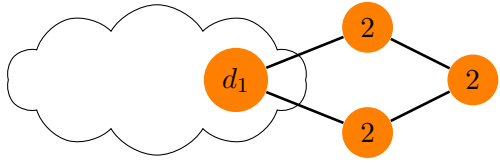

\begin{corollary}
\label{cor:bi-bridgeless}
A degree sequence $d$, where $n \geq 4$, $m \leq 2\floor{\frac{2(n-1)}{3}}$,
$m$ is even, and $d_i$ is even for every $i$, 
has a linear time algorithm that computes a bridge-less bi-cactus realization of $d$
where all cycles except maybe one are of length $4$.
\end{corollary}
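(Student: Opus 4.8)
The plan is to read off the algorithm directly from the constructive induction in the proof of \Cref{thm:bipartite-bridgeless} and then argue that it can be implemented in linear time while producing the claimed cycle structure. That induction builds a realization by repeatedly attaching a $4$-cycle (the step in \Cref{fig:C4}), which consumes three degree-$2$ vertices and decrements the current maximum degree by $2$, halting at the base case $d=(2^n)$, which is closed into a single even cycle. Hence the algorithm is: while the residual sequence is not all $2$'s, pick a \emph{pivot} $v$ of maximum residual degree (necessarily $\ge 4$, since $d_1$ is even) together with three vertices of residual degree $2$, add the four edges of \Cref{fig:C4}, reset the residual degree of $v$ to $d_v-2$, and delete the three degree-$2$ vertices; once only degree-$2$ vertices survive, connect them into one cycle. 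Every cycle created inside the loop is a $C_4$, and the only cycle that may be longer is the single closing cycle of the base case, which is exactly the structural guarantee in the statement.

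Correctness is inherited verbatim from \Cref{thm:bipartite-bridgeless}: each iteration preserves the invariants $m \le 2\floor{2(n-1)/3}$, $m$ even, all degrees even, and $n \ge 4$, so the loop terminates with a legal bridge-less bi-cactus. What remains is the running-time analysis. The number of loop iterations is $O(n)$, because each iteration deletes three vertices; and each iteration performs only $O(1)$ structural work (four edge insertions and one degree update). Thus it suffices to locate the pivot and the three degree-$2$ vertices in amortized $O(1)$ time per iteration.

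To this end I would first bucket the vertices by degree using counting sort, which costs $O(n)$ since every $d_i \le n-1$, and keep the degree-$2$ vertices in a separate list (a stack), from which three are popped per iteration. For the pivot I exploit a monotonicity fact: the maximum residual degree never increases, since the only vertex whose degree changes is the pivot and its degree strictly decreases, while all other vertices either retain their degree or are removed. Therefore a single pointer scanning the degree buckets from the top downward suffices to supply successive pivots; across the whole execution the pointer moves $O(n)$ steps in total, so the amortized cost of finding a pivot is $O(1)$. When a pivot's residual degree drops to $2$ (the case $d_v=4$), it is simply pushed onto the degree-$2$ list, where it can later join a $C_4$ or the final cycle.

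The main obstacle is precisely this last point: obtaining \emph{linear} rather than $O(n \log n)$ time. A naive implementation would re-extract the maximum with a priority queue; the monotone decrease of the maximum degree is what lets us replace the heap by a single downward sweep. One must check that decrementing the pivot by $2$ (moving it down at most two buckets) and occasionally returning it to the degree-$2$ pool never forces the sweep pointer back upward, so that its total travel is $O(n)$. Verifying this monotonicity invariant, together with the bookkeeping that keeps the buckets and the degree-$2$ list consistent under deletions and the decrement, is the only delicate part; everything else follows immediately from \Cref{thm:bipartite-bridgeless}.
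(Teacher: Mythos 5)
Your proposal is correct and follows essentially the same route as the paper: the corollary is obtained by reading off the constructive induction in \Cref{thm:bipartite-bridgeless} (repeated $C_4$-attachment at the maximum-degree vertex using three degree-$2$ vertices, with a single closing cycle in the base case $d=(2^n)$), which is exactly what you do. Your bucket-plus-monotone-sweep implementation detail for achieving linear time is a sound and slightly more explicit justification than the paper offers, but it is not a different approach.
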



\section{Realization of Core Cactus and Bi-Cactus Graphs}
\label{sec:core}

In this section we consider core cactus graphs and core bi-cactus graphs.


\subsection{Core Cactus Graph Realization}

We provide a characterization and a realization algorithm for core cactus graphs.
As a first step, we observe that in core cacti we have that $\beta = \multipl_1$.

\begin{lemma}
\label{lemma:core}
If $G$ is a core cactus, then $\multipl_1 \geq \multipl_{odd}$.
\end{lemma}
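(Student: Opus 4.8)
The plan is to exploit the structural fact that, in a core cactus, all cycles sit inside a single connected \enquote{core} off which the bridges hang as trees, so that the inequality collapses to a leaf-counting identity on a tree. First I would let $C$ be the union of all cycles of $G$. By the definition of a core cactus, deleting every bridge leaves a bridge-less cactus, so $C$ is connected (were it disconnected, some bridge would separate two cycles, which the core property forbids). Since no cycle passes through a bridge, $C$ uses only cycle edges, and by \Cref{lemma:bridgeless} every vertex of $C$ has \emph{even} degree inside $C$. I would then contract $C$ to a single vertex $r$. No bridge can have both endpoints in $C$ (otherwise, as $C$ is connected and the bridge lies outside $C$, deleting it would not disconnect $G$), so after contraction each bridge becomes an edge of the new graph $T$, whose edge set is exactly the bridge set of $G$. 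This $T$ is connected (as $G$ is) and acyclic (a cycle in $T$ would pull back to a cycle of $G$ containing a bridge), hence a tree, and $\deg_T(r)$ is precisely the number of bridges meeting core vertices.

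The crucial point is a parity observation: because every vertex has even core-degree, the parity of $\deg_G(v)$ equals the parity of the number of bridges at $v$. Consequently the degree-$1$ vertices of $G$ are exactly the non-root leaves of $T$, so they number $\multipl_1$, while the odd-degree-$\geq 3$ vertices split into the non-core ones (non-root vertices $v$ of $T$ with odd $\deg_T(v)\geq 3$; say $O_{nc}$ of them) and the core ones (core vertices incident to an odd number of bridges; say $O_c$ of them), giving $\multipl_{odd}=O_{nc}+O_c$. From $\sum_{v}(\deg_T(v)-2)=-2$ I would isolate the term for $r$ to obtain the identity $\multipl_1=\deg_T(r)+\sum_{v\neq r,\,\deg_T(v)\geq 3}(\deg_T(v)-2)$. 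Each summand is at least $1$, so the sum is at least $O_{nc}$; and since distinct core vertices have disjoint sets of incident bridges, $\deg_T(r)\geq O_c$. Combining these yields $\multipl_1\geq O_c+O_{nc}=\multipl_{odd}$, as required.

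Finally I would dispose of the degenerate case where $G$ has no cycle: then $G$ is a tree, and applying the same leaf-counting identity to $G$ itself gives the stronger bound $\multipl_1\geq \multipl_{odd}+2$. I expect the arithmetic to be routine; the main obstacle is the structural bookkeeping around $r$. The odd-degree core vertices are all \enquote{hidden} inside the contracted vertex, so I must account for every one of them through the single inequality $\deg_T(r)\geq O_c$, and I must justify carefully that contracting the core really produces a tree. This last step is precisely where the defining hypothesis of a core cactus (no bridge separating two cycles, equivalently connectedness of $C$) enters.
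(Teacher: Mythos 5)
Your proof is correct, and it takes a genuinely different route from the paper's. The paper proceeds by induction over the block-cutpoint tree $\BC(G)$: it peels off one pendant bridge at a time (removing a degree-$1$ vertex $j$) and checks, via a three-way case analysis on the degree of the other endpoint $k$ (namely $d_k = 2$, $d_k \geq 4$ even, $d_k \geq 3$ odd), that the inequality $\multipl_1 \geq \multipl_{odd}$ is preserved at each step, the base case being a bridge-less cactus where both quantities vanish. You instead give a one-shot global count: contract the connected core to a root $r$, note that the bridges then form a tree $T$, and combine the tree identity $\sum_v (\deg_T(v) - 2) = -2$ with the parity observation that $\deg_G(v)$ and the number of bridges at $v$ have the same parity, obtaining the exact formula $\multipl_1 = \deg_T(r) + \sum_{v \neq r,\, \deg_T(v) \geq 3} (\deg_T(v) - 2)$ and from it $\multipl_1 \geq O_c + O_{nc} = \multipl_{odd}$. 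The structural facts you verify (connectivity of the core, no bridge with both ends in the core, no loops or multi-edges after contraction, acyclicity of $T$) are exactly where the core-cactus hypothesis enters, and your justifications are sound; the paper's induction relies on analogous facts (every core cactus with a bridge has a pendant bridge, and deleting a leaf preserves the core property) which it likewise leaves largely implicit. As for what each approach buys: the paper's peeling argument is shorter and mirrors the constructive peeling used in its realization algorithm for \Cref{thm:corecactus}, whereas your count yields an identity rather than just an inequality --- it exhibits the slack $\multipl_1 - \multipl_{odd}$ explicitly (as $\deg_T(r) - O_c$ plus the excess at high-degree tree vertices) and gives for free the stronger bound $\multipl_1 \geq \multipl_{odd} + 2$ in the cycle-free case, which the paper's case analysis only hints at through its even-degree case.
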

\begin{proof}
We prove the lemma by induction on the block point-cut graph $\BC(G)$.
The base case is a bridge-less cactus, in which $\multipl_1 = \multipl_{odd} = 0$.
For the inductive step, we remove a bridge leaf from $\BC(G)$.
This amounts to removing a degree-1 vertex denoted $j$.
This also lowers the degree of the vertex $k$ on the other side of this bridge.
Observe that $d_k \geq 2$.
Let $G'$ be the graph without $j$. By the inductive hypothesis we have that
$\multipl'_1 \geq \multipl'_{odd}$.
There are several options depending on $d_k$:
\begin{itemize}
\item If $d_k = 2$, then $\multipl_1 = \multipl'_1 \geq \multipl'_{odd} = \multipl_{odd}$.
\item If $d_k \geq 4$ is even, then $\multipl_1 = \multipl'_1 + 1 \geq \multipl'_{odd} + 1 = \multipl_{odd} + 2$.
\item If $d_k \geq 3$ is odd , then $\multipl_1 = \multipl'_1 + 1 \geq \multipl'_{odd} + 1 = \multipl_{odd}$.
\qedhere
\end{itemize}
\end{proof}

We show that a degree sequence $d$ has a realization as a core cactus if and only if
it satisfies the upper bound of \Cref{thm:beta} and $\beta = \multipl_1$.

\begin{theorem}
\label{thm:corecactus}
Let $d$ be a degree sequence such that $n \geq 3$ and $\sum_i d_i > 2n$.
There is a core cactus realization of $d$ if and only if $\multipl_{odd} \leq \multipl_1$
and $m \leq \floor{\frac{3(n - 1) - \multipl_1}{2}}$.
\end{theorem}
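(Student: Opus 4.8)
The plan is to prove necessity directly from the structural lemmas established earlier, and sufficiency by an explicit leaf-attachment induction that terminates in the bridge-less case.

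For the \emph{only if} direction, suppose $G$ is a core cactus realizing $d$. \Cref{lemma:core} gives $\multipl_{odd} \le \multipl_1$, and this is exactly the regime in which the bridge parameter collapses to $\beta = \max\set{\multipl_1, \half(\multipl_1 + \multipl_{odd})} = \multipl_1$. Substituting $\beta = \multipl_1$ into \Cref{thm:beta} yields $m \le \floor{\frac{3(n-1) - \multipl_1}{2}}$. Thus both stated conditions are necessary, and this direction is immediate.

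For the \emph{if} direction I would induct on $\multipl_1$. Note first that $\sum_i d_i > 2n$ already forces $n \ge 3$. In the base case $\multipl_1 = 0$ the hypothesis $\multipl_{odd} \le \multipl_1$ forces $\multipl_{odd} = 0$, so every $d_i$ is even and the bound reads $m \le \floor{1.5(n-1)}$; \Cref{thm:bridgeless} then supplies a bridge-less cactus, which is a core cactus with no bridges. For the inductive step ($\multipl_1 \ge 1$) I split into two cases, mirroring the paper's stated preference of routing leaves to odd-degree vertices first. If $\multipl_{odd} \ge 1$, attach the degree-$1$ vertex $n$ by a bridge to an odd-degree vertex $k$ (so $d_k \ge 3$), turning $d_k$ even; deleting $n$ and lowering $d_k$ by one produces $d'$ with $n' = n-1$, $\multipl'_1 = \multipl_1 - 1$, $\multipl'_{odd} = \multipl_{odd} - 1$. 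Then $\multipl'_{odd} \le \multipl'_1$ is preserved, $\sum_i d'_i = \sum_i d_i - 2 > 2n'$, and a short computation gives $\floor{\frac{3(n'-1) - \multipl'_1}{2}} = \floor{\frac{3(n-1)-\multipl_1}{2}} - 1 \ge m - 1 = m'$. If instead $\multipl_{odd} = 0$ (whence $\multipl_1 \ge 2$ is even), attach two degree-$1$ vertices $n, n-1$ by bridges to a vertex of even degree $\ge 4$; such a vertex exists, since otherwise every non-leaf has degree exactly $2$ and $\sum_i d_i = 2n - \multipl_1 < 2n$, contradicting the hypothesis. Deleting $n, n-1$ and lowering that degree by two leaves $\multipl'_1 = \multipl_1 - 2$ with parities unchanged, $\sum_i d'_i = \sum_i d_i - 4 > 2n'$, and the bound drops by exactly $2$. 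In both cases the inductive hypothesis applies to $d'$, and re-attaching the deleted vertices as pendant leaves keeps the graph a core cactus, since a pendant bridge has no cycle on the leaf side and therefore never separates two cycles.

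The main obstacle is arranging the induction so that the defining invariant $\multipl_{odd} \le \multipl_1$ and the edge bound are preserved \emph{simultaneously}. This rests on the parity fact that $\multipl_1 + \multipl_{odd}$ is even, hence $\multipl_1 - \multipl_{odd}$ is even: one may first spend exactly $\multipl_{odd}$ leaves on the odd-degree vertices (each step of the first case keeping $\multipl_1 - \multipl_{odd}$ fixed), and then spend the remaining, necessarily even, surplus of leaves in pairs on even-degree vertices. Verifying that a legal attachment target always exists (an odd-degree vertex of degree $\ge 3$ in the first case, an even-degree vertex of degree $\ge 4$ in the second) and that its removal respects the bound is precisely where the two hypotheses $\sum_i d_i > 2n$ and $m \le \floor{\frac{3(n-1)-\multipl_1}{2}}$ are used.
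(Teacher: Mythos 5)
Your proposal is correct and takes essentially the same route as the paper's proof: necessity via \Cref{lemma:core} together with \Cref{thm:beta} (using $\beta = \multipl_1$), and sufficiency by the same peeling induction — attach a leaf to an odd-degree vertex when $\multipl_{odd} \geq 1$, attach two leaves to an even-degree vertex of degree at least $4$ when $\multipl_{odd} = 0$, and bottom out in the all-even case via \Cref{thm:bridgeless}. The only cosmetic difference is your induction measure ($\multipl_1$ rather than the paper's $\multipl_1 + \multipl_{odd}$), which changes nothing substantive.
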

\begin{proof}
If there is a core cactus realization $G$ of $d$, then $\multipl_1 \geq \multipl_{odd}$ by \Cref{lemma:core}.
Observe that in this case $\beta = \multipl_1$ by definition. Hence, \Cref{thm:beta} implies that
\[
m
\leq \floor{\frac{3(n-1) - \beta}{2}}
=    \floor{\frac{3(n-1) - \multipl_1}{2}}
~.
\]
The converse is proved by induction on $\multipl_1 + \multipl_{odd}$.
In the base case, $\multipl_1 = \multipl_{odd} = 0$, and thus $d_i$ is even for every $n$.
Since $m \leq \floor{1.5(n-1)}$, it follows by \Cref{thm:bridgeless} that
there exists a realization for $d$ as a bridge-less cactus.

For the inductive step, there are two cases.
If $d$ contains an odd number $d_j > 1$, then since $\multipl_{odd} \leq \multipl_1$,
it must be that $d_n = 1$.
Let $d'$ be the sequence 
obtained by subtracting 1 from $d_j$ and removing $n$.
Observe that $\multipl'_{odd} = \multipl_{odd} - 1 \leq \multipl_1 - 1 = \multipl'_1$.
In addition.
\[
\sum_i d'_i = \sum_i d_i - 2 > 2n - 2 = 2n'
~,
\]
and
\[
\sum_i d'_i
=    \sum_i d_i - 2
\leq 2\floor{\frac{3(n - 1) - \multipl_1}{2}} - 2
=    2\floor{\frac{3(n - 1) - \multipl_1 - 2}{2}}
=    2\floor{\frac{3(n' - 1) - \multipl'_1}{2}}
~.
\]
By the inductive hypothesis there is a realization $G'$ of $d'$ as a core cactus.
We obtain a core cactus realization $G$ of $d$ by adding the edge $(j,n)$.

Suppose that $d$ does not contain an odd number $d_j > 1$, but $d_n = 1$.
Then, it must be the case that $d_{n-1} = 1$ and $d_1 \geq 4$.
Let $d'$ be the sequence which is obtained by subtracting 2 from $d_1$ and removing $n-1$ and $n$.
Observe that $0 = \multipl'_{odd} \leq \multipl'_1 = \multipl_1 - 2$.
Also,
\[
\sum_i d'_i = \sum_i d_i - 4 > 2n - 4 = 2n'
~,
\]
and
\[
\sum_i d'_i
=    \sum_i d_i - 4
\leq 2\floor{\frac{3(n-1) - \multipl_1}{2}} - 4
=    2\floor{\frac{3(n-1) - \multipl_1 - 4}{2}}
=    2\floor{\frac{3(n'-1) - \multipl'_1}{2}}
~.
\]
By the induction hypothesis there is a core cactus realization $G'$ of $d'$.
We obtain a core cactus realization $G$ of $d$ by adding the edges $(1,n-1)$ and $(1,n)$.
\end{proof}

The algorithm which is implied by the proof of \Cref{thm:corecactus} initially
connects 1-degree vertices to vertices with odd degree which is greater than 1.
When $\multipl_{odd} = 0$, it attaches two degree-1 vertices to a vertex with
even degree which is larger than $2$.
When all degrees are even, it constructs a bridge-less graph
(the core).

\begin{corollary}
A degree sequence $d$ where
$n \geq 3$, $n < m \leq \floor{\frac{3(n - 1) - \multipl_1}{2}}$ and $\multipl_{odd} \leq \multipl_1$
has a linear time algorithm that computes a core cactus realization of $d$,
where all cycles except maybe one are triangles, and
all other edges are connected to cycle vertices.
\end{corollary}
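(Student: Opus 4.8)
The plan is to read the induction in the proof of \Cref{thm:corecactus} as an explicit iterative algorithm and to verify its running time and the two claimed structural properties. Unrolling the recursion, the procedure runs in three phases. In the first phase, while $d$ still contains an odd entry $d_j>1$, it attaches a degree-$1$ vertex to such a $j$ by a bridge, lowering $d_j$ by $1$ and turning it even; since $\multipl_{odd}\le\multipl_1$, this pairs every odd-degree-$(>1)$ vertex with a distinct leaf and consumes exactly $\multipl_{odd}$ of the leaves. In the second phase, once $\multipl_{odd}=0$, it repeatedly attaches two of the remaining $\multipl_1-\multipl_{odd}$ leaves (an even number) to a single even vertex of residual degree at least $4$ via two bridges, lowering that degree by $2$. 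When no degree-$1$ vertex remains, the residual sequence is all-even and, by \Cref{thm:corecactus}, satisfies $m''\le\floor{1.5(n''-1)}$, so the third phase builds the bridge-less core on it using the linear-time bridge-less cactus algorithm (the corollary following \Cref{thm:triangulated}).

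For the running time, I would avoid keeping the sequence sorted and instead store vertices in bucket arrays indexed by their residual degree, with a pointer to the largest nonempty bucket. Then locating an odd vertex of degree $>1$, locating an even vertex of degree at least $4$, and moving a vertex between buckets after a decrement are all $O(1)$ amortized. Each iteration adds one bridge, and since a cactus has $m=n+c-1=O(n)$ edges by \Cref{obs:euler} and \Cref{lemma:cactusbound}, only $O(n)$ bridges are ever created; the core is produced in linear time by the cited corollary. The main obstacle is the bookkeeping rather than a new idea: a single high-degree vertex may be the target of many leaf attachments in the second phase, so one must argue that the combined cost of finding targets and updating residual degrees telescopes to $O(n)$, which it does because the work per iteration is constant and the number of degree decrements equals the number of added bridges.

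To establish the structure, I would prove by induction on $\multipl_1+\multipl_{odd}$ (tracking the recursion) the invariant that, in the graph produced, every vertex of degree at least $2$ lies on a cycle and every bridge joins a degree-$1$ vertex to a vertex on a cycle. The base case is the bridge-less core, where by \Cref{lemma:bridgeless} all degrees are even and, by the definition of a bridge-less cactus, every edge and hence every vertex of positive degree lies on a cycle, with no bridges present. In the inductive step the vertex that receives the new bridge is either the odd vertex $j$ with $d_j-1\ge 2$ (phase one) or the even vertex of residual degree at least $4$, hence at least $2$ after the decrement (phase two); in both cases its residual degree is at least $2$, so by the inductive hypothesis it already lies on a cycle, while the added edge is a bridge to a fresh leaf. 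This yields the claim that every bridge is incident to a cycle vertex. Finally, the claim that all cycles but possibly one are triangles is inherited verbatim from the core, since the bridge-less cactus algorithm produces a realization with that property and the two bridge-adding phases create no further cycles.
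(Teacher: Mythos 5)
Your proposal is correct and takes essentially the same route as the paper: it unrolls the induction of \Cref{thm:corecactus} into exactly the three-phase algorithm the paper describes before the corollary (leaves attached to odd-degree vertices, then pairs of leaves to even vertices of degree at least $4$, then the bridge-less triangulated core via \Cref{thm:bridgeless}), inheriting feasibility of each step from that theorem's proof. Your bucket-array bookkeeping for linear time and the explicit inductive invariant (every bridge ends at a vertex of residual degree at least $2$, which lies on a cycle of the core) simply spell out details the paper leaves implicit.
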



\subsection{Core Bi-Cactus Graph Realization}

In this section we provide a characterization and a realization algorithm
for core bi-cacti. The approach is similar to the one for
core cactus graphs. One difference is that we use cycles of length 4 and not triangles.
Another is that we sometimes need to use a correction as shown in \Cref{fig:problem}.

We need the following technical lemma.

\begin{lemma}
\label{lemma:technical3}
$\floor{ \half \floor{\frac{4(n - 1) + 1}{3}} } =  \floor{\frac{2(n - 1)}{3}}$.
\end{lemma}
\begin{proof}
Let $n - 1 = 3q - r$, where $q = \ceil{\frac{n - 1}{3}}$ and $r = 3q - (n - 1)$.

If $r = 0$, then
\[
\floor{ \half \floor{\frac{4(n - 1) + 1}{3}} }
= \floor{ \half \floor{\frac{12q + 1}{3}} }
= 2q
= \floor{\frac{2 \cdot 3q}{3}}
= \floor{\frac{2(n - 1)}{3}}
~.
\]

If $r = 1$, then
\[
\floor{ \half \floor{\frac{4(n - 1) + 1}{3}} }
= \floor{ \half \floor{\frac{12q - 4 + 1}{3}} }
= 2q - 1
= \floor{\frac{2(3q - 1)}{3}}
= \floor{\frac{2(n - 1)}{3}}
~.
\]

If $r = 2$, then
\[
\floor{ \half \floor{\frac{4(n - 1) + 1}{3}} }
= \floor{ \half \floor{\frac{12q - 8 + 1}{3}} }
= 2q - 2
= \floor{\frac{2(3q - 2)}{3}}
= \floor{\frac{2(n - 1)}{3}}
~.
\qedhere
\]
\end{proof}

\begin{theorem}
\label{thm:corebicactus}
A degree sequence $d$ where $d_4 \geq 2$, $\sum_i d_i > 2n$ and $\multipl_1 > 0$
has a core bi-cactus realization if and only if
$\multipl_1 \geq \multipl_{odd}$ and $m \leq \floor{\frac{4(n - 1) - \multipl_1}{3}}$. 
\end{theorem}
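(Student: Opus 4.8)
The plan is to follow the same inductive framework used for the core cactus case in \Cref{thm:corecactus}, adapted to the bipartite setting by replacing triangles with $4$-cycles and by carefully handling the ``correction'' phenomenon from \Cref{example:correction}. The forward (necessity) direction is routine: given a core bi-cactus realization $G$, \Cref{lemma:core} gives $\multipl_1 \geq \multipl_{odd}$ (so $\beta = \multipl_1$), and then \Cref{thm:bicactusbeta} yields $m \leq \floor{\frac{4(n-1)-\beta}{3}} = \floor{\frac{4(n-1)-\multipl_1}{3}}$. The substance is the converse, which I would prove by induction on $\multipl_1 + \multipl_{odd}$.

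For the base case, when $\multipl_1 = \multipl_{odd} = 0$ all degrees are even. Here I expect to invoke the bridge-less bi-cactus result, but this is exactly where the first obstacle appears: \Cref{thm:bipartite-bridgeless} requires $m$ to be \emph{even}, whereas the hypothesis of the present theorem does not guarantee this. When $m$ is odd one cannot build a purely bridge-less realization, and this is the real source of the correction in \Cref{fig:problem}: one even-degree vertex must absorb a short bridge path, effectively using $b = \beta + 1$ bridges. So the base analysis must split on the parity of $m$, constructing a bridge-less core plus at most one extra ``correction gadget'' when $m$ is odd, and \Cref{lemma:bicactus-technical} (which shows the bound is attained at either $b=\beta$ or $b=\beta+1$) is precisely what licenses this. \Cref{lemma:technical3} is the arithmetic identity needed to verify that the residual sequence after peeling off the correction gadget still satisfies the bridge-less bi-cactus bound $m \leq 2\floor{\frac{2(n-1)}{3}}$.

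For the inductive step I would mirror \Cref{thm:corecactus} in two cases. If $d$ contains an odd $d_j > 1$, then since $\multipl_{odd} \leq \multipl_1$ we have $d_n = 1$; I form $d'$ by subtracting $1$ from $d_j$ and deleting vertex $n$, attaching the leaf $n$ to $j$ by a bridge. This keeps $\multipl_1' \geq \multipl_{odd}'$ and, by the same floor manipulation $2\floor{\frac{3(n-1)-\multipl_1}{2}} - 2 = 2\floor{\frac{3(n'-1)-\multipl_1'}{2}}$ used before (now with the bi-cactus bound), preserves the edge inequality. If instead all degrees $>1$ are even but $\multipl_1 > 0$, then $d_{n-1} = d_n = 1$ and $d_1 \geq 4$; here I must attach the two leaves \emph{without} creating an odd cycle, so rather than the triangle move of \Cref{thm:corecactus} I insert the two degree-$1$ vertices as a short bridge chain hanging off vertex $1$ (the bipartite analogue), forming $d'$ by subtracting $2$ from $d_1$ and deleting both leaves.

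The main obstacle I anticipate is bookkeeping the parity and the one-time correction so that it is neither used twice nor omitted when needed. Concretely, the induction must carry enough information to know whether the correction gadget has already been spent, because the bound $\floor{\frac{4(n-1)-\multipl_1}{3}}$ sits strictly between the $b=\beta$ value and the $b=\beta+2$ value (the latter being dominated, by \Cref{lemma:bicactus-technical2}). I would handle this by ensuring the leaf-attachment steps always reduce to a residual sequence whose even-degree core is built by \Cref{cor:bi-bridgeless}, invoking the correction only in the single base case where $m$ is odd; \Cref{lemma:bicactus-technical,lemma:bicactus-technical2,lemma:technical3} together guarantee the arithmetic closes exactly at the stated bound.
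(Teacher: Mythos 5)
Your necessity direction and the overall inductive skeleton match the paper, but there is a genuine gap in where you place the parity correction. You plan to ``invoke the correction only in the single base case where $m$ is odd,'' i.e., when $\multipl_1 = \multipl_{odd} = 0$. At that point no correction is possible: by \Cref{lemma:bridgeless}, a cactus all of whose degrees are even has no bridges at all, so an all-even residual sequence cannot absorb any bridge gadget, and by \Cref{obs:bicactus} a bridge-less bi-cactus must have $m$ even. Hence an all-even residual with odd $m$ is simply unrealizable --- for instance $d' = (4,4,2^9)$ has $n' = 11$ and $m' = 13 \leq \floor{\frac{4(n'-1)}{3}} = 13$, yet admits no bi-cactus realization whatsoever (this is precisely why the theorem carries the hypothesis $\multipl_1 > 0$). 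Consequently your inductive step, which attaches a leaf to an odd-degree vertex $j$ with no parity check before recursing, can hand exactly such an unrealizable residual to the induction hypothesis, and the argument breaks; \Cref{lemma:bicactus-technical,lemma:technical3} cannot rescue a construction that does not exist.

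The paper's proof avoids this by performing the correction \emph{during} the last leaf attachment, while an odd-degree vertex and a leaf still exist to anchor the bridges: before recursing it checks whether $\multipl'_1 = 0$ and $m'$ is odd, and in that case removes an additional degree-2 entry to form $d^*$ (so $m^* = m'-1$ is even), verifies $m^* \leq 2\floor{\frac{2(n^*-1)}{3}}$ via \Cref{lemma:technical3}, realizes $d^*$ as a bridge-less bi-cactus, and then attaches the leaf through the degree-2 vertex by the two bridges $(j,n-1)$ and $(n-1,n)$ --- exactly the gadget of \Cref{fig:problem}. The same guard is needed in your second case (two degree-1 vertices, all other degrees even), where the paper adds $(1,n-2)$, $(n-2,n-1)$, $(1,n)$ when the parity is bad; you do not address the correction there at all. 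A further slip in that case: you describe hanging the two leaves as a ``short bridge chain'' off vertex $1$, but a chain $1, n-1, n$ would give the middle vertex degree $2$ rather than $1$ and lowers $d_1$ by only $1$; the move consistent with your own arithmetic (subtract $2$ from $d_1$, delete both leaves) is two separate pendant edges $(1,n-1)$ and $(1,n)$, which is what the paper does.
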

\begin{proof}
If there is a core bi-cactus realization of $d$, then by \Cref{lemma:core}
we have that $\multipl_1 \geq \multipl_{odd}$. Moreover,
\Cref{thm:bicactusbeta} implies that
\[
m
\leq \floor{\frac{4(n - 1) - \beta}{3}}
=    \floor{\frac{4(n - 1) - \multipl_1}{3}}
~.
\]

The converse is proved by induction on $\multipl_1 + \multipl_{odd}$.
In the base case, $\multipl_1 + \multipl_{odd} = 0$. In this case,
$d_i$ is even for every $i$, $n \geq 4$, and $m$ is even.
Since $m$ is even, we have that $m\leq 2\floor{2(n-1)/3}$, and it follows by \Cref{thm:bipartite-bridgeless} 
that there is a realization for $d$ as a bridge-less bi-cactus.

For the inductive step, there are two cases.
If $d$ contains an odd number $d_j > 1$, then since $\multipl_{odd} \leq \multipl_1$,
it must be that $d_n = 1$.
Let $d'$ be the sequence obtained by subtracting 1 from $d_j$ and removing $n$.
Observe that $\multipl'_{odd} = \multipl_{odd} - 1 \leq \multipl_1 - 1 = \multipl'_1$,
$d'_4 \geq 2$,
and
\[
\sum_i d'_i = \sum_i d_i - 2 > 2n - 2 = 2n'
~.
\]
In addition,
\[
\sum_i d'_i
=    \sum_i d_i - 2
\leq 2\floor{\frac{4(n - 1) - \multipl_1}{3}} - 2
=    2\floor{\frac{4(n' - 1) - \multipl'_1}{3}}
~.
\]
First, suppose that $\multipl'_1 \geq 1$ or $m'$ is even.
By the inductive hypothesis there is a realization $G'$ of $d'$ as a core bi-cactus.
We obtain a core cactus realization $G$ of $d$ by adding the edge $(j,n)$.
If $\multipl'_1 = 0$ and $m'$ is odd, we create another sequence $d^*$ by removing $d'_{n'} = 2$.
Observe that $n^* = n' - 1$, $m^* = m' - 1$, and $\multipl'_1 = 0$.
Hence,
\[
\sum_i d^*_i  = \sum_i d'_i - 2 > 2n' - 2 = 2n^*
~,
\]
and
\[
m^*
=    m' - 1
\leq \floor{\frac{4(n' - 1)}{3}} - 1
=    \floor{\frac{4(n^* - 1) + 1}{3}}
~.
\]
Since $m^*$ is even, we have that
\[
m^*
\leq 2\floor{ \half \floor{\frac{4(n^* - 1) + 1}{3}} }
~.
\]
By \Cref{lemma:technical3} it follows that
\(
m^*
\leq 2\floor{2(n^* - 1)/3}
\).
Moreover, since $m > n$, it must be that $n \geq 8$. Hence, $n^* \geq 6$.
By the inductive hypothesis there is a realization $G^*$ of $d^*$ as a core bi-cactus.
We obtain a core bi-cactus realization $G$ of $d$ by adding the edges $(j,n-1)$ and $(n-1,n)$.

Suppose that $d$ does not contain an odd number $d_j > 1$, but $d_n = 1$.
Then, it must be that $d_{n-1} = 1$ and $d_1 \geq 4$.
Let $d'$ be the sequence obtained by subtracting 2 from $d_1$ and removing $n-1$ and $n$.
Observe that $0 = \multipl'_{odd} \leq \multipl'_1 = \multipl_1 - 2$, $d'_4 \geq 2$, and
\[
\sum_i d'_i = \sum_i d_i - 4 > 2n - 4 = 2n'
~.
\]
Also,
\[
\sum_i d'_i
=    \sum_i d_i - 4
\leq 2\floor{\frac{4(n - 1) - \multipl_1}{3}} - 4
=    2\floor{\frac{4(n' - 1) - \multipl'_1}{3}}
~.
\]
Suppose that $\multipl'_1 \geq 1$ or $m$ is even.
By the inductive hypothesis there is a realization $G'$ of $d'$ as a core cactus.
We obtain a core bi-cactus realization $G$ of $d$ by adding the edges $(1,n-1)$ and $(1,n)$.

If $\multipl'_1 = 0$ and $m$ is odd, continue as in the first case.
Since $n \geq 8$, we have that $n^* \geq 5$.
We obtain a core bi-cactus realization $G$ of $d$ by adding the edges $(1,n-2)$, $(n-2,n-1)$, and $(1,n)$.
\end{proof}

\begin{corollary}
Let $d$ be a degree sequence such that $d_4 \geq 2$, 
$\multipl_1 \geq \max\set{\multipl_{odd},1}$ and $n < m \leq \floor{\frac{4(n - 1) - \multipl_1}{3}}$.
Then, there is a linear time algorithm that computes a core bi-cactus realization of $d$,
where all cycles except maybe one are of length $4$.
Also, all other edges, but maybe one, are connected to cycle vertices.
\end{corollary}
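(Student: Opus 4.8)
The plan is to turn the inductive construction in the proof of \Cref{thm:corebicactus} into a direct, non-recursive algorithm and then to argue that it runs in linear time while maintaining the two claimed structural invariants. First I would verify that the hypotheses of \Cref{thm:corebicactus} are met: since $m > n$ we have $\sum_i d_i = 2m > 2n$, and $\multipl_1 \geq 1$ gives $\multipl_1 > 0$, so a core bi-cactus realization is guaranteed to exist. What remains is to show the specific construction of that proof can be implemented efficiently and inspected for structure.

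The recursion in the proof of \Cref{thm:corebicactus} peels off degree-$1$ vertices one at a time, always attaching a leaf by a bridge to an odd-degree vertex greater than $1$ whenever one is available, otherwise attaching two leaves to an even-degree vertex of degree at least $4$, and finally, once all residual degrees are even, building a bridge-less core via \Cref{thm:bipartite-bridgeless}. I would flatten this into three phases. In the first phase, for each of the $\multipl_{odd}$ odd-degree vertices greater than $1$, one degree-$1$ vertex is attached by a bridge, decrementing that degree to an even value. In the second phase, the remaining $\multipl_1 - \multipl_{odd}$ leaves (an even count, since $\multipl_1 + \multipl_{odd}$ is even) are attached in pairs to even-degree vertices of degree at least $4$. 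In the third phase, all residual degrees are even and the residual volume still satisfies the bound feeding \Cref{cor:bi-bridgeless}, so the bridge-less core is built with every cycle but at most one being a $4$-cycle.

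For the running time I would maintain the vertices in degree buckets rather than re-sorting after each modification; because each leaf attachment changes a single residual degree by $1$ or $2$, a vertex moves only a constant number of buckets per operation, and each of the $O(m)$ edges is created exactly once, giving linear total time. The structural claims then follow directly: \Cref{cor:bi-bridgeless} delivers the property that all cycles except at most one have length $4$, while the bridges created in the first two phases are precisely the bridges of the core, each attached to a vertex that becomes a cycle vertex of the core produced in the third phase.

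The main subtlety, and the one place where the bookkeeping is not purely routine, is the single correction illustrated in \Cref{fig:problem}. When the last degree-$1$ vertex is reached with $\multipl'_1 = 0$ and the residual $m$ odd, parity forces the insertion of an intermediate degree-$2$ vertex, so one bridge attaches to a bridge vertex rather than a cycle vertex; this is exactly the \enquote{maybe one} exception in the statement. I would argue this correction happens at most once, since $\multipl_1$ is strictly decreasing and the condition $\multipl'_1 = 0$ is reached only at the final leaf attachment. Confirming that the residual bound of \Cref{cor:bi-bridgeless} survives this correction, via \Cref{lemma:technical3} exactly as in the proof of \Cref{thm:corebicactus}, is the key step to get right, and is the principal obstacle to a fully mechanical argument.
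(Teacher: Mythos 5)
Your proposal is correct and takes essentially the same route as the paper: the corollary is precisely the algorithm extracted from the inductive proof of \Cref{thm:corebicactus} --- leaves attached by bridges to odd-degree vertices greater than $1$, then the remaining leaves in pairs to even-degree vertices of degree at least $4$, then a bridge-less core built via \Cref{cor:bi-bridgeless}, with the single parity correction (the intermediate degree-$2$ vertex of \Cref{fig:problem}) accounting for the one bridge not incident to a cycle vertex. Your bucket-based bookkeeping for the linear running time is an implementation detail the paper leaves implicit, but it is the standard and intended one.
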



\section{Realization of Cactus and Bi-Cactus Graphs}
\label{sec:cactus}

In this section we give a characterization for realization of both
cactus graphs and bi-cactus graphs.


\subsection{Cactus Graph Realization}

Characterization and a realization algorithm for cactus graphs were given before in~\cite{Rao81cactus}.
We include a proof of the following theorem for completeness.

\begin{theorem}[\cite{Rao81cactus}]
\label{thm:cactus}
Let $d$ be a degree sequence such that $n \geq 3$ and $\sum_i d_i \geq 2n$.
Then there is a cactus realization of $d$ if and only if $m \leq \floor{\frac{3(n - 1) - \beta}{2}}$.
\end{theorem}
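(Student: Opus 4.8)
The proof splits into the two standard directions. The forward direction (necessity) is immediate: if $d$ has a cactus realization, then Theorem~\ref{thm:beta} gives exactly the bound $m \leq \floor{\frac{3(n-1)-\beta}{2}}$. So the work is all in the converse, and I would prove it by induction, reducing a general cactus realization to the core-cactus case handled in Theorem~\ref{thm:corecactus}. The natural induction parameter is $\multipl_1 + \multipl_{odd}$ relative to the ``excess'' of degree-$1$ vertices over odd vertices, since the whole point of a cactus (as opposed to a core cactus) is that it may contain bridges that split the graph into parts each carrying a cycle. Equivalently, I would induct to peel off degree-$1$ vertices that cannot be absorbed into the bridge-less core, i.e.\ the situation where $\beta = \half(\multipl_1 + \multipl_{odd}) > \multipl_1$, which is precisely the regime \emph{not} covered by Theorem~\ref{thm:corecactus}.

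First I would dispose of the case $\multipl_1 \geq \multipl_{odd}$: here $\beta = \multipl_1$, the bound reads $m \leq \floor{\frac{3(n-1)-\multipl_1}{2}}$, and one invokes Theorem~\ref{thm:corecactus} directly (treating the boundary $\sum_i d_i = 2n$ via Theorem~\ref{thm:pseudotree}). The interesting case is $\multipl_{odd} > \multipl_1$, so that $\beta = \half(\multipl_1 + \multipl_{odd})$. The idea is to pair up two odd-degree vertices of degree $\geq 3$ by joining them with a bridge, which reduces each of their degrees by $1$ (making them even or reducing them toward the even case) and removes two odd-degree vertices from the count without changing $n$. I would let $d'$ be obtained by subtracting $1$ from two of the odd degrees $d_i, d_j > 1$, and form the realization $G$ by adding the bridge $(i,j)$ to a realization $G'$ of $d'$. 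The bookkeeping to check is that $\multipl'_{odd} = \multipl_{odd} - 2$, that $\multipl'_1$ is unchanged (or grows only in the controlled even-$=2$ case), so the new bridge parameter satisfies $\beta' = \beta - 1$, and that the volume bound passes to $d'$: since $m' = m - 1$ and $n' = n$, one needs $m - 1 \leq \floor{\frac{3(n-1) - (\beta-1)}{2}}$, which is equivalent to the hypothesis on $d$.

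The main obstacle I expect is the connectivity of the assembled graph $G$ and the correct endgame of the induction. Subtracting $1$ from two odd vertices and adding a single bridge between them keeps the graph connected only if $G'$ is itself connected \emph{and} the new bridge does not create a second component; but if $G'$ already connects $i$ and $j$, the added edge $(i,j)$ could close an illegal extra cycle (an edge lying in two cycles) or produce a multi-edge, violating the cactus/simple-graph property. To finesse this I would instead structure the reduction so that one odd vertex is handled at a time against a fresh degree-$1$ vertex, or argue that the realization produced by the inductive hypothesis can always be taken to have $i$ and $j$ in positions where a bridge is legal (e.g.\ keep the odd-degree ``defect'' vertices as endpoints of the current structure, mirroring how Theorem~\ref{thm:corecactus}'s algorithm attaches leaves to odd vertices). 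The cleanest route is probably to reduce to Theorem~\ref{thm:corecactus} in a single step: repeatedly insert bridges joining pairs of odd vertices until $\multipl_{odd} \leq \multipl_1$ is achieved, maintaining the invariant $\beta' = \beta - (\text{number of bridges added})$ and the volume bound, and then apply the core-cactus theorem to the residual even-dominated sequence. Verifying that this insertion can always be done while preserving simplicity and connectivity, and that it terminates exactly at the core-cactus threshold, is where the real care lies; the arithmetic identities (of the flavor already established in Lemma~\ref{lemma:bicactus-technical}) are routine by comparison.
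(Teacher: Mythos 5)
Your necessity direction and the reduction to Theorem~\ref{thm:corecactus} when $\multipl_1 \geq \multipl_{odd}$ match the paper, but your main inductive move for the regime $\multipl_{odd} > \multipl_1$ --- decrement two odd degrees $d_i, d_j \geq 3$, realize the residual sequence $d'$, and add the edge $(i,j)$ --- has a gap that none of your proposed fixes repairs. Since this reduction keeps $n' = n$, the inductive realization $G'$ is a \emph{connected} graph already containing both $i$ and $j$; adding $(i,j)$ therefore can never create a bridge. It closes a cycle through the $i$--$j$ path in $G'$ (or a multi-edge if they are adjacent), and that cycle will in general share edges with existing cycles of $G'$, destroying the cactus property. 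Moreover, $\beta$ lower-bounds the number of \emph{bridges}, so your accounting $\beta' = \beta - 1$ presupposes the new edge is a bridge, which it cannot be. This is not a matter of choosing $G'$ so that $i$ and $j$ sit ``in positions where a bridge is legal'': in a connected spanning graph no added edge is ever a bridge. The same objection defeats your ``cleanest route'' of inserting all the odd--odd bridges first and then applying the core-cactus theorem, since the core cactus it produces is again connected and spans all vertices. The one viable fix you mention --- pairing an odd vertex with a \emph{fresh} degree-1 vertex --- is exactly the paper's move, but it only exists when $\multipl_1 > 0$, leaving you with no move at all when $\multipl_1 = 0$ and $\multipl_{odd} \geq 2$.

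The paper's proof avoids this trap by making every reduction \emph{remove vertices}, so each added edge touches a vertex absent from $G'$. When $\multipl_1 > 0$, it removes a degree-1 vertex $n$, decrements an odd $d_j \geq 3$, and attaches $n$ to $j$ (a genuine bridge, since $n$ is new). When $\multipl_1 = 0$, it removes \emph{two degree-2 vertices} (which must exist by the volume bound), decrements the smallest odd degree $d_j$ by $2$, and adds a triangle on $j$ and the two removed vertices; the new cycle consists entirely of new edges, hence is edge-disjoint from every cycle of $G'$, and it changes the parity bookkeeping correctly ($d_j$ stays odd or becomes $1$). This triangle move is the ingredient your proposal is missing; without it, or without a substantially restructured induction over \emph{disconnected} cactus forests in which the bridge $(i,j)$ joins two distinct components, the case $\multipl_1 = 0$, $\multipl_{odd} \geq 2$ cannot be completed.
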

\begin{proof}
First, if there is a realization of $d$ as a cactus, then \Cref{thm:beta} implies that
$m \leq \floor{\frac{3(n-1) - \beta}{2}}$.

For the other direction, suppose that
$m \leq \floor{\frac{3(n - 1) - \beta}{2}}$.
If $\multipl_{odd} \leq \multipl_1$, then \Cref{thm:corecactus} implies that there is
a realization of $d$ as a core cactus.
So now suppose $\multipl_{odd} > \multipl_1$.
We prove the claim by induction on $\multipl_1 + \multipl_{odd}$ and on $\multipl_{odd}$.
In the base case, there are two options.

\begin{compactitem}
\item If $\sum_i d_i = 2n$, then by \Cref{thm:pseudotree} there exists a unicyclic realization of $d$.

\item $d_i$ is even for every $n$ or $\multipl_1 = \multipl_{odd} = 0$.
        Since $\sum_i d_i \leq 2\floor{1.5(n-1)}$,
        it follows by \Cref{thm:bridgeless} that there exists a realization for $d$
        as a bridge-less cactus.
\end{compactitem}

For the inductive step, there are two cases.
%
%
First, suppose that $\multipl_1 > 0$.
Since $\multipl_{odd} \geq \multipl_1$, the sequence $d$ must contain an odd number $d_j \geq 3$.
Let $d'$ be the sequence which is obtained by subtracting 1 from $d_j$
and removing $n$.
Observe that $\multipl'_{odd} = \multipl_{odd} - 1 \geq \multipl_1 - 1 = \multipl'_1$,
\[
\sum_i d'_i = \sum_i d_i - 2 > 2n - 2 = 2n'
~,
\]
and that
\begin{align*}
\sum_i d'_i
=    \sum_i d_i - 2
& \leq 2\floor{ \frac{3(n - 1) - \half(\multipl_1 + \multipl_{odd})}{2} } - 2 \\
& =     2\floor{ \frac{3(n - 1) - \half(\multipl_1 + \multipl_{odd}) - 2}{2} }  
 =     2\floor{ \frac{3(n' -1) - \half(\multipl'_1 + \multipl'_{odd})}{2} }
~.
\end{align*}
By the induction hypothesis there is a realization $G'$ of $d'$ as a cactus.
We obtain a cactus realization $G$ of $d$ by adding the edge $(j,n)$.


Suppose that $\multipl_1 = 0$.
In this case, $d$ contains at least two odd numbers, i.e., $\multipl_{odd} \geq 2$.
Let $d_j \geq 3$ be the smallest odd number in $d$.
Since $\sum_i d_i \leq 2\floor{\frac{3(n - 1) - \beta}{2}}$, it must be that $d_n = d_{n-1} = 2$.
Let $d'$ be the sequence which is obtained by removing $n$ and $n-1$ and
subtracting 2 from $d_j$.
Observe that $\multipl'_{odd} + \multipl'_1 = \multipl_{odd} + \multipl_1$.
In particular, if $d_j=3$, then $\multipl'_1=1$, and otherwise $\multipl'_1=0$.
Also,
\[
\sum_i d'_i = \sum_i d_i - 6 > 2n - 6 = 2n' - 2
~,
\]
namely, $\sum_i d'_i \geq 2n'$.
In addition,
\begin{align*}
\sum_i d'_i
=    \sum_i d_i - 6
& \leq 2\floor{ \frac{3(n - 1) - \half(\multipl_1 + \multipl_{odd})}{2} } - 6 \\
& =    2\floor{ \frac{3(n - 1) - \half(\multipl_1 + \multipl_{odd}) - 6}{2} } 
 =    2\floor{ \frac{3(n' -1) - \half(\multipl'_1 + \multipl'_{odd})}{2} }
~.
\end{align*}
By the induction hypothesis $d'$ has a realization as a cactus $G'$.
We obtain a realization $G$ for $d$ by adding a triangle of the vertices $j$, $n-1$, and $n$.
\end{proof}

The algorithm which is implied by our proof of \Cref{thm:cactus} works as follows.
If $\multipl_1 \geq \multipl_{odd}$ it constructs a core cactus.
Otherwise, it connects 1-degree vertices to vertices with an odd degree which is greater
than 1. When $\multipl_1 = 0$, and as long as $\multipl_{odd} > 0$, it adds a triangle
consisting of two degree-2 vertices and a vertex $j$ with the smallest odd degree.
This is done until the degree of $j$ becomes $1$.
If the volume becomes $2n$, then a unicyclic graph is constructed.
Otherwise, a sequence consisting of even numbers is obtained,
and a bridge-less cactus is created.

\begin{corollary}
Let $d$ be a degree sequence such that $n \geq 3$, $n - 1 \leq m \leq \floor{\frac{3(n - 1) - \beta}{2}}$.
There is a linear time algorithm that computes a cactus realization of $d$,
where all cycles except maybe one are triangles.
\end{corollary}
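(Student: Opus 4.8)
The plan is to assemble the corollary from the constructive proof of \Cref{thm:cactus} together with the linear-time realization corollaries established for its base cases, and then to verify the two claims that go beyond mere existence: the running time and the structure of the cycles. Correctness of the realization itself is already guaranteed by \Cref{thm:cactus}, so I would treat existence as given and concentrate on these two points.

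First I would dispose of the boundary regime $m = n-1$. Here $\sum_i d_i = 2(n-1)$, so \Cref{thm:forest} yields a tree realization, which is a connected acyclic graph and hence a cactus; it is computable in linear time and the cycle condition holds vacuously. For the remaining range $n \leq m \leq \floor{\frac{3(n-1) - \beta}{2}}$ we have $\sum_i d_i \geq 2n$, so \Cref{thm:cactus} applies, and unrolling the recursion of its proof gives exactly the procedure described after that theorem: if $\multipl_1 \geq \multipl_{odd}$ it defers to the core-cactus routine; otherwise it repeatedly attaches a degree-$1$ vertex as a leaf of an odd-degree vertex while $\multipl_1 > 0$, and once $\multipl_1 = 0$ it repeatedly forms a triangle from two degree-$2$ vertices and a smallest odd-degree vertex $j$, lowering $d_j$ toward $1$, terminating in either a unicyclic realization (if the residual volume equals $2n$) or a bridge-less cactus (if the residual degrees are all even).

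Next I would argue the cycle structure by tracking which steps create cycles. Leaf attachments create no cycle; each triangle-addition step creates exactly a triangle; and the terminal base case is handled by one of the three earlier realization corollaries -- the unicyclic corollary (one cycle), the bridge-less cactus corollary, or the core-cactus corollary -- each of which produces a realization whose cycles are all triangles except possibly one. Since every cycle introduced before the base case is a triangle, the single permissible non-triangle cycle can only be inherited from the base case, which establishes the claim.

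The main obstacle, and the heart of the proof, is the linear-time bound: a literal recursive implementation would re-scan or re-sort the residual sequence at each of the $\Theta(n)$ steps, costing $\Theta(n^2)$. To avoid this I would maintain the multiset of residual degrees in bucket arrays indexed by degree value (legitimate since all degrees lie in $\set{1,\ldots,n-1}$), a pointer that traverses the odd-degree vertices in nondecreasing order after a single linear bucket sort, and stacks of the currently available degree-$1$ and degree-$2$ vertices. Each leaf attachment and each triangle addition then inspects and updates only $O(1)$ buckets and pointers, and the handoffs between the leaf phase, the triangle phase, and the terminal unicyclic/bridge-less/core routine occur $O(1)$ times. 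The points I would still need to check carefully are that the smallest-odd-degree pointer only ever advances (so its maintenance is $O(n)$ overall), that the terminal routine itself runs in linear time by the cited corollaries, and that its output preserves the single-non-triangle-cycle invariant; since $m \leq \floor{\frac{3(n-1) - \beta}{2}} = O(n)$, the total work is then $O(n)$.
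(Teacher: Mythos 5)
Your proposal is correct and takes essentially the same route as the paper, whose justification of this corollary is exactly the unrolling of the constructive proof of \Cref{thm:cactus} into a leaf-attachment/triangle-addition procedure that terminates in the unicyclic, bridge-less, or core-cactus routines, whose corollaries already supply the linear running time and the at-most-one-non-triangle-cycle structure (your explicit handling of the boundary case $m=n-1$ via \Cref{thm:forest} is a detail the paper leaves implicit). One minor inaccuracy: the leaf and triangle phases can alternate $\Theta(n)$ times, not $O(1)$ times (each run of triangle additions ends by dropping the smallest odd degree to $1$, triggering one leaf attachment), but since each such handoff costs $O(1)$ work this does not affect your linear-time bound.
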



\subsection{Bi-Cactus Graph Realization}

In this section we provide a full characterization and a realization algorithm
for bicactus graphs. The approach is similar to the one for cactus graphs.

\begin{theorem}
\label{thm:bicactus}
Let $d$ be a degree sequence such that $d_4 \geq 2$, $\sum_i d_i > 2n$
and $\multipl_{odd} + \multipl_1 \geq 2$.
%
There is a bi-cactus realization of $d$ if and only if $m \leq \floor{\frac{4(n - 1) - \beta}{3}}$.
\end{theorem}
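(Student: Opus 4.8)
The plan is to follow the blueprint of \Cref{thm:cactus}, replacing triangles with $4$-cycles throughout and inheriting the parity corrections from the core case. For the ``only if'' direction, I note that $d_4 \geq 2$ forces $n \geq 4$ and that $\multipl_1 + \multipl_{odd} \geq 2$ forces $\beta \geq 1$, so \Cref{thm:bicactusbeta} immediately yields $m \leq \floor{\frac{4(n-1) - \beta}{3}}$. For the ``if'' direction, I would first dispose of the case $\multipl_{odd} \leq \multipl_1$: here $\multipl_1 + \multipl_{odd} \geq 2$ forces $\multipl_1 \geq 1$, so $\beta = \multipl_1 > 0$ and the hypotheses of \Cref{thm:corebicactus} hold verbatim (since $\floor{\frac{4(n-1)-\beta}{3}} = \floor{\frac{4(n-1)-\multipl_1}{3}}$), giving a core bi-cactus realization.

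The remaining case $\multipl_{odd} > \multipl_1$ (so $\beta = \half(\multipl_1 + \multipl_{odd})$) I would handle by induction, ordering sequences lexicographically by $(\multipl_1 + \multipl_{odd},\ \sum_{i:\, d_i \text{ odd}} d_i)$; the secondary term is what strictly decreases when the smallest odd degree is lowered from a value $\geq 5$, a subtlety not captured by tracking $\multipl_{odd}$ alone. Two situations terminate the recursion: if at some point $\multipl_{odd} \leq \multipl_1$ we revert to the core case above, and if the running volume equals $2n$ we invoke the bi-unicyclic realization of \Cref{thm:bipseudotree}, whose side conditions $d_4 \geq 2$ and $d \neq (2^n)$ for odd $n$ are met (the latter because an odd degree survives in this regime).

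For the inductive step there are two cases mirroring \Cref{thm:cactus}. If $\multipl_1 > 0$, then since $\multipl_{odd} > \multipl_1$ there is an odd $d_j \geq 3$; I subtract $1$ from $d_j$, delete the degree-$1$ vertex $n$, and reattach it by the bridge $(j,n)$, which preserves the difference $\multipl_{odd} - \multipl_1$ and hence keeps us in this case. If $\multipl_1 = 0$, then $\multipl_{odd} \geq 2$; the volume bound forces $d_{n-2} = d_{n-1} = d_n = 2$, and I build a $4$-cycle on $\{j, n-2, n-1, n\}$, where $j$ realizes the smallest odd degree, lowering $d_j$ by $2$. In both steps the inequality $m \leq \floor{\frac{4(n-1)-\beta}{3}}$ is maintained through the same floor identities already established in \Cref{lemma:bicactus-technical,lemma:bicactus-technical2} (a $4$-cycle removes $3$ vertices and $4$ edges, a bridge removes one of each), and one checks $d'_4 \geq 2$ and $\sum_i d'_i \geq 2n'$ so that the recursive hypotheses, or those of the invoked sub-theorem, apply.

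The main obstacle is the parity correction, which is absent in the cactus case. Since a bi-cactus admits only even cycles (\Cref{obs:bicactus}), a residual sequence of all-even degrees with \emph{odd} $m$ has no bridge-less realization and so cannot simply be passed to \Cref{thm:bipartite-bridgeless}; the remedy is to spend one extra bridge -- realizing the $b = \beta + 1$ branch that the bound $\floor{\frac{4(n-1)-\beta}{3}}$ already tolerates by \Cref{thm:bicactusbeta} -- by peeling off one more degree-$2$ vertex and attaching it through a length-two path, exactly as in \Cref{thm:corebicactus} via \Cref{lemma:technical3}. The delicate point is to confirm that at most one such correction is ever forced and that it never violates the edge bound; in the present theorem this is inherited through the hand-off to \Cref{thm:corebicactus}, so the real work is to verify that every reduction stays in the $\multipl_{odd} > \multipl_1$ regime, reaches volume $2n$, or legitimately enters the core regime.
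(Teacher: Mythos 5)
Your proposal is correct and follows the paper's overall route --- the only-if direction via \Cref{thm:bicactusbeta}, a hand-off to \Cref{thm:corebicactus} in the core regime, and an induction whose steps are a bridge to an odd-degree vertex (when $\multipl_1 > 0$) or a $4$-cycle through the smallest odd degree and three degree-$2$ vertices (when $\multipl_1 = 0$), with \Cref{thm:bipseudotree} catching the volume-$2n$ case --- but two of your choices genuinely differ from the paper, and both are improvements. First, the paper runs its induction over the non-strict regime $\multipl_{odd} \geq \multipl_1$ (only $\multipl_{odd} < \multipl_1$ is sent to the core theorem), so its bridge step can leave an all-even residual with $m'$ odd, and it must then repeat the $d^*$ parity correction of \Cref{thm:corebicactus} inside this very proof. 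Your strict split $\multipl_{odd} > \multipl_1$ makes the bridge step preserve the positive difference $\multipl_{odd} - \multipl_1$, so a vertex of odd degree greater than $1$ always survives, the all-even/odd-$m$ situation never arises inside the induction, and the correction really is confined to \Cref{thm:corebicactus}, exactly as you assert. Second, your lexicographic measure $\paren{\multipl_1 + \multipl_{odd},\ \sum_{i : d_i \text{ odd}} d_i}$ repairs a genuine lacuna: the paper inducts ``on $\multipl_1 + \multipl_{odd}$ and on $\multipl_{odd}$,'' yet when $\multipl_1 = 0$ and the smallest odd degree is at least $5$, neither of those quantities decreases under the $4$-cycle step, whereas your secondary sum drops by $2$. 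One inaccuracy in your write-up is cosmetic: maintaining $m \leq \floor{\frac{4(n-1)-\beta}{3}}$ across the reductions is direct floor arithmetic (the numerator drops by exactly $3$ per bridge step and by $12$ per $4$-cycle step, matching the loss of one and four edges respectively); \Cref{lemma:bicactus-technical,lemma:bicactus-technical2} are used to prove \Cref{thm:bicactusbeta} itself, by comparing the cases $b = \beta$ and $b = \beta + 1$, and play no role in this inductive bookkeeping.
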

\begin{proof}
If there is a bi-cactus realization of $d$, then $m \leq \floor{\frac{4(n-1) - \beta}{3}}$ by 
\Cref{thm:bicactusbeta}.

For the other direction, suppose that $m \leq \floor{\frac{4(n - 1) - \beta}{3}}$.
If $\multipl_{odd} < \multipl_1$, 
then \Cref{thm:corebicactus} implies that there is a realization of $d$ as a core bi-cactus.

Suppose that $\multipl_{odd} \geq \multipl_1$. That is, $\beta = (\multipl_1 + \multipl_{odd})/2$.
We prove the claim by induction on $\multipl_1 + \multipl_{odd}$ and on $\multipl_{odd}$.
In the base case, there are two options.
\begin{compactitem}
\item $\sum_i d_i = 2n$ and $d_4 \geq 2$. Then there exists a bi-unicyclic realization of $d$
          due to \Cref{thm:bipseudotree}.
\item $d_i$ is even for every $n$, $m$ is even, and $n \geq 4$. 
         Since $\sum_i d_i \leq 2\floor{\frac{2(n-1)}{3}}$,
        it follows by \Cref{thm:bipartite-bridgeless} that there exists a realization for $d$
        as a bridge-less bi-cactus.
\end{compactitem}

For the inductive step, there are two cases.
%
%
First, supposed that $\multipl_1 > 0$.
Since $\multipl_{odd} \geq \multipl_1$, the sequence $d$ must contain an odd number $d_j \geq 3$.
Let $d'$ be the sequence which is obtained by subtracting 1 from $d_j$ and removing $n$.
Observe that $\multipl'_{odd} = \multipl_{odd} - 1 \geq \multipl_1 - 1 = \multipl'_1$,
$d'_4 \geq 2$,
\[
\sum_i d'_i = \sum_i d_i - 2 > 2n - 2 = 2n'
~,
\]
and that
\begin{align*}
\sum_i d'_i
=    \sum_i d_i - 2
& \leq 2\floor{ \frac{4(n - 1) - \half(\multipl_1 + \multipl_{odd})}{3} } - 2 \\
& =    2\floor{ \frac{4(n - 1) - \half(\multipl_1 + \multipl_{odd}) - 3}{3} } 
 =    2\floor{ \frac{4(n' -1) - \half(\multipl'_1 + \multipl'_{odd})}{3} }
~.
\end{align*}

First, suppose that $\multipl'_{odd} \geq 1$ or $m'$ is even.
By the induction hypothesis there is a bi-cactus realization $G'$ of $d'$.
We get a bi-cactus realization $G$ of $d$ by adding the edge $(j,n)$.

Next, assume that $\multipl'_{odd} = \multipl'_1 = 0$ and $m'$ is odd.
In this case, we construct a sequence $d^*$ as in the first case of \Cref{thm:corebicactus}.
By the inductive hypothesis there is a realization $G^*$ of $d^*$ as a bi-cactus.
We obtain a bi-cactus realization $G$ of $d$ by adding the edges $(j,n-1)$ and $(n-1,n)$.


The second case is when $\multipl_1 = 0$. In this case, $\multipl_{odd} \geq 2$.
Let $d_j \geq 3$ be the smallest odd number in $d$.
Since $\sum_i d_i \leq 2\floor{\frac{4(n - 1) - \beta}{3}}$, it must be that $d_{n-2} = 2$.
Let $d'$ be the sequence which is obtained by removing $n-2$, $n-1$, and $n$ and
subtracting 2 from $d_j$.
Observe that $\multipl'_{odd} + \multipl'_1 = \multipl_{odd} + \multipl_1$.
In particular, if $d_j=3$, then $\multipl'_1=1$, and otherwise $\multipl'_1=0$.
Also,
\[
\sum_i d'_i = \sum_i d_i - 8 > 2n - 8 = 2n' - 2
~,
\]
namely, $\sum_i d'_i \geq 2n'$.
In addition,
\begin{align*}
\sum_i d'_i
=    \sum_i d_i - 8
& \leq 2\floor{ \frac{4(n - 1) - \half(\multipl_1 + \multipl_{odd})}{3} } - 8 \\
& = 2\floor{ \frac{4(n - 1) - \half(\multipl_1 + \multipl_{odd}) - 12}{3} } 
 = 2\floor{ \frac{4(n' -1) - \half(\multipl'_1 + \multipl'_{odd})}{3} }
~.
\end{align*}
%
Observe that since $m > n$, it must be that $n \geq 8$. Hence, $n' \geq 5$ and 
$d'_4 \geq 2$.
By the induction hypothesis $d'$ has a realization as a bi-cactus $G'$.
We obtain a realization $G$ for $d$ by adding a cycle of the vertices $j$, $n-2$, $n-1$, and $n$.
\end{proof}

The algorithm which is described in the proof of \Cref{thm:bicactus} works
similarly to the one for cacti.
Also, recall that the case where $\beta = 0$ (i.e., $\multipl_1 = \multipl_{odd} = 0$) is covered by \Cref{cor:bi-bridgeless}.

\begin{corollary}
Let $d$ be a degree sequence such that $n \geq 3$, 
$\beta > 0$ and $n - 1 \leq m \leq \floor{\frac{4(n - 1) - \beta}{3}}$.
There is a linear time algorithm that computes a bi-cactus realization of $d$,
where all cycles except maybe one are of length $4$.
\end{corollary}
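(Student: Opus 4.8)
The plan is to extract the algorithm directly from the constructive proof of \Cref{thm:bicactus} and then verify its running time and its cycle-length guarantee. First I would dispose of the two boundary values of $m$ that lie below the range of \Cref{thm:bicactus}: when $m = n-1$ the sequence is realized by a tree via \Cref{thm:forest}, and when $m = n$ (equivalently $\sum_i d_i = 2n$) it is realized by a bi-unicyclic graph via \Cref{thm:bipseudotree}; both constructions are linear and produce at most one cycle. For $m > n$ the hypotheses $\beta > 0$ and $n-1 \le m \le \floor{\frac{4(n-1)-\beta}{3}}$ already force $d_4 \ge 2$ (by \Cref{obs:bicactus-d4}) and $\multipl_1 + \multipl_{odd} \ge 2$ (since the total number of odd-degree vertices is even, so being positive it is at least $2$), so \Cref{thm:bicactus} applies and its proof is constructive.

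The core of the algorithm mirrors the proof's top-level dichotomy. If $\multipl_{odd} \le \multipl_1$ then $\beta = \multipl_1$ and one invokes the core bi-cactus routine from the corollary following \Cref{thm:corebicactus}, which already yields a realization in which all cycles but at most one have length $4$. Otherwise one repeatedly applies the reduction steps: while $\multipl_1 > 0$, attach a degree-$1$ vertex to an odd vertex of degree $\ge 3$ by a single bridge edge (using the two-bridge correction exactly when the residual sequence collapses to an all-even state with $m'$ odd); once $\multipl_1 = 0$ but $\multipl_{odd} > 0$, build a $4$-cycle on the smallest odd vertex together with three degree-$2$ vertices, repeating until that vertex's degree drops to $1$. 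When the volume reaches $2n$ we finish with a bi-unicyclic graph, and when all remaining degrees are even we finish with a bridge-less bi-cactus (corollary following \Cref{thm:bipartite-bridgeless}). The crucial observation for the cycle-length claim is that every cycle created \emph{during} the reduction is a $4$-cycle, whereas every bridge-adding step (including the correction) creates no cycle at all; hence the only possible non-$4$-cycle is the single one permitted by the base-case subroutine.

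For the running time I would maintain the vertices in buckets indexed by current degree, together with a pointer to the smallest nonempty odd bucket of value $\ge 3$ and lists of the degree-$1$ and degree-$2$ vertices. Each reduction step then locates the required vertices and updates the buckets in amortized $O(1)$ time, and since each step spends at least one edge while the total number of edges is $m = O(n)$ for a cactus, the number of steps is $O(n)$; the three base-case subroutines are themselves linear by their respective corollaries. The main obstacle I anticipate is bookkeeping rather than mathematical: one must argue that the smallest-odd pointer advances cheaply (a vertex only ever drops to a \emph{smaller} odd value, and each such drop is charged to an added edge) so that the repeated searches cost $O(n)$ in total, and one must confirm that the correction step, the one place where an \emph{extra} bridge is spent, neither injects a second cycle nor violates bipartiteness.
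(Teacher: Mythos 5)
Your proposal is correct and takes essentially the paper's own route: the paper justifies this corollary simply by pointing to the algorithm implicit in the constructive proof of \Cref{thm:bicactus} (with the core bi-cactus, bridge-less bi-cactus, bi-unicyclic, and tree constructions as subroutines), which is exactly what you spell out, including the boundary cases $m = n-1$ and $m = n$ that lie below the range of \Cref{thm:bicactus}, the cycle-length bookkeeping, and a linear-time implementation. The one slip is your appeal to \Cref{obs:bicactus-d4} to obtain $d_4 \geq 2$: that observation is a property of an already-existing bi-cactus realization, so invoking it before a realization is constructed is circular; the correct derivation is the arithmetic the paper gives right after \Cref{thm:bicactusbeta} --- if $d_4 = 1$ then $\beta \geq \multipl_1 \geq n-3$, whence $\floor{\frac{4(n-1)-\beta}{3}} \leq n-1 < m$, contradicting the hypothesis --- a one-line fix that does not affect the rest of your argument.
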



\section{Forcibly Bi-Cactus Graph Realization}
\label{sec:forcibly}

As mentioned in the introduction a characterization for forcibly cactus graphs was given in~\cite{Rao81cactus}.
Furthermore,  a characterization for forcibly bipartite graphs was given in~\cite{BBPR23}.
Hence, forcibly bi-cactus sequences can be identified by obtaining the intersection of the above two  
characterizations.
However, there is a simpler approach. We observe that the characterization of sequences which are
forcibly \emph{connected} bipartite applies to forcibly bi-cactus sequences.

The following result that was presented in~\cite{BBPR23}:

\begin{theorem}[\cite{BBPR23}]
\label{thm:FCB}
A graphic sequence $d$ is forcibly connected bipartite if and only if 
\begin{inparaenum}[(i)]
\item $d = (2^4)$, or
\item $d = (k,h,1^{n-2})$, for $2 \leq h \leq k$ and $h+k = n$.
\end{inparaenum}
\end{theorem}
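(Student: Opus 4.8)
The plan is to prove both directions of the equivalence; the ``if'' direction is short, while the ``only if'' direction carries the real work. Throughout, I would write $p = n - \multipl_1$ for the number of vertices of degree at least $2$, and recall $m = \half \sum_i d_i$.

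For the ``if'' direction, first consider $d = (2^4)$. Any realization is $2$-regular on four vertices, hence a vertex-disjoint union of cycles covering all of them; the only possibility is a single $4$-cycle, which is connected and bipartite. Next consider $d = (k,h,1^{n-2})$ with $2 \le h \le k$ and $h+k = n$. Here $p = 2$, and since a cycle requires at least three vertices of degree at least $2$, no realization can contain a cycle, so every realization is a forest. Because $h+k = n$ gives $\sum_i d_i = 2n - 2$, every realization has exactly $n-1$ edges, and a forest on $n$ vertices with $n-1$ edges is a spanning tree. Thus every realization is a tree, which is connected and, being acyclic, bipartite.

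For the ``only if'' direction I would argue by the value of $p$, using that forcible connectivity implies $m \ge n-1$ and that forcible bipartiteness implies every realization is free of odd cycles. If $p \le 2$, the degree-$1$ vertices can only be incident to the at most two high-degree vertices or matched to one another; a matched pair would form a $K_2$ component, contradicting forcible connectivity, so every leaf attaches to a high-degree vertex. Counting incidences then forces the two high-degree vertices to be mutually adjacent and to absorb all $n-2$ leaves, which pins down $h+k=n$ and the form $(k,h,1^{n-2})$ (the single-star and trivial sequences arising for $p \le 1$ being the degenerate boundary of this family). If $p \ge 3$, the goal is to show that $d$ fails to be forcibly connected bipartite unless $d = (2^4)$, by exhibiting a realization that is disconnected or contains an odd cycle. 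Starting from an arbitrary realization and applying edge swaps (2-switches), the aim is to create a triangle on three vertices of degree at least $2$: this yields a non-bipartite realization when $m \ge n$, and a disconnected one when $m = n-1$, since a cycle together with only $n-1$ edges forces a second component. The rigid all-$2$ sequences $(2^n)$ must be handled separately, as their realizations are vertex-disjoint unions of cycles; one checks directly that for every $n \neq 4$ an odd cycle (or a disconnected union of cycles) can be assembled, leaving $(2^4)$ as the unique survivor.

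The hard part will be the $p \ge 3$ case: one must verify that the swap construction always succeeds in producing a short odd cycle (or a disconnected realization) whenever at least three vertices have degree at least $2$, and that $(2^4)$ is genuinely the only sequence immune to every such construction. Controlling the residual degrees left after extracting a triangle, together with the careful analysis of the rigid sequences $(2^n)$, is where the main effort lies.
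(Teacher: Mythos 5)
The paper does not prove Theorem~\ref{thm:FCB} at all: it is imported verbatim from~\cite{BBPR23}, and the text only remarks that the listed sequences are realized by a $4$-cycle or by two stars with adjacent centers. So your proposal must stand on its own, and it does not: it is a plan rather than a proof. Your ``if'' direction is fine ($(2^4)$ forces $C_4$; for $(k,h,1^{n-2})$ with $h+k=n$ every realization is acyclic with $n-1$ edges, hence a spanning tree). But in the ``only if'' direction, the entire mathematical content is the claim that whenever at least three entries are $\geq 2$ and $d$ is not an all-$2$ sequence, some realization contains a triangle. You state this as the ``aim'' of a 2-switch construction and then explicitly defer its verification (``the hard part will be \dots where the main effort lies''). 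That claim is exactly the nontrivial part of the theorem --- it amounts to the known characterization of potentially $K_3$-graphic sequences (a graphic $d$ with $d_3 \geq 2$ has a realization containing a triangle unless $d = (2^4)$ or $d=(2^5)$), whose proof requires a genuine swap-and-induction argument controlling the residual degrees. Announcing that such an argument should exist is not a proof of it; as submitted, the hard direction is missing.

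There is also a concrete error at the other end of your case analysis. The star sequences $(n-1,1^{n-1})$ (including $(1,1)$, whose unique realization is $K_2$, and $(2,1,1)$, whose unique realization is $P_3$) are forcibly connected bipartite: each has a unique realization, which is a tree. Yet none of them has form (i) or (ii), because (ii) requires $h \geq 2$. Declaring them ``the degenerate boundary of this family'' does not place them in the family; they are honest exceptions to the statement as transcribed here, presumably ruled out by side hypotheses in~\cite{BBPR23} that this paper's citation omits. A correct treatment of your $p \leq 2$ case must either prove these sequences are not forcibly connected bipartite (impossible) or exclude them explicitly, so your proposed closing of that case fails as written.
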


It is not hard to verify that the sequences that appear in \Cref{thm:FCB} are 
either a cycle of four vertices or two stars whose centers are connected
as shown in \Cref{fig:FCB}

\begin{figure}[t]
\centering
\begin{subfigure}[t]{0.35\textwidth}
\centering
\begin{tikzpicture}[scale=0.5]
\begin{scope}[every node/.style={fill=blue,shape=circle},
every edge/.style={draw=black,ultra thick}]

\node(A) at (0,0) {};
\node(B) at (0,2) {};
\node(C) at (2,2) {};
\node(D) at (2,0) {};

\path (A) [-] edge (B);
\path (B) [-] edge (C);
\path (C) [-] edge (D);
\path (D) [-] edge (A);
\end{scope}
\end{tikzpicture}
\caption{Unique realization of $(2^4)$.}
\end{subfigure}
\hspace{20pt}
\begin{subfigure}[t]{0.35\textwidth}
\centering
\begin{tikzpicture}[scale=0.5]
\begin{scope}[every node/.style={fill=blue,shape=circle},
every edge/.style={draw=black,ultra thick}]

\node(A) at (2,1) {};
\node(B) at (0,-0.5) {};
\node(C) at (0,0.5) {};
\node(D) at (0,1.5) {};
\node(E) at (0,2.5) {};
\node(F) at (4,1) {};
\node(G) at (6,0) {};
\node(I) at (6,1) {};
\node(J) at (6,2) {};

\path (A) [-] edge (B);
\path (A) [-] edge (C);
\path (A) [-] edge (D);
\path (A) [-] edge (E);
\path (A) [-] edge (F);
\path (F) [-] edge (G);
\path (F) [-] edge (I);
\path (F) [-] edge (J);
\end{scope}
\end{tikzpicture}
\caption{Unique realization of $(5,4,1^7)$.}
\end{subfigure}
\caption{Depiction of forcibly connected bipartite realizations.}
\label{fig:FCB}
\end{figure}
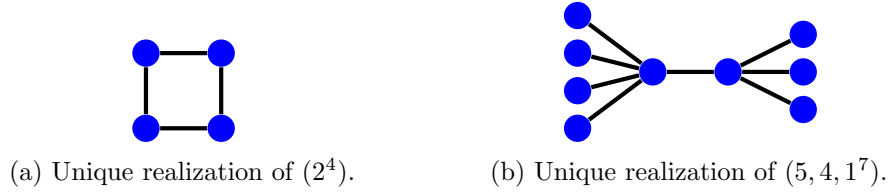

\begin{corollary}
A graphic sequence $d$ is forcibly bi-cactus if and only if 
\begin{inparaenum}[(i)]
\item $d = (2^4)$, or
\item $d = (k,h,1^{n-2})$, for $2 \leq h \leq k$ and $h+k = n$.
\end{inparaenum}
\end{corollary}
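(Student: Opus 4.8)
The statement to prove is the final corollary: a graphic sequence $d$ is forcibly bi-cactus if and only if $d = (2^4)$ or $d = (k,h,1^{n-2})$ for $2 \leq h \leq k$ with $h+k = n$. The plan is to leverage \Cref{thm:FCB} as a black box, so that almost all the work reduces to a single observation about the relationship between the two properties ``forcibly connected bipartite'' and ``forcibly bi-cactus.'' First I would argue that these two classes of sequences coincide, which immediately transfers the characterization of \Cref{thm:FCB} to the corollary.

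The forward direction is the heart of the argument. Suppose $d$ is forcibly bi-cactus; I must show it is forcibly connected bipartite, i.e., every realization is connected and bipartite. Every cactus is connected by definition, and every bi-cactus is bipartite by definition, so \emph{every bi-cactus realization is automatically connected and bipartite}. The subtlety is that ``forcibly bi-cactus'' only controls the bi-cactus realizations, whereas ``forcibly connected bipartite'' must control \emph{all} realizations. Hence I need to observe that if $d$ is forcibly bi-cactus, then in particular $d$ \emph{has} a bi-cactus realization (being graphic, it has some realization, and the forcibly condition requires all of them to be bi-cacti, so at least one bi-cactus realization exists), and moreover every realization whatsoever is a bi-cactus, hence connected and bipartite. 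This is exactly the forcibly connected bipartite condition, so the forward implication holds.

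The reverse direction is the step I expect to require slightly more care, though it is still short. Suppose $d$ is one of the two listed sequences; by \Cref{thm:FCB} it is forcibly connected bipartite, so every realization is connected and bipartite. I must upgrade ``connected'' to ``cactus.'' Here I would invoke the explicit description given in the text immediately after \Cref{thm:FCB}: the realizations are either the $4$-cycle $C_4$ (for $(2^4)$) or two stars with their centers joined by an edge (for $(k,h,1^{n-2})$). I would verify directly that each such graph is a cactus: the $4$-cycle is a single even cycle, hence a bi-cactus; the double-star is a tree (acyclic, connected), and every tree is trivially a cactus and is bipartite. Since \Cref{thm:FCB} guarantees these are the \emph{only} realizations, every realization is a bi-cactus, establishing that $d$ is forcibly bi-cactus.

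The main obstacle is conceptual rather than computational: one must be careful that ``forcibly $\calF$-graphic'' quantifies over \emph{all} realizations, not merely over those lying in $\calF$, so the equivalence of the two forcibly notions hinges on the inclusion of bi-cacti inside connected bipartite graphs in one direction and on the uniqueness/explicit form of realizations in the other. Once this is pinned down, no genuine calculation remains, and the corollary follows essentially verbatim from \Cref{thm:FCB} together with the elementary fact that both $C_4$ and a double-star are bipartite cacti.
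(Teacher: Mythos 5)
Your proposal is correct and takes essentially the same route as the paper: the forward direction transfers ``forcibly bi-cactus'' to ``forcibly connected bipartite'' (every bi-cactus being connected and bipartite) and invokes \Cref{thm:FCB}, while the reverse direction verifies that the only realizations of the listed sequences --- the $4$-cycle and the double star --- are bi-cacti. The paper compresses this into a one-line observation plus the remark after \Cref{thm:FCB}; your write-up merely makes the quantifier bookkeeping explicit.
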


Recall that a characterization of forcibly unicyclic sequences was given in~{\cite{DuanTian25}.
 \Cref{thm:FCB} also implies the following result.

\begin{corollary}
A graphic sequence $d$ is forcibly bipartite unicyclic if and only if $d = (2^4)$. 
\end{corollary}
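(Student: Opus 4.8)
The plan is to derive the corollary from \Cref{thm:FCB} by exploiting the fact that every bipartite unicyclic graph is in particular connected and bipartite, so the class of bipartite unicyclic graphs is contained in the class of connected bipartite graphs. First I would dispatch the easy direction: if $d = (2^4)$, then the only simple graph on four degree-$2$ vertices is the cycle $C_4$ (the volume is $8$, forcing $m=4$ edges, and a $2$-regular simple graph on four vertices must be a single $4$-cycle). Since $C_4$ is connected, bipartite, and contains exactly one (even) cycle, every realization of $d$ is bipartite unicyclic, so $d$ is forcibly bipartite unicyclic.

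For the forward direction, suppose $d$ is forcibly bipartite unicyclic. Because every bipartite unicyclic graph is connected and bipartite, every realization of $d$ is connected bipartite, i.e., $d$ is forcibly connected bipartite. Applying \Cref{thm:FCB}, this restricts $d$ to be either $(2^4)$ or $(k,h,1^{n-2})$ with $2 \le h \le k$ and $h+k = n$. The whole content of the proof is then to eliminate the second family.

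The key observation that rules out the star-pair sequences is a simple edge count. For $d = (k,h,1^{n-2})$ we have $\sum_i d_i = k + h + (n-2) = 2n-2$, so $m = n-1$ in every realization. Combined with forcible connectedness, every realization is a connected graph on $n$ vertices with $n-1$ edges, i.e., a tree. A tree is acyclic and hence not unicyclic (a unicyclic graph requires $m=n$), so such a $d$ has no unicyclic realization whatsoever, contradicting the hypothesis. Therefore only $d = (2^4)$ survives, which completes the characterization.

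The main obstacle here is conceptual rather than computational: one must notice that forcible bipartite unicyclicity implies forcible connected bipartiteness, which is exactly what lets \Cref{thm:FCB} apply, and then use the rigid volume $2n-2$ of the star-pair sequences to separate \emph{tree} from \emph{unicyclic}. Once this reduction is in place, the remaining steps are elementary arithmetic.
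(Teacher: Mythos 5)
Your proposal is correct and follows essentially the same route as the paper, which leaves the argument implicit by stating that \Cref{thm:FCB} implies the corollary: reduce forcible bipartite unicyclicity to forcible connected bipartiteness, apply \Cref{thm:FCB}, and discard the double-star sequences $(k,h,1^{n-2})$ because their volume $2n-2$ forces every realization to be a tree rather than unicyclic. Your write-up simply makes explicit the edge-count elimination and the uniqueness of the $C_4$ realization of $(2^4)$, both of which match the paper's intent.
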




\clearpage

\bibliography{realizations}
\bibliographystyle{abbrv}


\end{document}